    \newtcbox{\feedback}{nobeforeafter,colframe=black,colback=white,boxrule=0.5pt,arc=2pt,
      boxsep=0pt,left=2pt,right=2pt,top=2pt,bottom=2pt,tcbox raise base}
    \declaretheoremstyle[
  qed=$\blacktriangle$, %
  headfont=\normalfont\bfseries,
  notefont=\mdseries, notebraces={(}{)},
  bodyfont=\normalfont,
  postheadspace=\newline
]{qedstyle}
    \declaretheorem[style=qedstyle, name=Remark]{rem}
    \declaretheorem[style=qedstyle, name=Example]{example}
    \newtheorem{asm}{Assumption}
    \newtheorem{prop}{Proposition}[section]
    \newtheorem{lem}{Lemma}[section]
    \newtheorem{cor}{Corollary}[section]
    \theoremstyle{definition}
\crefname{thm}{theorem}{theorems}
\Crefname{thm}{Theorem}{Theorems}
\crefname{asm}{assumption}{assumptions}
\Crefname{asm}{Assumption}{Assumptions}
\crefname{con}{conjecture}{conjectures}
\Crefname{con}{Conjecture}{Conjectures}
\crefname{prop}{proposition}{propositions}
\Crefname{prop}{Proposition}{Propositions}
\crefname{lem}{lemma}{lemmas}
\Crefname{lem}{Lemma}{Lemmas}
\crefname{cor}{corollary}{corollaries}
\Crefname{cor}{Corollary}{Corollaries}
\crefname{rem}{remark}{remarks}
\Crefname{rem}{Remark}{Remarks}
\crefname{defn}{definition}{definitions}
\Crefname{defn}{Definition}{Definitions}
\crefname{example}{example}{examples}
\Crefname{example}{Example}{Examples}
\crefname{appendixfigure}{Appendix Figure}{Appendix Figures}
\crefname{appendixtable}{Appendix Table}{Appendix Tables}
\newcolumntype{L}[1]{>{\raggedright\let\newline\\\arraybackslash}m{#1}}
\newcolumntype{C}[1]{>{\centering\let\newline\\\arraybackslash\hspace{0pt}}m{#1}}
\newcolumntype{R}[1]{>{\raggedleft\let\newline\\\arraybackslash\hspace{0pt}}m{#1}}
\newlength\ubwidth
\newcommand\numberthis{\addtocounter{equation}{1}\tag{\theequation}}
\pgfplotsset{compat=1.18} 
\renewcommand\paragraph{\@startsection
  {paragraph}{4}{\z@}%
  {1.75ex \@plus .2ex \@minus .2ex}%
  {-1em}%
  {\normalfont\normalsize\bfseries}}
	\newcommand{\reals}{\mathbb{R}}
\newcommand{\indep}{\mathrel{\perp\mspace{-10mu}\perp}}
\newcommand\fnsep{\textsuperscript{,}}
\title{Testing Mechanisms\thanks{We are grateful to Alberto Abadie, Don Andrews, Isaiah Andrews, Clément de Chaisemartin,
    Kevin Chen, Xavier D'Haultfoeuille, John Friedman, Martin Huber, Peter Hull, Toru Kitagawa, Pat Kline, Simon Lee, Caleb Miles, Ismael Mourifié, Ben Roth, Pedro Sant'Anna, Yuya Sasaki, Jesse
    Shapiro, Zhenting Sun, Chris Walters, Yuanyuan Wan, and participants at numerous seminars and conferences for helpful comments and
    suggestions. We thank Chen Cheng, Scott Lu, Jin Niu, and Eddie Wu for excellent research
    assistance. Roth gratefully acknowledges funding from the NIH under grant NIGMS 1R35GM155224 and from the Alfred P. Sloan Foundation.}}
\author{Soonwoo Kwon\thanks{Brown
    University. \href{mailto:soonwoo\_kwon@brown.edu}{soonwoo\_kwon@brown.edu}}
  \and Jonathan Roth\thanks{Brown
    University. \href{mailto:jonathanroth@brown.edu}{jonathanroth@brown.edu}}}
\definecolor{cbSkyBlue}{HTML}{56B4E9}
\definecolor{cbOrange}{HTML}{E69F00}
\begin{document}
\maketitle

\begin{abstract}
Economists are often interested in the mechanisms by which a treatment affects an outcome. We develop tests for the ``sharp null of full mediation'' that a treatment $D$ affects an outcome $Y$ only through a particular mechanism (or set of mechanisms) $M$. Our approach exploits connections between mediation analysis and the econometric literature on testing instrument validity. We also provide tools for quantifying the magnitude of alternative mechanisms when the sharp null is rejected: we derive sharp lower bounds on the fraction of individuals whose outcome is affected by the treatment despite having the same value of $M$ under both treatments (``always-takers''), as well as sharp bounds on the average effect of the treatment for such always-takers. An advantage of our approach relative to existing tools for mediation analysis is that it does not require stringent assumptions about how $M$ is assigned. We illustrate our methodology in two empirical applications.

\end{abstract} \vspace{2cm}
\newpage

\section{Introduction}

Social scientists are often able to identify the causal effect of a treatment $D$ on some outcome of interest $Y$, either by explicitly randomizing $D$ or using some ``quasi-experimental'' variation in $D$. Once the causal effect of $D$ on $Y$ is established, a natural question is \emph{why} does it work, i.e. what are the \emph{mechanisms} by which $D$ affects $Y$?

To fix ideas, consider the setting of \citet{bursztyn_misperceived_2020}, which will be one of our empirical applications below. The authors show that the vast majority of men in Saudi Arabia underestimate how open other men are to women working outside of the home. They then run an experiment in which some men are randomized to receive information about other men's beliefs. At the end of the experiment, all of the men are given the choice between signing their wives up for a job-search service or taking a gift card. The authors observe that the information treatment increases the probability that men sign their wives up for the job-search service, and also increases the probability that their wives apply for and interview for jobs over the subsequent five months. A natural question in interpreting these results is then whether the increase in longer-run outcomes (e.g. job applications) is explained by the short-run sign-up for the job-search service, or whether the treatment also affects labor market outcomes through other longer-run changes in behaviors. 

The literature on mediation analysis (see \citet{huber_review_2019} for a review) provides formal methodology for disentangling how much of the average effect of a treatment $D$ (e.g. information about others' beliefs) on an outcome $Y$ (e.g. job applications) is explained by the indirect effect through some potential mediator $M$ (e.g. job-search service sign-up). A challenge, however, is that even if the treatment $D$ is randomly assigned, it will often be the case that the mediator of interest $M$ is not randomly assigned.\footnote{One exception is ``mechanism experiments'' \citep{ludwig_mechanism_2011}, where the researcher explicitly randomizes an $M$ of interest. Our focus is on the common setting where $M$ was not randomized (e.g. due to lack of foresight, budget, or feasibility of randomization) and thus potentially endogenous.} Existing approaches typically make strong assumptions that allow for the identification of the causal effect of $M$ on $Y$ (see Related Literature below). A common assumption in the biostatistics literature, for example, is that $M$ is as good as randomly assigned given $D$ and some observable characteristics. This assumption will often be restrictive in applications---for example, we may worry that sign-up for the job-search service is correlated with unobservables related to women's labor supply. 

In this paper, we develop methodology that sheds light on mechanisms without having to impose strong assumptions to identify the effect of $M$ on $Y$. We make progress by considering an easier question than what is typically studied in the literature on mediation analysis, but one that we think will still be informative in many applications. Rather than trying to identify how much of the average effect is explained by the indirect effect through $M$, we start by testing what we refer to as the \emph{sharp null of full mediation}: is the effect of $D$ on $Y$ fully explained through its effect on $M$? In our motivating application, the sharp null asks whether the effect of treatment on job applications is fully explained by the short-run take-up of the job-search service. More precisely, letting $Y(d,m)$ be the potential outcome as a function of treatment $d$ and mediator $m$, the sharp null posits that $Y(d,m)$ depends only on $m$. If we can reject this null in our motivating example, then we can conclude that the treatment affects long-run outcomes through some change in behavior other than job-search service sign-up. In addition to testing the sharp null, our approach also provides useful information about the extent to which the null is violated. In particular, we develop lower bounds on the fraction of individuals whose outcome is affected by the treatment despite having the same value of $M$ under both treatments. In our motivating example, this means we can lower bound the fraction of women whose labor market outcomes are affected by the information treatment despite the treatment having no effect on whether they sign up for the job service. 

Our main theoretical results impose two assumptions. First, we suppose that the treatment $D$ is as good as randomly assigned, i.e. $D$ is independent of the potential outcomes $Y(\cdot,\cdot)$ and potential mediators $M(\cdot)$. In our motivating example, this is guaranteed by design since $D$ is randomly assigned. (We consider extensions to ``quasi-experimental'' settings in \Cref{sec: nonexperimental}.) Second, we allow the researcher to impose restrictions on how the mediator $M$ responds to treatment. A leading example is the monotonicity assumption that the potential mediator $M(d)$ is increasing in $d$. In our motivating application, this imposes that providing men with information that other men are \emph{more} open to women working outside of the home can only \emph{increase} whether they sign up for the job-search service (in our main analysis, we restrict attention to the majority of men who initially underestimate others' openness, so the information plausibly updates beliefs in a common direction). We first consider the setting where monotonicity holds, and then introduce a more general framework that allows the researcher to impose arbitrary restrictions on the distribution of $(M(0),M(1))$, which nests monotonicity and relaxations thereof as special cases. 

A key observation is that under the sharp null of full mediation and the independence and monotonicity assumptions just described, the treatment $D$ is a valid instrumental variable for the local average treatment effect (LATE) of $M$ on $Y$. In the case of binary $D$ and binary $M$, the LATE assumptions are known to have testable implications \citep{balke_bounds_1997,kitagawa_test_2015, huber_testing_2015,mourifie_testing_2017}. Existing tools for testing the LATE assumptions can thus be used ``off-the-shelf'' for testing the sharp null of full mediation when $D$ and $M$ are binary, as we describe in more detail in \Cref{sec: binary case}. In our motivating example, the testable implications of the sharp null appear to be violated (significant at the 5\% level), and thus we can conclude that the effect of the information treatment does not operate entirely through job-search service sign-up. 

While existing tools can be used to test the sharp null in the case of a binary mediator $M$ and a monotonicity assumption, several questions remain. First, we may be interested in testing that the treatment effect is explained by a non-binary $M$, or by a set of mechanisms---can the approach above be applied when $M$ is non-binary and potentially multi-dimensional? Second, in some applications we may be concerned about violations of the monotonicity assumption---can one test the sharp null of full mediation under relaxations of this assumption? Third, if we reject the sharp null then we know that mechanisms other than $M$ must matter, but how large is the contribution of the alternative mechanisms? 

In \Cref{sec: general theory}, we develop a general framework that enables us
to tackle all of these questions. We allow the mediator $M$ to take on multiple
values and to have multiple dimensions, so long as it has finite support
$\{m_{0}, \dots, m_{K-1}\}$. We
also allow the researcher to place arbitrary restrictions on $\theta_{lk} =
P(M(0) =m_{l}, M(1)=m_{k})$, the fraction of individuals with $M(0) =m_{l}$ and $M(1) = m_{k}$. The monotonicity assumption in the case with scalar $M$ then corresponds to the special case where one imposes that $\theta_{lk} =0$ if $m_l > m_k$. Our framework allows the researcher to impose weaker versions of this requirement---e.g. by allowing for up to $\bar{d}$ share of the population to be defiers---or to completely eliminate the monotonicity requirement altogether. Our framework also allows for various extensions of monotonicity to the setting with multi-dimensional $M$---e.g. a partial monotonicity assumption that imposes that each dimension of $M$ is increasing in $d$.

We derive testable implications of the sharp null of full mediation in this
general setting. These testable implications imply that for any set $A$ and any value of the mediator $m_k$, the treatment effect on the compound outcome $\tilde{Y} = 1\{Y \in A, M=m_k \}$ should be no larger than the number of ``compliers'' with $M(0) = m_{l}$ and
$M(1)=m_{k}$ for some $l \neq k $. The intuition for this is that under the sharp null, there should be no effect of the treatment on the outcome for ``always-takers'' with $M(1) = M(0)$. It follows that the treatment effect on $\tilde{Y}$ can only be driven by compliers, and thus the treatment effect on $\tilde{Y}$ must be weakly smaller than the number of compliers. When $M$ is non-binary, a complication arises because the vector of shares of always-takers and compliers, denoted by $\theta$, is only partially-identified. The testable implication is therefore that there exists \emph{some} shares $\tilde\theta$ consistent with the observable data such that the inequalities described above are satisfied. Since the identified set for $\theta$ is characterized by linear inequalities, it is simple to verify whether such a $\tilde\theta$ exists by solving a linear program; we also show that the solution to the linear program has a closed-form solution under monotonicity. We further show that these testable implications are sharp in the sense that they exhaust all of the testable information in the data: if they are satisfied, there exists a distribution of potential outcomes (and potential mediators) consistent with the observable data such that the sharp null holds. 

We also provide lower bounds on the extent to which the sharp null is violated. In particular, our results imply lower bounds on the fraction of the individuals who have $M=m_{k}$ under both treatments (the ``$k$-always-takers'') who are nevertheless affected by the treatment, $\nu_{k} = P(Y(1,m_{k}) \neq Y(0,m_{k}) \mid M(1) = M(0) = m_{k})$. The lower bounds on the $\nu_k$ are informative about the prevalence of alternative mechanisms: if the lower bound on $\nu_{k}$ is large, then alternative mechanisms matter for a high fraction of $k$-always-takers. In \Cref{subsec: bounds on ADE}, we also derive bounds on the average direct effect for $k$-always-takers, $ADE_{k} = E[Y(1,m_{k}) - Y(0,m_{k}) \mid M(1) = M(0) = m_{k}]$.

In \Cref{sec: inference}, we show how one can conduct inference on the sharp null of full mediation, exploiting results from the literature on moment inequalities \citep{andrews_inference_2023, cox_simple_2022, fang_inference_2023}. In Monte Carlo simulations calibrated to our applications, we find good performance for the approach of \citet{cox_simple_2022}, and thus recommend it in practice. Although for simplicity our main theoretical results focus on the case where $D$ is randomized, in \Cref{sec: nonexperimental} we show that our results extend to other non-experimental settings, including settings with instrumental variables, conditional unconfoundedness, and distributional difference-in-differences. 

\Copy{pgraph:use-cases}{We anticipate that our results will have several potential use-cases in applications, as highlighted by our empirical examples in \Cref{sec: applications}. First, in many settings, there is an obvious mechanism by which the treatment would be expected to affect the outcome---often referred to as a ``mechanical effect''---and it is of economic interest to know whether there are other mechanisms at play. Our motivating example of \citet{bursztyn_misperceived_2020} is one such case, where there is the mechanical effect of the information on job applications through the job-search service, and we are interested in whether the information treatment also has an effect on other behavior outside of the lab. Our tests of the sharp null directly address whether the effect of the treatment is driven entirely by the mechanical effect: in \citet{bursztyn_misperceived_2020}, we reject that the impact of the information treatment on job applications is driven entirely by the mechanical effect of job-search service sign-up. This conclusion is of economic interest, since it suggests that an information treatment that was not tied to a job-search service would also have some effect on labor market outcomes. Our results also help us to quantify the magnitude of the alternative mechanisms: our lower bounds suggest that at least $11$
\unskip percent of ``never-takers'' who would not enroll in the job-search service regardless of treatment status would nevertheless be induced to apply for jobs by receiving the information treatment (compared with an overall ATE of $0.12$
\unskip).

In other settings, there may not be a focal ``mechanical effect'', but the researcher may observe that the treatment affects a particular $M$ (or set of $M$'s), and conjecture that this mediator explains the treatment effect. Our tests of the sharp null, along with accompanying lower bounds on $\nu_k$ and $ADE_k$, help to quantify the completeness of such conjectures. This is illustrated in our second application to \citet{baranov_maternal_2020}, who find that cognitive behavioral therapy for new mothers has an impact on women's economic outcomes. They conjecture that this effect may operate through increased presence of a grandmother in the home and improved relationship quality with the husband. Our results help us to quantify the completeness of these conjectures. Our tests reject the null hypothesis that either of these mechanisms on its own fully explains the treatment, with our lower bounds suggesting that at least 10\% of always-takers are affected by the treatment for each mediator. On the other hand, we cannot statistically reject that the two mechanisms together explain the treatment effect. This, of course, does not imply that these are the only two mechanisms, but rather that the data is statistically consistent with the hypothesis that the combination of these mechanisms explains the effect.\footnote{The literature on using short-run surrogates for long-run outcomes often justifies the statistical surrogacy assumption (in part) by arguing that the treatment affects the long-run outcome only through the short-run outcome \citep[e.g.][]{athey_estimating_2024}. In settings where both short- and long-run outcomes are available, our tests of the sharp null may also be useful for assessing the plausibility of these arguments.}}

We have developed the \href{https://github.com/jonathandroth/TestMechs}{\texttt{TestMechs}} R package to facilitate implementation of the methods in this paper.

\paragraph{Related literature.} Our work relates to a large literature on mediation analysis. We briefly overview a few relevant strands of the literature, with a non-exhaustive list of citations, and refer the reader to \citet{vanderweele_mediation_2016} and \citet{huber_review_2019} for more comprehensive reviews. Much of the mediation analysis literature focuses on identification of average direct effects and indirect effects \citep[e.g.][]{robins_identifiability_1992,pearl_direct_2001}. A key challenge is that even if the treatment $D$ is randomized, it is typically the case that the mediator $M$ is not, and thus it is difficult to identify the effect of $M$ on $Y$ (conditional on $D$). Various strands of the literature have identified the effect of $M$ on $Y$ by assuming conditional unconfoundedness for $M$ \citep[e.g.][]{imai_identification_2010}, using an instrument for $M$ \citep[e.g.][]{frolich_direct_2017}, or adopting difference-in-differences strategies \citep[e.g.][]{deuchert_direct_2019, schenk_mediation_2023}. In contrast, we focus on learning about mechanisms without imposing assumptions that identify the effect of $M$ on $Y$. The question we try to answer is different from most of the existing literature, however: rather than focus on average direct and indirect effects, we start by testing the \emph{sharp null} that the effect of $D$ on $Y$ is fully explained by a particular mechanism (or set of mechanisms) $M$.\footnote{\citet{miles_causal_2023} also considers a sharp null. However, his sharp null is that either $Y(d,m)$ depends only on $d$ \emph{or} $M(d)$ does not depend on $d$, whereas we consider the sharp null that $Y(d,m)$ depends only on $m$. His focus is also different: rather than testing this sharp null, he considers which measures of the indirect effect are zero when his sharp null is satisfied.} We further provide lower-bounds on the extent to which $M$ does not fully explain the effect of $D$ on $Y$ by lower-bounding the treatment effects for always-takers who have the same value of $M$ regardless of treatment status. We view our work as complementary to much of the literature on mediation analysis, as we impose different assumptions but also address different questions.

A key observation in our paper is that under the sharp null of full mediation, $D$ is an instrument for the effect of $M$ on $Y$. Thus, in the setting where $M$ is binary, existing tools for testing instrument validity with binary endogenous treatment can be used ``off-the-shelf'' to test the sharp null, both with monotonicity \citep{kitagawa_test_2015, huber_testing_2015,mourifie_testing_2017} and without monotonicity \citep{balke_bounds_1997,wang_falsification_2017, kedagni_generalized_2020}.\footnote{\citet{wang_falsification_2017} consider tests of instrument validity when instrument $Z$, treatment $D$, and outcome $Y$ are all binary, and one does not impose monotonicity. They observe that the testable implications imply lower bounds on the average controlled direct effect (ACDE) of $Z$ on $Y$. Although their focus is testing instrument validity, they note in the conclusion that such lower bounds might also be used for ``explaining causal mechanisms'' in experiments. This observation is thus a precursor to the connections between tests for instrument validity and testing mechanisms derived in the more general setting in our paper.} One of the key technical contributions of our paper is to derive sharp testable implications of the sharp null in the setting where $M$ is potentially multi-valued or multi-dimensional, and where one places arbitrary restrictions on the type shares (e.g. monotonicity or relaxations thereof). Based on the equivalence between testing the sharp null and testing instrument validity described above, our results immediately imply sharp testable implications for settings with a binary instrument and multi-valued treatment, which may be of independent interest. Our testable implications build on the work of \citet{sun_instrument_2023}, who derived non-sharp testable implications of instrument validity with multi-valued treatments under monotonicity.\footnote{Another related paper is \citet{kedagni_generalized_2020}, who derive testable implications of instrument validity with potentially multi-valued treatments, without monotonicity. Their testable implications assume a weaker notion of independence, however, which when mapped to our context would imply that $D$ is independent of $Y(\cdot,\cdot)$ but not $M(\cdot)$. Under this weaker notion of independence, their testable implications are sharp in the special case of binary treatment and outcome, but may not be sharp otherwise.}   

Our paper also relates to the literature on principal stratification \citep[]{frangakis_principal_2002,zhang_estimation_2003,lee_training_2009,flores_nonparametric_2010}. In particular, note that the sub-population of $k$-always-takers corresponds to the so-called \emph{principal stratum} with $M(1)=M(0)=m_{k}$. Our bounds on $ADE_k$, the average direct effect for $k$-always-takers, match those in the aforementioned papers in the special case where $M$ is binary and one imposes monotonicity. Our bounds on $ADE_k$ extend the existing results to cover settings with non-binary $M$ and/or relaxations of monotonicity. Our primary focus, however, is on testing the sharp null of full mediation, which implies that the \emph{fraction} of always-takers affected should be zero (a Fisherian sharp null), which is stronger than the weak null of a zero average effect studied in the literature on principal stratification.

Finally, we note that in empirical economics, mechanisms are often studied more informally, rather than using the formal tools for mediation discussed above. One common approach is to show the effects of $D$ on a variety of intermediate outcomes, and to conjecture that a particular intermediate outcome $M$ may be an important mechanism if $D$ has an effect on $M$ (see our application to \citealp{baranov_maternal_2020} below for an example). The tools developed in this paper give formal methodology for testing the completeness of these conjectures: is the data consistent with the hypothesized $M$ fully explaining the treatment effect, and if not, how important are alternative mechanisms? A second common approach for evaluating mechanisms is heterogeneity analysis: is the treatment effect on $Y$ larger in observable subgroups of the population for which the effect of $D$ on $M$ is larger? Although heterogeneity is often analyzed informally, this approach is sometimes formalized with an over-identification test that evaluates the null that, across subgroups defined by covariate cells, the conditional average treatment effect of $D$ on $Y$ is linear in the conditional average treatment effect of $D$ on $M$ \citep[e.g.][]{angrist_choice_2023, angrist_instrumental_2023}. This approach provides a valid test of our sharp null under the additional assumption that the effect of $M$ on $Y$ for compliers is constant across sub-groups. By contrast, we derive testable implications of the sharp null that do not assume constant effects and do not require the presence of covariates.\footnote{Moreover, our results indicate that the typical over-identification test does not exploit all the information in the data even under the assumption of constant effects: not only can one test the relationship of the average effects across covariate cells, but under the sharp null the restrictions that we derive should also hold \emph{within} covariate cells.}

\paragraph{Set-up and notation.} Let $Y$ denote a scalar outcome, $D$ a binary treatment, and $M \in \reals^p$ a $p$-dimensional vector of mediators with $K$ support points, $m_0,...,m_{K-1}$. We denote by $Y(d,m)$ the potential outcome under treatment $d$ and mediator $m$. Likewise, $M(d)$ denotes the potential mediator under treatment $d$. The researcher observes $(Y,M,D) = (Y(D,M(D)), M(D), D) \sim P_{obs}$. We use $P$ to denote the joint probability measure over observed variables and potential outcomes and mediators.

\section{Special Case: Binary Mediator} \label{sec: binary case}

We first consider the special case with a binary mediator $M$, which helps us to develop intuition and illustrate connections to the existing literature on testing instrument validity. In the notation just introduced, this corresponds to $K=2$, with $m_0 =0$ and $m_1=1$, so that $M \in \{0,1\}$.

To fix ideas,
consider the setting of \citet{bursztyn_misperceived_2020}. The authors conduct
a randomized controlled trial (RCT) in Saudi Arabia focused on women's economic
outcomes. Their analysis is motivated by the descriptive fact that at baseline in their experiment, the vast majority of men in Saudi Arabia under-estimate how open other men are to allowing women to work outside the home. After eliciting beliefs, they randomly assign a treated group of men to receive information about the other men's opinions. At the
end of the experiment, both treated and untreated men choose between signing their wives up
for a job-search service or taking a gift
card. \citet{bursztyn_misperceived_2020} find that the treatment has a positive
effect on enrollment in the job-search service and on longer-run economic outcomes for
women, such as applying and interviewing for jobs.

An important question in interpreting these results is whether the treatment increased long-run labor market outcomes solely by increasing take-up of the job-search service, or whether the information led men to change behavior in other ways. This question is important for understanding what might happen if one were to provide men with information about others' beliefs without offering the opportunity to sign up for the job-search service. \citet{bursztyn_misperceived_2020} write (p. 3017):
\begin{quote}
    It is difficult to separate the extent to which the longer-term effects are driven by the higher rate of access to the job service versus a persistent change in perceptions of the stigma associated with women working outside the home. 
\end{quote}

\noindent The authors provide some indirect evidence that the effects may not operate entirely through the job-search service---for example, there are effects on men's opinions in a follow-up survey---but they cannot directly link these long-run changes in opinions to economic outcomes. In what follows, we will show that in fact there is information in the data that is directly informative about the question of whether the effects on long-run labor market outcomes are driven solely by the job-search service. 

For notation, let $D$ be a binary indicator for receiving the information treatment, $M$ a
binary variable indicating job-search service sign-up, and $Y$ a binary variable
indicating applying for jobs three to five months after the experiment (i.e., a
longer-term labor supply outcome). We let $Y(d,m)$ denote whether a woman would apply for jobs as a function of treatment status $d$ and job-search service sign-up $m$, and let $M(d)$ denote job-search service sign-up as a function of treatment status. Since treatment is randomly assigned, it is reasonable to assume that it is independent of the potential outcomes and mediators, i.e. $D \indep (Y(\cdot,\cdot),M(\cdot))$. For our analysis in this section, we will also impose the monotonicity assumption that receiving the information treatment weakly increases job-search service sign-up, so that $M(1) \geq M(0)$ (almost surely). To make this assumption reasonable, we restrict our analysis to the majority of men who prior to the experiment under-estimate other men's openness, so that all men are provided with information that other men are \emph{more} open than they initially expected, which we expect will increase job-search service sign-up. In the subsequent sections, we will show how this monotonicity assumption can be relaxed, but imposing it will make it easier to highlight the connections to instrumental variables. 

We now formalize the null hypothesis that the information treatment only affects long-run outcomes through its effect on job-search service sign-up. In particular, we say that the \textit{sharp null of full mediation} is satisfied if
\begin{equation}
  Y(0,m) = Y(1,m) \equiv Y(m) \text{ almost surely, for all } m \in \{0,1\},
\end{equation}
i.e. the treatment impacts the outcome only through its impact on $M$. If
the sharp null holds, signing up for the job-search service is the only mechanism
that matters for long-run job applications. On the other hand, if we reject the
sharp null, there is evidence that other mechanisms play a role for at least some people---i.e., there is some impact of changes in beliefs on long-run outcomes that does not operate purely through sign-up for the job-search service at the end of the experiment. 

Our first main observation is that if the sharp null holds (together with our assumptions of independence and monotonicity), then $D$ is a valid instrument for the LATE of $M$ on $Y$. This implies that testing the sharp null in this setting
is equivalent to testing the validity of the LATE assumptions when both the
treatment and instrument are binary. However, prior work has shown that in settings with a binary instrument and treatment, the LATE assumptions have testable implications \citep{kitagawa_test_2015,
  huber_testing_2015, mourifie_testing_2017}, and thus such tools can be used to test the sharp null.\footnote{More precisely, these tests are joint tests of the sharp null along with the independence and monotonicity assumptions. However, if we maintain that the latter two hold, then any violations must be due to violations of the sharp null. We explore relaxations of the monotonicity assumption in subsequent sections.} Applying the results in \cite{kitagawa_test_2015}, with $M$
playing the role of treatment and $D$ the role of instrument, we obtain the
following sharp testable implications:
\begin{align}\label{eq:bin_sharp_imp}
  \begin{aligned}
    P(Y \in A, M=0 \mid D=0) & \geq P(Y \in A, M=0 \mid D=1) \text{ and} \\
    P(Y \in A, M=1 \mid D=1) & \geq P(Y \in A, M=1 \mid D=0),
  \end{aligned}
\end{align}
for all Borel sets $A$.

To gain intuition for these testable restrictions, observe that under our monotonicity assumption we can divide the population into ``always-takers'' who enroll in the job-search service regardless of treatment ($M(0)=M(1)=1$), ``never-takers'' who do not enroll regardless of treatment ($M(0)=M(1)=0$), and ``compliers'' who enroll only if treated ($M(0)=0,M(1)=1$). Now, consider the compound outcome $\tilde{Y} = 1\{Y \in A, M=0\}$. For example, if $A= \{1\}$, then in our running example $\tilde{Y}$ is an indicator for the joint event of applying for a job and not signing up for the job-search service. Note that always-takers have $M=1$ under both treatments, and thus always have $\tilde{Y}=0$ regardless of treatment status. Next, observe that if the sharp null holds, then never-takers must also have the same value of $\tilde{Y}$ under both treatments: by definition, they have $M=0$ under both treatments, and hence under the sharp null that $D$ affects $Y$ only through $M$, they also have the same value of $Y$ under both treatments. Since $\tilde{Y}$ is just a compound outcome involving $Y$ and $M$, it follows that they have the same value of $\tilde{Y}$ under both treatments. Observe, further, that the treatment effect of $D$ on $\tilde{Y}$ for compliers must be weakly negative, since compliers have $M=1$ when they are treated, and thus when compliers are treated they have $\tilde{Y} = 0$. It follows that under the sharp null, there must be a weakly negative treatment effect of $D$ on $\tilde{Y}$, and hence
$$P(Y\in A, M=0 \mid D=1) - P(Y \in A, M=0 \mid D=0) \leq 0 ,$$
\noindent which gives the first testable implication in \eqref{eq:bin_sharp_imp}. If this implication is violated in the data, then we can conclude that---in violation of the sharp null---there must be an effect of the treatment for some never-takers. The second testable implication in \eqref{eq:bin_sharp_imp} can be derived analogously using the compound outcome of the form $1\{ Y \in A, M=1 \}$.

\begin{figure}[!ht]
  \includegraphics[width =
  0.6\linewidth]{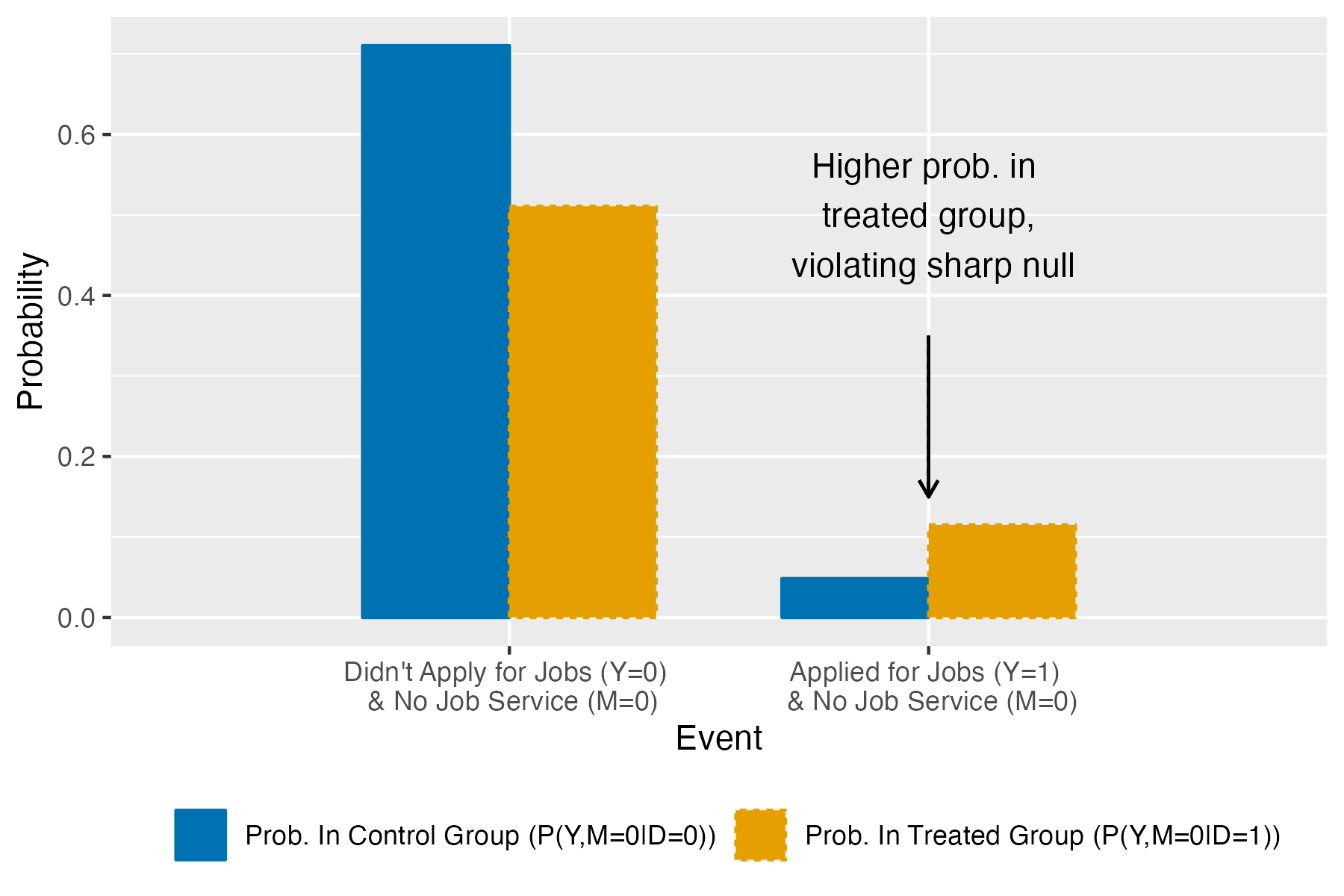}
    \caption{Illustration of Testable Implications in \citet{bursztyn_misperceived_2020}}\label{fig:bursztyn_bin}
    \floatfoot{Note: This figure shows estimates of the probabilities $P(Y=y, M=0 \mid D=d)$ for $d=0,1$ and $y=0,1$ in the application to \citet{bursztyn_misperceived_2020}. For example, $P(Y=1,M=0 \mid D=0)$ is the probability that one both applies for a job \emph{and} does not sign up for the job-search service conditional on being in the control group. Under the sharp null of full mediation, it should be that these probabilities are higher in the control group, i.e. $P(Y=y, M=0 \mid D=0) \geq P(Y=y, M=0 \mid D=1)$ for $y=0,1$. We see, however, that this inequality is violated in the empirical distribution for $y=1$: more women apply for jobs and don't use the job-search service in the treated group, as indicated by the black arrow.}
\end{figure}

The argument above implies that if the sharp null holds in \citet{bursztyn_misperceived_2020}, there should be a negative treatment effect on the compound outcome $1\{Y=1,M=0\}$, i.e. there should be fewer women in the treated group who both apply for jobs and don't use the job service. However, as shown in \Cref{fig:bursztyn_bin}, the empirical
distribution shows that the opposite is true: there are more women who apply for jobs and do not sign up for the job-search service in the treated group ($\hat{P}(Y = 1, M=0 \mid D=1) > \hat{P}(Y = 1, M=0 \mid D=0)$), indicating a
violation of the sharp null. This difference is statistically significant at the 5\% level, as we will describe in more detail in \Cref{sec: applications} after we describe methods for conducting inference. 

The data thus reject the sharp null hypothesis that the impact of the information treatment on job applications operates purely through job-search service sign-up. In particular, the data suggest that some never-takers must have their outcome affected by the treatment. We can thus conclude that there is some impact of changes in beliefs on job applications that does not operate mechanically through signing up for the job-search service. 

The analysis so far shows that tools originally developed for testing the LATE assumptions can be useful for testing hypotheses about mechanisms. However, several questions remain. First, our rejection of the null implies that the treatment affects the outcome through mechanisms other than job-search service sign-up, but how big are these alternative mechanisms? Second, our analysis relied on the monotonicity assumption that treatment increases job-search service sign-up, but what if we would like to relax this assumption? Third, while our motivating example had a binary $M$, in many cases we may be interested in testing that the treatment is explained by a non-binary mechanism, or by the combination of multiple mechanisms. Can the approach be extended to such cases?

In the subsequent section, we develop a general theoretical framework that allows us to address all of these questions. Our framework accommodates mechanisms $M$ that are potentially multi-valued or multi-dimensional, and allows for relaxations of the monotonicity assumption. Further, in addition to deriving testable implications of the sharp null, we also derive lower bounds on the extent to which the alternative mechanisms matter---in particular, we derive bounds on the fraction of always-takers (or never-takers) that are affected by the treatment, as well as the average effect of the treatment for these always-takers.

\section{Theory: General Case \label{sec: general theory}}

We now consider the general case where $M$ is a $p$-dimensional vector with
finite support $\{m_0,...,m_{K-1}\}$. We denote by $G=lk$ the event that $M(0) = m_l$ and $M(1)=m_k$. We refer to individuals with $G=kk$ as the $k$-always-takers, and individuals with $G = lk$ for $l \neq k$ as the $lk$-compliers. (Note that the terms ``always-taker'' and ``complier'' are used somewhat broadly here. For example, a ``never-taker'' in the case where $M$ is binary would be referred to as $0$-always taker, and likewise a defier would be a $10$-complier.) We denote by $\theta_{lk} := P(M(0)=m_l,M(1)=m_k)$ the fraction of the population of type $G=lk$, and let $\theta \in \reals^{K^2}$ be the vector in the $(K^2-1)$-dimensional simplex that collects the $\theta_{lk}$.

Extending the definition from the previous section, we say that the sharp null of full mediation holds if 
$$Y(0,m) = Y(1,m) \equiv Y(m) \text{ almost surely, for all } m \in \{m_0,...,m_{K-1}\}.$$
\noindent We note that if $M$ is multi-dimensional with, say, the first dimension corresponding to mechanism $A$ and the second corresponding to mechanism $B$, then the sharp null imposes that the treatment operates on $Y$ only through its joint effect on mechanisms $A$ and $B$.

For simplicity, we assume in this section that treatment assignment is independent of the potential outcomes and mediators. In \Cref{sec: nonexperimental}, we show how the results extend to other settings, such as instrumental variables, conditional unconfoundedness, and distributional difference-in-differences. 

\begin{asm}[Independence and overlap] \label{asm: independence}
The treatment is independent of the potential outcomes and mediators, $D \indep (Y(\cdot,\cdot), M(\cdot) ) $, with $0 < P(D=1) < 1$.
\end{asm}
For our identification results, we allow for the researcher to place arbitrary restrictions on the shares of each compliance type. 

\begin{asm}[Restrictions on type shares] \label{asm: restricted theta}
  $\theta \in R$ for $R \subseteq \Delta$ a closed non-empty set, where $\Delta$ denotes the $(K^2-1)$-dimensional simplex.
\end{asm}

\noindent We briefly review a few examples of restrictions on $\theta$ (i.e. choices of $R$) that may be natural in some applications. 

\begin{example}[Monotonicity and relaxations thereof] \label{ex: fully ordered M}
First, consider the case where $M$ is fully-ordered, so that $m_0 < m_1 <... < m_{K-1}$. This nests the binary example from the previous section as the special case where $K=2$. Then the monotonicity assumption that $M(1) \geq M(0)$ corresponds to the restriction \begin{equation} R = \{\theta \in \Delta : \theta_{lk} = 0 \text{ if } l>k \} . \label{eqn: delta for monotonicity}\end{equation}  
\noindent One could also weaken this assumption by, for example, allowing for up to $\bar{d}$ fraction of the population to be defiers, which corresponds to setting $R = \big\{\theta \in \Delta : \sum_{l,k: l>k} \theta_{lk} \leq \bar{d} \big\} .$
\end{example}

\begin{example}[Elementwise monotonicity] \label{ex: partial monotonicity}
Suppose that $M$ is a $p$-dimensional vector for $p>1$. It may sometimes be reasonable to impose that each element of $M(d)$ is increasing in $d$. This can be achieved by setting $R = \left\{\theta \in \Delta : \theta_{lk} = 0 \text{ if } m_l \not\preceq m_k \right\} ,$
where $m_l \preceq m_k$ if each element of $m_l$ is less-than-or-equal to the corresponding element of $m_k$.\footnote{Analogous logic could be used to impose that $M(0) \preceq M(1)$ in \emph{any} partial order, not just the elementwise one.} Similar to the previous example, one could also allow for up to $\bar{d}$ fraction of the population to have $M(0) \not\preceq M(1)$.   
\end{example}

\begin{example}[Bounded effect of $D$ on $M$]
In some settings, it may be reasonable to impose that the treatment does not have too large an effect on $M$, at least for most people. This could be formalized by setting
$$R = \big\{ \theta \in \Delta : \sum_{ l,k :||m_l -m_k|| > \kappa } \theta_{lk}  \leq \bar{d} \big\}.$$
\noindent This imposes that at most $\bar{d}$ fraction of the population has $||M(1) - M(0)|| > \kappa$. 
\end{example}

\begin{example}[No restrictions] \label{ex: no restrictions}
If the researcher is not willing to impose any restrictions on compliance types, then they can simply set $R = \Delta$.    
\end{example}

In contrast to the special case in \Cref{sec: binary case}, where the shares of each type were point-identified, in our general framework the vector of type shares $\theta$ may only be partially-identified. For example, if one relaxes the monotonicity imposed in \Cref{sec: binary case}, then analogous to the setting of instrumental variables without monotonicity \citep[e.g.][]{huber_sharp_2017}, the share of defiers $\theta_{10}$ will generically be partially identified.\footnote{As a concrete example, suppose that $P(M=1 \mid D=1) = 0.5$ and $P(M=1 \mid D=0) = 0.3$. Then the data is consistent with there being no defiers (by setting $\theta_{11} = 0.3$, $\theta_{01} = 0.2$, $\theta_{00} = 0.5$, and $\theta_{10} =0$) but it is also consistent with up to 0.3 fraction of the population being defiers (by setting $\theta_{11} = 0$, $\theta_{01} = 0.5$, $\theta_{00} = 0.2$, $\theta_{10} = 0.3$).} Partial identification of $\theta$ can also arise when $M$ is multi-valued even if one imposes monotonicity. To see this, suppose that $M \in \{0,1,2\}$ and the marginal distributions of $M \mid D$ are as given in \Cref{fig:type-shares-illustration}, panel (a). As can be seen in the figure, the treated group has a 0.2 higher probability that $M=2$ and a 0.2 lower probability that $M=0$ relative to the control group. This is consistent with 20\% of the population being $02$-compliers and there being no other complier types (i.e. $\theta_{02} = 0.2$, $\theta_{01}=\theta_{12}=0$), as shown in \Cref{fig:type-shares-illustration}, panel (b). However, it is also consistent with a ``cascade'' in which 20\% of the population is $01$-compliers, and another 20\% of the population is $12$-compliers (i.e. $\theta_{01}=\theta_{12}=0.2$, $\theta_{02}=0$), as shown in \Cref{fig:type-shares-illustration}, panel (c). 
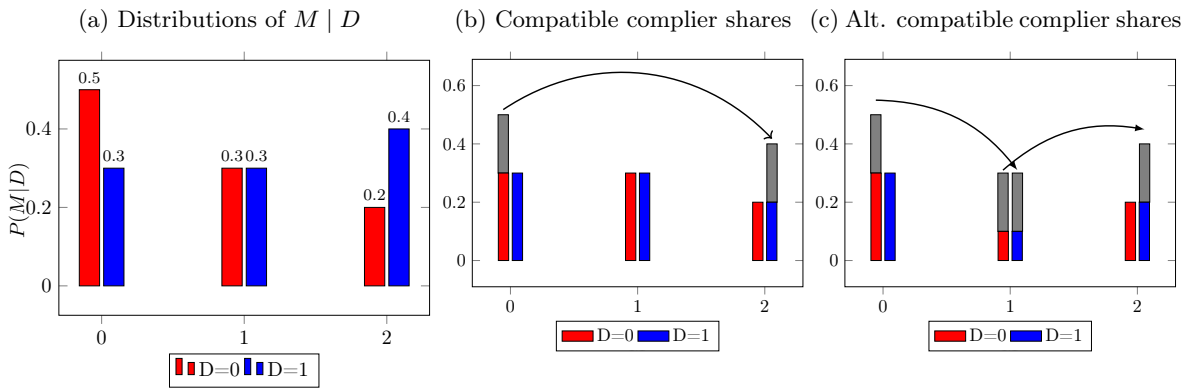
\begin{figure}[!hbt]
    \centering
    \subfloat[Distributions of $M \mid D$]{\resizebox{0.323\linewidth}{!}{\begin{tikzpicture}
\begin{axis}[
    ybar,
    width=8cm,
    height=6cm,
    enlargelimits=0.15,
    legend style={at={(0.5,1.02)},
    anchor=south,legend columns=-1},
    ylabel={$P(M | D)$},
    ylabel style={anchor=west, yshift=0.25cm, xshift=-1cm},
    symbolic x coords={0,1,2},
    xtick=data,
    nodes near coords,
    nodes near coords align={vertical},
    nodes near coords style={font=\footnotesize}, %
    x label style={at={(axis description cs:0.5,-0.2)},anchor=north},
    xlabel={M},
    ymin=0, %
    ymax = 0.6
    ]
\addplot [fill=cbOrange,draw=black] coordinates {
    (0,0.5)
    (1,0.3)
    (2,0.2)
};
\addplot [fill=cbSkyBlue,draw=black] coordinates {
    (0,0.3)
    (1,0.3)
    (2,0.4)
};
\legend{D=0,D=1}
\end{axis}
\end{tikzpicture}}} 
    \subfloat[Compatible complier \\ shares]{\resizebox{0.3\linewidth}{!}{\begin{tikzpicture}
\begin{axis}[
    ybar,
    width=8cm,
    height=6cm,
    enlargelimits=0.15,
    legend style={at={(0.5,1.02)},
    anchor=south,legend columns=-1},
    symbolic x coords={0,1,2},
    xtick=data,
    x label style={at={(axis description cs:0.5,-0.2)},anchor=north},
    xlabel={M},
    ymin=0,
    ymax=0.6,
    bar width=0.2cm,
    area legend
    ]

\draw[fill=gray,draw=black] ([xshift=-0.24cm]axis cs:0,0.3) rectangle ([xshift=-0.04cm]axis cs:0,0.5);

\draw[fill=gray,draw=black] ([xshift=0.04cm]axis cs:2,0.2) rectangle ([xshift=0.24cm]axis cs:2,0.4);

\addplot [fill=cbOrange,draw=black] coordinates {
    (0,0.3)
    (1,0.3)
    (2,0.2)
};

\addplot [fill=cbSkyBlue,draw=black] coordinates {
    (0,0.3)
    (1,0.3)
    (2,0.2)
};

\draw[->,thick] ([xshift=-0.14cm, yshift=0.1cm]axis cs:0,0.5) to[bend left=40] ([xshift=0.14cm, yshift=0.1cm]axis cs:2,0.4);

\legend{D=0,D=1}
\end{axis}
\end{tikzpicture}}} 
    \subfloat[Alternative compatible \\ complier shares]{\resizebox{0.3\linewidth}{!}{\begin{tikzpicture}
\begin{axis}[
    ybar,
    width=8cm,
    height=6cm,
    enlargelimits=0.15,
    legend style={at={(0.5,1.02)},
    anchor=south,legend columns=-1},
    symbolic x coords={0,1,2},
    xtick=data,
    x label style={at={(axis description cs:0.5,-0.2)},anchor=north},
    xlabel={M},
    ymin=0,
    ymax=0.6,
    bar width=0.2cm,
    area legend
    ]

\draw[fill=gray,draw=black] ([xshift=-0.24cm]axis cs:0,0.3) rectangle ([xshift=-0.04cm]axis cs:0,0.5);

\draw[fill=gray,draw=black] ([xshift=-0.24cm]axis cs:1,0.1) rectangle ([xshift=-0.04cm]axis cs:1,0.3);

\draw[fill=gray,draw=black] ([xshift=0.04cm]axis cs:1,0.1) rectangle ([xshift=0.24cm]axis cs:1,0.3);

\draw[fill=gray,draw=black] ([xshift=0.04cm]axis cs:2,0.2) rectangle ([xshift=0.24cm]axis cs:2,0.4);

\addplot [fill=cbOrange,draw=black] coordinates {
    (0,0.3)
    (1,0.1)
    (2,0.2)
};

\addplot [fill=cbSkyBlue,draw=black] coordinates {
    (0,0.3)
    (1,0.1)
    (2,0.2)
};

\draw[-latex, thick] ([xshift=-0.14cm]axis cs:0,0.55) to[bend left=25] ([xshift=0.14cm]axis cs:1,0.31);
\draw[-latex, thick] ([xshift=-0.14cm]axis cs:1,0.31) to[bend left=25] ([xshift=0.14cm]axis cs:2,0.45);

\legend{D=0,D=1}
\end{axis}
\end{tikzpicture}}} 
       \caption{Illustration of partial identification of type shares}
    \label{fig:type-shares-illustration}
\end{figure}

We will denote by $\Theta_I$ the set of possible values for $\theta$ (i.e. joint distributions on $(M(0),M(1))$) that are consistent with the observed distributions of $M \mid D$. Formally, we define the identified set $\Theta_I$ to be the set of values of $\tilde\theta$ such that 
\begin{align*}
  &\sum_l \tilde\theta_{kl} = P(M=m_k \mid D=0) \text{ for } k=0,...,K-1  \hspace{1cm} & \text{(Match marginals for $M \mid D=0$)} \\
  &\sum_l \tilde\theta_{lk} = P(M=m_k \mid D=1) \text{ for } k=0,...,K-1 \hspace{1cm} &
                                                                               \text{(Match
                                                                               marginals
                                                                               for
                                                                               $M \mid D=1$)}
  \\
  & \tilde\theta \in R \hspace{1cm} &
                                \text{(Type Share
                                Restrictions)}.
\end{align*}

\noindent For clarity of notation, we will use $\theta$ for the ``true'' shares and use $\tilde\theta$ to denote a generic element of the identified set $\Theta_I$. It is worth noting that the first two restrictions above are linear in $\tilde\theta$. Thus, if $R$ is characterized by linear restrictions (as is the case in Examples \ref{ex: fully ordered M}-\ref{ex: no restrictions} above), then $\Theta_I$ is characterized by linear constraints, and thus quantities such as $\max_{\tilde\theta \in \Theta_I} \tilde\theta_{kk}$ can be calculated by linear programming. This observation will be useful for practical implementation of the testable implications below, which involve optimizations over $\Theta_I$.

We now derive lower-bounds on the fraction of always-takers whose outcome is affected by the treatment despite having the same value of $M$ under both treatments.\footnote{In \Cref{subsec: bounds on ADE}, we derive bounds on the average effect of the treatment for the $k$-always-takers.} These lower bounds lead naturally to tests of the sharp null of full mediation, under which the fraction of always-takers affected should be zero. To be more precise, we define
$$\nu_k := P(Y(1,m_k) \neq Y(0,m_k) \mid G =kk)$$ 
\noindent to be the fraction of $k$-always-takers whose outcome is affected by the treatment despite always having $M=m_k$ under both treatments. The $\nu_k$ are a measure of the strength of mechanisms other than $M$: they tell us what fraction of the $k$-always-takers has a direct effect of the treatment. Under the sharp null of full mediation, $Y(1,m_k) = Y(0,m_k)$ with probability 1, and thus $\nu_k =0$ for all $k$. By contrast, if $\nu_k$ is close to 1 for a particular $k$, then alternative mechanisms other than $M$ matter for nearly all $k$-always-takers. 

Our first main result provides a lower bound on the $\nu_k$ as a function of the observable data and the type shares $\theta$. To simplify notation, let 
$$\Delta_k(A) := P(Y \in A, M=m_k \mid D=1) - P(Y \in A, M=m_k \mid D=0)$$
\noindent be the difference in the probability that $Y \in A$ and $M=m_k$ between the treated and control groups. Let $(x)_+ := \max\{x, 0\}$. We then have the following sharp lower bound on the fraction of $k$-always-takers affected by the treatment.

\begin{prop} \label{prop: testable implications for tv} 
\leavevmode
\begin{enumerate}
    \item 
    \textbf{(Lower bounds on $\nu_k$)} Suppose Assumptions \ref{asm: independence} and \ref{asm: restricted theta} hold. The true shares $\theta$ satisfy 
    \begin{align}
    \theta_{kk} \nu_{k} &\geq \left( \sup_A \Delta_k(A) - \sum_{l: l
  \neq k} \theta_{lk} \right)_+ \label{eqn: testable implication tv}
\end{align}
    \noindent for $k=0,...,K-1$. Since $\theta \in \Theta_I$, it follows that there exists some $\tilde\theta \in \Theta_I$ such that
    \begin{align}
    \tilde\theta_{kk} \nu_{k} &\geq \left( \sup_A \Delta_k(A) - \sum_{l: l
  \neq k} \tilde\theta_{lk} \right)_+ \label{eqn: testable implication tv - feasible}
\end{align}
   holds for all $k=0,...,K-1$, and hence $\nu_{k} \geq \inf \{ \nu_k \geq 0 \,:\, \exists \tilde\theta \in \Theta_{I} \text{ s.t. } \eqref{eqn: testable implication tv - feasible} \text{ holds}\}$ for all $k$.

    \item
    \textbf{(Sharpness)} The bound in \eqref{eqn: testable implication tv - feasible} is sharp: for any $\tilde\theta \in \Theta_I$, there exists a joint distribution $P^{\dagger}$ for $(Y(\cdot,\cdot),M(\cdot),D)$ consistent with the observable data\footnote{We say that $P^{\dagger}$ is consistent with the observable data if $(Y(D,M(D)), M(D), D) \sim P_{obs}$ under $P^\dagger$, i.e. the distribution of realized $(Y,M,D)$ under $P^{\dagger}$ matches the observed data distribution $P_{obs}$.} and Assumptions \ref{asm: independence} and \ref{asm: restricted theta} such that for all $l$ and $k$, $P^\dagger(G=lk) = \tilde{\theta}_{lk}$ and 
    $$\tilde{\theta}_{kk} \nu_k^\dagger = \left( \sup_A \Delta_k(A) - \sum_{l: l
  \neq k} \tilde\theta_{lk} \right)_+, \text{ where } \nu_k^\dagger = P^\dagger(Y(1,m_{k}) \neq Y(0,m_{k}) \mid G=kk) .$$ Moreover, $P^\dagger(Y(1,m) \neq Y(0,m) \mid G=lk) = 0$ if either $l \neq k$ or $m \not\in \{m_{l},m_{k}\}$. 
    
\end{enumerate}
\end{prop}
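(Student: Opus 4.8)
The plan is to bound the treatment effect on the compound outcome $1\{Y\in A, M=m_k\}$ by decomposing it over compliance types. Fix a Borel set $A$ and an index $k$. Because $Y=Y(D,M(D))$ and $M=M(D)$, on $\{M=m_k\}$ we have $Y=Y(D,m_k)$, so Assumption~\ref{asm: independence} gives $P(Y\in A, M=m_k\mid D=d)=P(Y(d,m_k)\in A,\,M(d)=m_k)$. Using $\{M(1)=m_k\}=\bigcup_l\{G=lk\}$ and $\{M(0)=m_k\}=\bigcup_l\{G=kl\}$, this rewrites $\Delta_k(A)=\sum_l\theta_{lk}P(Y(1,m_k)\in A\mid G=lk)-\sum_l\theta_{kl}P(Y(0,m_k)\in A\mid G=kl)$. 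I would then isolate the two $l=k$ terms, which combine into $\theta_{kk}\big[P(Y(1,m_k)\in A\mid G=kk)-P(Y(0,m_k)\in A\mid G=kk)\big]$; by the elementary coupling bound $|P(U\in A)-P(V\in A)|\le P(U\neq V)$ this is at most $\theta_{kk}\nu_k$ in absolute value. Each remaining $l\neq k$ term in the first sum is at most $\theta_{lk}$ and each in the second sum is nonnegative and may be dropped, so $\Delta_k(A)\le\theta_{kk}\nu_k+\sum_{l\neq k}\theta_{lk}$. Rearranging, using $\theta_{kk}\nu_k\ge 0$, and taking $\sup_A$ (monotonicity and continuity of $(\cdot)_+$) gives \eqref{eqn: testable implication tv}. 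Since Assumptions~\ref{asm: independence} and \ref{asm: restricted theta} make the true $\theta$ an element of $\Theta_I$, \eqref{eqn: testable implication tv - feasible} holds with $\tilde\theta=\theta$ for every $k$ at once, and the stated infimum bound on $\nu_k$ is immediate.

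\textbf{Part 2 (sharpness), set-up.} Fix $\tilde\theta\in\Theta_I$. Write $Q_{d,j}$ for the observed sub-distribution of $Y$ on $\{M=m_j, D=d\}$, with total masses $a_j:=\sum_l\tilde\theta_{jl}=P(M=m_j\mid D=0)$ and $b_j:=\sum_l\tilde\theta_{lj}=P(M=m_j\mid D=1)$; let $P_j:=(Q_{1,j}-Q_{0,j})^+$ and $N_j:=(Q_{1,j}-Q_{0,j})^-$ be the Jordan parts of the signed measure $Q_{1,j}-Q_{0,j}$, so that $\sup_A\Delta_j(A)=P_j(\mathcal{Y})$, and let $\delta_j:=\big(P_j(\mathcal{Y})-\sum_{l\neq j}\tilde\theta_{lj}\big)_+$ denote the right-hand side of \eqref{eqn: testable implication tv - feasible}. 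The arithmetic facts I record in advance, all following from the pointwise identity $Q_{1,j}=(Q_{0,j}\wedge Q_{1,j})+P_j$ together with $a_j,b_j\ge\tilde\theta_{jj}$, are: $b_j-P_j(\mathcal{Y})=(Q_{0,j}\wedge Q_{1,j})(\mathcal{Y})$; $\delta_j\le\tilde\theta_{jj}$; $\delta_j\le P_j(\mathcal{Y})$ and $\delta_j\le N_j(\mathcal{Y})$; and, when $\delta_j>0$, the identity $\tilde\theta_{jj}-\delta_j=(Q_{0,j}\wedge Q_{1,j})(\mathcal{Y})$. The cases $\tilde\theta_{jj}=0$ (which forces $\delta_j=0$) and $\delta_j=0$ will be treated directly.

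\textbf{Part 2, the construction.} I construct the potential outcomes and mediators $(Y(\cdot,\cdot),M(\cdot))$ and draw $D\indep(Y(\cdot,\cdot),M(\cdot))$ with $P^\dagger(D=1)=P_{obs}(D=1)\in(0,1)$, declaring $P^\dagger(G=lk)=\tilde\theta_{lk}$. On each $j$-always-taker stratum with $\tilde\theta_{jj}>0$, choose a probability measure $\rho_j$ with $(\tilde\theta_{jj}-\delta_j)\rho_j\le Q_{0,j}\wedge Q_{1,j}$ and probability measures $\alpha_j',\beta_j'$ with $\delta_j\alpha_j'\le N_j$, $\delta_j\beta_j'\le P_j$ (all possible by the facts above, and automatically $\alpha_j'\perp\beta_j'$ since $N_j\perp P_j$), put $\alpha_j:=(1-t_j)\rho_j+t_j\alpha_j'$ and $\beta_j:=(1-t_j)\rho_j+t_j\beta_j'$ with $t_j:=\delta_j/\tilde\theta_{jj}$ so that $\TV(\alpha_j,\beta_j)=t_j$, and let $(Y(0,m_j),Y(1,m_j))\mid G=jj$ be the maximal coupling of $(\alpha_j,\beta_j)$, so that $\nu_j^\dagger=t_j$; set $Y(0,m)=Y(1,m)$ for $m\neq m_j$ on this stratum. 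On each $lk$-complier stratum ($l\neq k$), set $Y(d,m)=Y(m)$ for all $d,m$, with the conditional law of $Y(m_l)$ equal to the normalized control-side residual $(Q_{0,l}-\tilde\theta_{ll}\alpha_l)/\sum_{l'\neq l}\tilde\theta_{ll'}$ and that of $Y(m_k)$ equal to the normalized treatment-side residual $(Q_{1,k}-\tilde\theta_{kk}\beta_k)/\sum_{l'\neq k}\tilde\theta_{l'k}$ (other coordinates of $Y(\cdot)$ arbitrary; the two relevant coordinates taken independent). The facts above show these residuals are nonnegative with masses $\sum_{l\neq j}\tilde\theta_{jl}$ and $\sum_{l\neq j}\tilde\theta_{lj}$ (equal, when $\delta_j>0$, to $N_j-\delta_j\alpha_j'$ and $P_j-\delta_j\beta_j'$), so the complier laws are valid probability measures and the resulting $P^\dagger$ is consistent with the observable data. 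By construction Assumptions~\ref{asm: independence} and \ref{asm: restricted theta} hold, $P^\dagger(G=lk)=\tilde\theta_{lk}$, $\tilde\theta_{jj}\nu_j^\dagger=\tilde\theta_{jj}t_j=\delta_j$, and the only direct effects are for always-takers at their own $m_j$---exactly the claimed properties.

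\textbf{Main obstacle.} The delicate step is the budgeting in Part 2: $\delta_j$ must be small enough for the $j$-always-takers to absorb (so that $\delta_j\le\tilde\theta_{jj}$ and $\delta_j\le N_j(\mathcal{Y}),\,P_j(\mathcal{Y})$), and yet \emph{exactly} right so that, once they have absorbed $\delta_j$ worth of the discrepancy between $Q_{0,j}$ and $Q_{1,j}$, the leftover nonnegative residuals have precisely the masses $\sum_{l\neq j}\tilde\theta_{jl}$ and $\sum_{l\neq j}\tilde\theta_{lj}$ that the complier types are obliged to carry. This is exactly what the identity $\tilde\theta_{jj}-\delta_j=(Q_{0,j}\wedge Q_{1,j})(\mathcal{Y})$ (for $\delta_j>0$) provides: it forces $(\tilde\theta_{jj}-\delta_j)\rho_j=Q_{0,j}\wedge Q_{1,j}$ and makes the Jordan pieces $N_j$ and $P_j$ come out as precisely the control- and treatment-side complier residuals. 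Everything else---the coupling inequality of Part 1, the maximal-coupling step, the cell-by-cell consistency check, and the boundary cases $\delta_j=0$ and $\tilde\theta_{jj}=0$---is routine; the only standing technical requirement is that the measure-lattice operations, Jordan decompositions, and maximal couplings be carried out on a standard Borel outcome space, which holds here since $Y$ is scalar.
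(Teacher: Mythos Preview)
Your proof is correct and follows essentially the same route as the paper's. Part~1 is identical up to cosmetic differences (you invoke the coupling inequality $|P(U\in A)-P(V\in A)|\le P(U\ne V)$ directly, where the paper passes through the total-variation distance and then cites the Wasserstein-$0$ characterization). For Part~2, your Jordan-decomposition/maximal-coupling packaging is exactly the paper's density construction in measure-theoretic dress: your $\rho_j$ is the normalized $f_{\min}$, your $\alpha_j',\beta_j'$ are the normalized $\tilde f_0,\tilde f_1$, and your mixture weights $t_j=\delta_j/\tilde\theta_{jj}$ coincide with the paper's $\nu_k^*$; the paper splits the construction into four cases ($\tilde\theta_{kk}=0$; $\eta_k>0$ with $f_{\min}=0$; $\eta_k>0$ with $f_{\min}>0$; $\eta_k=0$) whereas your formulation handles them uniformly, which is a mild expositional gain but not a different idea.
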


Equation \eqref{eqn: testable implication tv} gives a lower-bound on the fraction of $k$-always-takers whose outcome is affected by the treatment, $\nu_k$, involving the true type shares $\theta$ and functions of the observable data (the $\Delta_k$). The true $\theta$ will generically not be point-identified, and so \eqref{eqn: testable implication tv} cannot be used directly to give a feasible lower bound on $\nu_k$. However, we know that the true $\theta$ must lie in the identified set $\Theta_I$. This leads to the feasible implication given in \eqref{eqn: testable implication tv - feasible} which replaces $\theta$ in \eqref{eqn: testable implication tv} with \emph{some} $\tilde\theta \in \Theta_I$. The second part of \Cref{prop: testable implications for tv} shows that the bound given in \eqref{eqn: testable implication tv - feasible} is sharp in the sense that there exists a distribution of primitives consistent with the observable data such that the lower bound holds with equality. It further shows that under this distribution of primitives, there is no direct effect of treatment for complier types; hence, we can only obtain non-trivial lower bounds on direct effects for the always-takers.

 Recall that under the sharp null of full mediation, the fraction of always-takers whose outcome is affected by the treatment should be zero. We thus immediately obtain the following testable implications of the sharp null by setting $\nu_k = 0$ in \eqref{eqn: testable implication tv - feasible}.

\begin{cor}[Testable implications of sharp null] \label{cor: testable implications sharp null} \leavevmode
\begin{enumerate}
    \item 
    \textbf{(Testable implications)} Suppose Assumptions \ref{asm: independence} and \ref{asm: restricted theta} hold. If the sharp null of full mediation is satisfied, then there exists $\tilde\theta \in \Theta_I$ such that for all $k= 0,...,K-1$,
    \begin{equation}
    \sup_A \Delta_k(A) \leq \sum_{l: l
  \neq k} \tilde\theta_{lk}. \label{eqn: testable implications sharp null}
    \end{equation}
    \item
    \textbf{(Sharpness)} The testable implication in \eqref{eqn: testable implications sharp null} is sharp: if there exists $\tilde\theta \in \Theta_I$ such that \eqref{eqn: testable implications sharp null} holds for all $k$, then there exists a joint distribution $P^\dagger$ for $(Y(\cdot,\cdot),M(\cdot),D)$ consistent with the observable data and Assumptions \ref{asm: independence} and \ref{asm: restricted theta} such that the sharp null of full mediation holds.

\end{enumerate}     
\end{cor}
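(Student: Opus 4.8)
The plan is to obtain both parts of \Cref{cor: testable implications sharp null} as essentially immediate consequences of \Cref{prop: testable implications for tv}, since the Corollary is just the specialization of that Proposition to $\nu_k = 0$. First, for Part~1, I would note that under the sharp null of full mediation $Y(1,m_k) = Y(0,m_k)$ almost surely, so by definition $\nu_k = P(Y(1,m_k)\neq Y(0,m_k)\mid G=kk) = 0$ for every $k$ with $\theta_{kk}>0$, and in all cases the left-hand side $\theta_{kk}\nu_k$ of \eqref{eqn: testable implication tv} equals $0$. Plugging this into the true-share inequality \eqref{eqn: testable implication tv} from \Cref{prop: testable implications for tv} gives $0 = \theta_{kk}\nu_k \geq \left(\sup_A \Delta_k(A) - \sum_{l\neq k}\theta_{lk}\right)_+ \geq 0$, which forces $\sup_A \Delta_k(A) \leq \sum_{l\neq k}\theta_{lk}$. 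Since the true $\theta$ lies in $\Theta_I$ by construction of the identified set, taking $\tilde\theta = \theta$ delivers \eqref{eqn: testable implications sharp null} for all $k$.

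For Part~2, suppose $\tilde\theta \in \Theta_I$ satisfies \eqref{eqn: testable implications sharp null} for every $k$; then $\left(\sup_A\Delta_k(A) - \sum_{l\neq k}\tilde\theta_{lk}\right)_+ = 0$ for all $k$. I would then invoke the sharpness statement of \Cref{prop: testable implications for tv} applied to this $\tilde\theta$: it produces a joint distribution $P^\dagger$ for $(Y(\cdot,\cdot),M(\cdot),D)$ that is consistent with the observable data and satisfies Assumptions \ref{asm: independence} and \ref{asm: restricted theta}, with $P^\dagger(G=lk) = \tilde\theta_{lk}$, with $\tilde\theta_{kk}\nu_k^\dagger = \left(\sup_A\Delta_k(A) - \sum_{l\neq k}\tilde\theta_{lk}\right)_+ = 0$, and with $P^\dagger(Y(1,m)\neq Y(0,m)\mid G=lk) = 0$ whenever $l\neq k$ or $m\notin\{m_l,m_k\}$.

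It then remains to verify that the sharp null holds under $P^\dagger$, i.e. $P^\dagger(Y(1,m_j)\neq Y(0,m_j)) = 0$ for each $j$. I would decompose this over compliance types: for $l\neq k$ the conditional probability $P^\dagger(Y(1,m_j)\neq Y(0,m_j)\mid G=lk)$ is zero by the last clause of the Proposition (it holds for every $m$ when $l\neq k$); for $l=k$ and $m_j\neq m_k$ it is again zero by that clause; and for $l=k$ and $m_j = m_k$ it equals $\nu_k^\dagger$, which is zero whenever $\tilde\theta_{kk} = P^\dagger(G=kk)>0$ (dividing the identity $\tilde\theta_{kk}\nu_k^\dagger = 0$ through) and which multiplies a probability-zero event otherwise. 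Summing, $P^\dagger(Y(1,m_j)\neq Y(0,m_j)) = \sum_{l,k}\tilde\theta_{lk}\,P^\dagger(Y(1,m_j)\neq Y(0,m_j)\mid G=lk) = 0$, so the sharp null holds under $P^\dagger$.

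I do not anticipate a substantive obstacle here: the mathematical content is entirely carried by \Cref{prop: testable implications for tv}, and the only care required is bookkeeping around always-taker types with $\tilde\theta_{kk}=0$ (where $\nu_k^\dagger$ is not pinned down but is irrelevant because it conditions on a null event) and making sure the ``no direct effect for compliers'' clause of the Proposition is invoked for every pair $(l,k)$ and every $m$ in the type decomposition. If anything is mildly delicate, it is confirming that the Proposition's construction genuinely yields $\nu_k^\dagger = 0$, and not merely $\tilde\theta_{kk}\nu_k^\dagger = 0$, on the types that carry positive mass — but that is immediate from the hypothesis \eqref{eqn: testable implications sharp null}.
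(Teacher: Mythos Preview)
Your proposal is correct and follows essentially the same approach as the paper's proof: both parts are obtained directly from \Cref{prop: testable implications for tv} by setting $\nu_k=0$, and sharpness is verified via the law of total probability over compliance types using the ``no direct effect for compliers'' clause. Your bookkeeping around the $\tilde\theta_{kk}=0$ case is if anything slightly more explicit than the paper's, but the argument is the same.
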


\paragraph{Intuition.} Observe that since the treatment is randomly assigned, $\Delta_k(A)$ is simply the average effect of the treatment $D$ on the compound outcome $\tilde{Y} = 1\{Y \in A, M=m_{k}\}$. Note that under the sharp null of full mediation, the treatment effect of $D$ on $\tilde{Y}$ should be zero for always-takers, since they have the same value of $M$ and $Y$ under both treatments. This implies that under the sharp null the effect of $D$ on $\tilde{Y}$ is driven only by compliers and thus cannot be ``too large''. This is precisely what is captured in \Cref{cor: testable implications sharp null}, which shows that under the sharp null, $\Delta_k(A)$ should be bounded above by the total mass of $lk$-compliers, $\sum_{l: l \neq k} \theta_{lk}$, regardless of the choice of $A$. If, in fact, the treatment effect on $\tilde{Y}$ is larger than the number of $lk$-compliers, then it must be that some $k$-always-takers had their outcome affected by the treatment, in violation of the sharp null. Indeed, the lower bound on the fraction of $k$-always-takers whose outcome is affected by the treatment ($\nu_k$) given in \eqref{eqn: testable implication tv} is proportional to the positive part of the difference between $\sup_A \Delta_k(A)$ and the number of $lk$-compliers.  

A slightly more formal sketch of the argument is as follows. We can write $\Delta_k(A) = E[\tilde{\tau}]$, where $\tilde{\tau} $ is the individual-level treatment effect of $D$ on $\tilde{Y}$.\footnote{Formally, $\tilde\tau = \tilde{Y}(1) - \tilde{Y}(0)$ for $\tilde{Y}(d) = 1\left\{Y(d,M(d)) \in A, M(d) =m_{k} \right\}$.} Since $\tilde{Y}$ is a binary outcome, the treatment effect $\tilde{\tau}$ must be in $\{-1,0,1\}$. We now argue that $\tilde\tau$ can equal 1 only if an individual is either an $lk$-complier, or a $k$-always taker with $Y(1,m_{k}) \neq Y(0,m_{k})$. To see why this is a case, note that for $\tilde{\tau}$ to be 1, an individual must have $M(1) =m_{k}$, and thus must be either an $lk$-complier or a $k$-always taker. However, a $k$-always taker can have a treatment effect on $\tilde{Y}$ of 1 only if $Y(1,m_{k}) \in A$ and $Y(0,m_{k}) \not\in A$, which implies that $Y(1,m_{k}) \neq Y(0,m_{k})$. It follows that
\begin{align*}
\Delta_k(A) & \leq \underbrace{P(G = kk, Y(1,m_{k}) \neq Y(0,m_{k})) }_{\text{Prob of $k$-AT w/ $Y(1,m_{k}) \neq Y(0,m_{k})$}} + \underbrace{\sum_{l: l \neq k} P(G=lk)}_{\text{Prob of $lk$-complier}}.
\end{align*}
\noindent Using the fact that $\theta_{lk} = P(G=lk)$ by definition, we can rewrite the inequality as
\begin{align*}
  \Delta_k(A) \leq \theta_{kk} \cdot P( Y(1,m_{k}) \neq Y(0,m_{k}) \mid G = kk) + \sum_{l : l \neq k} \theta_{lk}.
\end{align*}
\noindent Rearranging terms, we obtain that
$$\theta_{kk} P( Y(1,m_{k}) \neq Y(0,m_{k}) \mid G = kk)  \geq \Delta_k(A) - \sum_{l: l \neq k} \theta_{lk} ,$$
\noindent which together with the fact that probabilities are non-negative yields \eqref{eqn: testable implication tv}. $\blacktriangle$

\paragraph{Computation of bounds.} Suppose we are interested in computing the lower-bound on $\nu_k$ for a particular $k$. Recall that for any $\tilde\theta \in \Theta_I$, we have $P(M=m_{k} \mid D=1) = \tilde\theta_{kk} + \sum_{l: l \neq k} \tilde\theta_{lk}$. It follows that we can re-write \eqref{eqn: testable implication tv - feasible} as
$$\tilde{\theta}_{kk} \nu_k \geq \left(\sup_A \Delta_k(A) - (P(M=m_{k} \mid D=1) - \tilde\theta_{kk}) \right)_+ ,$$
where now the lower-bound depends on $\tilde\theta$ only through $\tilde\theta_{kk}$. To compute a lower-bound on $\nu_k$, we must minimize the lower-bound for $\nu_k$ given in the previous display over $\tilde\theta \in \Theta_I$. It can be shown (see \Cref{lem: min at thetakkmin}) that the minimum is actually obtained at the minimum possible value of $\tilde\theta_{kk}$, i.e. by plugging in $\tilde{\theta}_{kk}^{min} := \inf_{\tilde\theta \in \Theta_{I}} \tilde\theta_{kk}$ into the expression in the previous display. When $R$ is a polyhedron (as in our examples above), the identified set $\Theta_I$ is characterized by linear inequalities, and thus  $\tilde{\theta}_{kk}^{min}$ can be easily computed by solving a linear program. Assuming $\tilde\theta_{kk}^{min} > 0$, we then obtain the bound
$$\nu_k \geq \frac{1}{\tilde\theta_{kk}^{min}} \left(\sup_A \Delta_k(A) - (P(M=m_{k} \mid D=1) - \tilde\theta_{kk}^{min}) \right)_+ .$$
Similarly, to test whether the observable data is compatible with the sharp null we must verify whether there is any $\tilde\theta \in \Theta_I$ such that $\sup_A \Delta_k(A) \leq \sum_{l: l \neq k} \tilde\theta_{lk}$ for all $k$. By the same argument as in the previous paragraph, this is equivalent to testing whether there is any $\tilde\theta \in \Theta_I$ such that $\sup_A \Delta_k(A) \leq P(M=m_{k} \mid D=1) - \tilde\theta_{kk}$ for all $k$. Such a $\tilde\theta \in \Theta_I$ exists if and only if the solution to the linear program
\begin{equation} 
\min_{s \in \reals, \tilde\theta \in \Theta_I} s \text{ s.t. } \sup_A \Delta_k(A) \leq P(M=m_{k} \mid D=1) - \tilde\theta_{kk} + s \text{ for all } k \label{eqn: testable implications - lp}\end{equation}
is weakly negative, and so given knowledge of the distribution of the observable data, testing the implications of the sharp null is equivalent to solving a linear program.\footnote{We note that linear programming has been used for tractability in a variety of related but distinct partial identification settings; see, e.g. \citet{mogstad_policy_2024}, \citet{ji_model-agnostic_2024}, \citet{yap_sensitivity_2025} for some recent contributions.\label{fn:other-lp-papers}}

\begin{rem}[Closed-form solution with fully-ordered, monotone $M$] \label{rmk: closed form thetakk}
Consider the case where $M$ is fully-ordered and we impose monotonicity as in \Cref{ex: fully ordered M}. In this case, it turns out that there is a closed-form solution for $\tilde\theta_{kk}^{min}$. Intuitively, to minimize the number of always-takers, we wish to have as many compliers as possible. This can be achieved by maximizing the amount of ``cascading'', as in panel (c) of \Cref{fig:type-shares-illustration}. \Cref{lem: thetakk closed form} in the appendix formalizes this intuition, and shows that
\begin{equation}
\tilde\theta_{kk}^{min} = \max\{ \underbrace{ P(M=m_k \mid D=1) }_{ \text{Point mass at $M=m_k$ when $D=1$} } - \hspace{0.5cm} (\underbrace{ P(M \geq m_k \mid D=1) - P(M \geq m_k \mid D=0) }_{ \text{Treatment effect on survival fn of $M$ at $m_k$} }) \hspace{0.1cm} , \hspace{0.1cm} 0 \}. \label{eqn: thetakk min}
\end{equation}
\noindent Moreover, there exists $\tilde\theta \in \Theta_I$ such that $\tilde\theta_{kk} = \tilde\theta_{kk}^{min}$ simultaneously for all $k$. Thus, when $M$ is fully-ordered and we impose monotonicity, one need not use a linear program to lower bound $\nu_{k}$ or test the sharp null; one can simply plug in the value of $\tilde\theta_{kk}^{min}$ to the lower bounds and testable implications given above.
\end{rem}

\begin{rem}[Identifying power]  \label{rmk:when-tests-have-bite}
\Copy{idpower}{The testable implications we have derived for the sharp null are based on the fact that under the sharp null, there is no effect of the treatment on $k$-always-takers (i.e. $\nu_k=0$). If the data is compatible with there being no always-takers ($\tilde\theta_{kk}^{min}=0$ for all $k$), then our lower bounds on the fraction of always-takers affected by the treatment are trivially zero and there is no testable content of the sharp null of full mediation. Intuitively, we therefore have limited testable content when we are ``local'' to there being no always-takers, i.e. when $\tilde\theta_{kk}^{min} \approx 0$ for all $k$.} The expression for $\tilde\theta_{kk}^{min}$ in \eqref{eqn: thetakk min} is thus informative about when the testable implications will have bite. In particular, it shows that $\tilde\theta_{kk}^{min}$ will tend to be large when there is substantial point mass at $M=m_k$ in the treated group, and when the treatment effect on the survival function of $M$ is small at $M=m_k$. Thus, while our testable implications are valid for any $M$ with a finite number of support points, there will tend to be more identifying power when there is substantial point mass for at least some values of $M$. We therefore expect the testable implications to be most useful in settings where $M$ takes only a moderate number of values. It is also worth noting that if the treatment effect of $D$ on $M$ is so large that everyone could be a complier, then $\tilde\theta_{kk}^{min}=0$ for all $k$ and so there is no testable content. It follows that our testable implications will tend to have more bite when the treatment effect of $D$ on $M$ is small; this is intuitive, since the data suggest that $M$ is not the only mechanism if the treatment effect of $D$ on $M$ is small relative to the effect of $D$ on $Y$.
\end{rem}

\begin{rem}[Binning values of $M$] \label{rmk: binning M}
 In light of the previous remark, in settings where $M$ is continuous or discrete with many values, it may be tempting to discretize the original $M$ into a small number of bins before applying our results. Let $M^{disc}$ be a discretization of $M$ into $K$ bins (indexed by $k=0, \dots, K-1$). If we compute the lower bounds for $\nu_k$ given in \Cref{prop: testable implications for tv} using $M^{disc}$ as the mediator, we obtain valid lower bounds on $P(Y(1,M(1)) \neq Y(0,M(0)) \mid M^{disc}(1) = M^{disc}(0) = k)$, i.e. the fraction of people whose outcome depends on treatment status despite having $M$ in the $k$th bin under both treatments. Our tests for the sharp null thus remain valid if we assume that changes of $M$ within a bin do not affect the outcome, i.e. $Y(d,m) = Y(d,m')$ for all $m$ and $m'$ corresponding to $M^{disc} = k$. This is a strong assumption if taken literally. However, one might reasonably expect that a small change in $M$ should not affect the outcome for most people. This could be captured by the assumption that a change of $M$ within a bin affects no more than $\nu_{max}$ fraction of people, so that $P(Y(d,m) \neq Y(d,m') \mid M^{disc}(1) = M^{disc}(0) = k) \leq \nu_{max}$ for all $m,m'$ in the same bin. In this case, if the sharp null is satisfied, the lower bounds on $\nu_k$ given in \Cref{prop: testable implications for tv} using $M^{disc}$ as the mediator should all be below $\nu_{max}$. That is, there should exist $\tilde\theta \in \Theta_I$ such that $\tilde\theta_{kk} \nu_{max} \geq (\sup_A \Delta_k(A) - \sum_{l: l \neq k} \tilde\theta_{lk})_+$ for all $k$. 
\end{rem}

\begin{rem}[Functions of the $\nu_k$]
We may sometimes be interested in aggregations of the $\nu_k$ across $k$. For example, the total fraction of always-takers whose outcome is affected by treatment, pooling across $k$, is given by $$\bar{\nu} :=P(Y(1,M(1)) \neq Y(0,M(0)) \mid M(1) = M(0)) = \frac{ \sum_k \theta_{kk} \nu_k}{ \sum_k \theta_{kk} } .$$ To compute a lower bound on this quantity, we must find $\tilde\theta$ and $\nu$ to minimize $\frac{ \sum_k \tilde\theta_{kk} \nu_k}{ \sum_k \tilde\theta_{kk} } $ subject to the constraints that \eqref{eqn: testable implication tv - feasible} holds and $\tilde\theta \in \Theta_I$. If we reparameterize the problem in terms of $\tilde\theta$ and $\tilde{\nu}_k := \tilde\theta_{kk} \nu_k$, then both the numerator and denominator of the objective are linear in the parameters, and the constraints are also linear in the parameters if $R$ is a polyhedron. Thus, the problem of minimizing $\frac{ \sum_k \theta_{kk} \nu_k}{ \sum_k \theta_{kk} } $ over the identified set is a linear-fractional program, which can be recast as a simple linear program via the \citet{charnes_programming_1962} transformation. It is thus simple to solve for lower bounds on the total fraction of always-takers affected by treatment, pooling across $k$.
\end{rem}

\begin{rem}[Connections to IV testing] \label{rmk: connection to sun} Since testing the sharp null of full mediation is analogous to testing instrument validity---with $M$ playing the role of the endogenous variable and $D$ the instrument---\Cref{cor: testable implications sharp null} immediately implies sharp testable implications for instrument validity in settings with a binary instrument and multi-valued $M$.\footnote{Specifically, our results are relevant for testing instrument validity when one assumes the full randomization assumption that the instrument is independent of both potential outcomes and treatments. The implications we derive may not be valid under the weaker notion of independence considered in \citet{kedagni_generalized_2020}, which imposes only that the instrument is independent of potential outcomes but not potential treatments.}\fnsep\footnote{The case where $M$ is multi-dimensional does not have an obvious parallel in the literature on testing instrument validity, since this would correspond to an IV setting with a single instrument but multiple endogenous variables.} The sharp testable restrictions derived here thus may be of independent interest for the problem of testing instrument validity. \Copy{sun-discussion}{\citet{sun_instrument_2023} derived non-sharp testable implications of instrument validity in the setting where $M$ is multi-valued but fully-ordered and one imposes monotonicity. His testable restrictions involve only the observable distributions with the minimum and maximum value of $M$. By contrast, \Cref{cor: testable implications sharp null} shows that there are in fact testable restrictions coming from all possible values of $M$, and adding these additional restrictions makes the testable implications sharp. As an illustrative example, suppose that $M \in \{0,1,2\}$ and that we impose monotonicity. Suppose, further, that the treatment $D$ has no impact on the distribution of $M$. Intuitively, under the sharp null we should then expect the distribution of $(Y,M)$ to be independent of $D$. Indeed, the sharp testable implication given in \Cref{cor: testable implications sharp null} corresponds to the restriction that $P(Y \in A,M=k \mid D=1) = P(Y \in A, M=k \mid D=0)$ for all $A$ and $k \in \{0,1,2\}$, which is equivalent to $(Y,M) \indep D$. By contrast, Sun's implications only imply this equality for $k \in \{0,2\}$, which is weaker than full independence. Additionally, while \citet{sun_instrument_2023}'s results apply under a monotonicity assumption, our results also imply testable implications under relaxations of monotonicity via a suitable choice of $R$, as described in Examples \ref{ex: fully ordered M}-\ref{ex: no restrictions} above.}
\end{rem}

\begin{rem}[Experiments with missing outcomes]
\citet{li_randomization_2025} consider the setting where we have a randomized experiment that generates an outcome $Y^*(D)$. However, $Y^*$ may be missing not-at-random: the observed outcome is $Y = M(D) \cdot Y^*(D)$ where $M(D)$ is an observed indicator for whether the outcome is missing. \citet{li_randomization_2025} are interested in the sharp null that $Y^*(1) = Y^*(0)$ (a.s.). Observe that under this sharp null, there is no effect of $D$ on $Y$ for ``always-takers'' for whom $M(1) = M(0) =k$ for $k=0,1$. Hence, our tests of the sharp null of full mediation can be used directly to test the sharp null of no treatment effect with missing outcomes in \citet{li_randomization_2025}. 
\end{rem}

\begin{rem}[Mis-measured mediator] \label{rmk: mismeasured m}
Our analysis so far has assumed that the mediator of interest $M$ is observed. In some settings, however, we may only observe a noisy proxy $\tilde{M}$ for $M$. For example, $M$ could be actual employment and $\tilde{M}$ reported employment on a survey. Assume that $\tilde{M} \indep (Y,D) \mid M$, so that the measurement error is independent of the other variables in the model given the true measurement $M$. Suppose, further, that the researcher knows the distribution of measurement error, $\tilde{M} \mid M$---for example, the researcher may have access to an auxiliary dataset that contains both survey responses and administrative measures of employment (if not, one could conduct sensitivity analyses to conjectured measurement error distributions). For simplicity, suppose that $Y$ is discrete and $M$ and $\tilde{M}$ have the same support. We then have that $P(Y=y,\tilde{M} = \tilde{m} , D=d) = \sum_{m} P(Y=y,M=m , D=d) P(\tilde{M}=\tilde{m} \mid M=m)$. We can write this equality as $\tilde{p} = L p$, where $\tilde{p}$ is the vector collecting probabilities of the form $P(Y=y,\tilde{M} = \tilde{m} , D=d)$ for different $\tilde{m}$; $p$ is analogously the vector collecting the $P(Y=y,M=m , D=d)$ for different $m$; and $L$ is the $K \times K$ matrix collecting probabilities of the form $P(\tilde{M}=\tilde{m} \mid M=m)$. Provided that $L$ is full-rank, it follows that $p = L^{-1} \tilde{p}$, and thus the distribution of $(Y,M, D)$ is identified. Our results described above, which assume that $(Y,M,D)$ are directly observed, can thus be applied using the implied distributions of $(Y,M,D)$ under this measurement error structure. 
\end{rem}

\section{Inference} \label{sec: inference}
The previous section derived testable implications of the sharp null of full mediation, as well as measures of the extent to which it is violated, which involved the distribution of the observable data $(Y,M,D) \sim P_{obs}$. We now derive methods for inference on the sharp null given a sample of $N$ $iid$ observations (or clusters) drawn from $P_{obs}$, $(Y_i,M_i,D_i)_{i=1}^{N}$. For simplicity of notation, we focus on testing the sharp null, although a simple adaptation of the described approach can be used to test null hypotheses of the form $H_0: \nu_k \leq \nu^{ub}_k \hspace{.1cm} \forall k$ for any $\nu^{ub}_k$ (with the sharp null the special case with $\nu^{ub}_k = 0$ for all $k$.) 

We first comment on the non-standard nature of the inference problem. Recall that the testable implications of the sharp null are equivalent to whether the linear program \eqref{eqn: testable implications - lp} has a weakly negative solution. However, functions of the observable data enter the constraints of the linear program, and it is well-known that the solution to a linear program can be non-differentiable in the constraints (\citealp{shapiro_stochprog_1991}). Second, the function of the observable data in the constraints, $\sup_A \Delta_k(A)$, is itself potentially non-differentiable in the underlying data-generating process. If the outcome $Y$ is discrete, for example, then $\sup_A \Delta_k(A) = \sum_{y} (f_{Y,M=m_k \mid D=1}(y) - f_{Y,M=m_k \mid D=0}(y))_+ $, where $(x)_+ = \max\{x,0\}$, which is clearly non-differentiable in the partial probability mass functions $f_{Y,M=m_k \mid D=d}(y) := P(Y=y, M=m_k \mid D=d)$ if $f_{Y,M=m_k \mid D=1}(y) = f_{Y,M=m_k \mid D=0}(y)$ for any $y$. Since bootstrap methods are generally invalid when the target parameter is non-differentiable in the underlying data-generating process \citep{fang_inference_2019}, we cannot simply bootstrap the solution to \eqref{eqn: testable implications - lp}. 

We now show that methods from the moment inequality literature can be used to circumvent these issues. We focus on the case where the distribution of $Y$ is discrete, with support points $y_1,...,y_Q$. As we discuss in \Cref{rmk: discretizing y} below, if $Y$ is continuous, then the tests we derive remain valid if one uses a discretization of $Y$, although at the potential loss of sharpness. We also focus on the case where $R$ takes the polyhedral form $R = \{ \theta \in \Delta : B \theta \leq c\}$. To see the connection with moment inequalities, observe that with discrete $Y$, we have that $$\sup_A \Delta_k(A) = \sum_{q=1}^Q \left(P(Y=y_q, M=m_k \mid D=1) - P(Y=y_q, M=m_k \mid D=0) \right)_+$$

\noindent where again $(x)_+ = \max \left\{x,0 \right\}$. It follows that the inequality $\sup_A \Delta_k(A) \leq P(M=m_k \mid D=1) - \tilde\theta_{kk} $
\noindent holds if and only if there exist $\delta_{k1},...,\delta_{kQ}$ such that 
\begin{align}
& \sum_{q=1}^Q \delta_{kq} \leq P(M=m_k \mid D=1) - \tilde\theta_{kk} \label{eqn: tv constraint 1} \\
& \delta_{kq} \geq P(Y=y_q, M=m_k \mid D=1) - P(Y=y_q, M=m_k \mid D=0) \text{ for } q=1,...,Q \label{eqn: tv constraint 2} \\
& \delta_{kq} \geq 0 \text{ for } q=1,...,Q .\label{eqn: tv constraint 3}
\end{align}

\noindent Hence, the testable implications of the sharp null derived in \Cref{cor: testable implications sharp null} are equivalent to the statement that there exists some $\tilde\theta \in \Theta_I$ and $\delta$ such that \eqref{eqn: tv constraint 1}-\eqref{eqn: tv constraint 3} hold for all $k=0,...,K-1$. 

Observe, further, that $\delta,\tilde\theta$ and the observable probabilities enter the inequalities \eqref{eqn: tv constraint 1}-\eqref{eqn: tv constraint 3} linearly, and the same is true for the constraints that determine $\Theta_I$. Letting $\omega = (\tilde\theta',\delta')'$, it follows that we can write the testable implications of the sharp null as
\begin{equation}
H_0: \exists \, \omega \text{ s.t. } C_1 \omega - C_2p  \geq 0, \label{eqn: h0 w nuisance} 
\end{equation}
\noindent where $C_1,C_2$ are known matrices (not depending on the data) and $p$ is a vector that collects probabilities of the forms $P(Y=y_q, M=m_k \mid D=d)$ and $P(M=m_k \mid D=d)$. A recent literature on moment inequalities has considered testing hypotheses of the above form---in which the nuisance parameter $\omega$ enters linearly and with known coefficients $C_1$---given estimates $\hat{p}$ such that $\sqrt{N}(\hat{p}-p) \to N(0,\Sigma)$ \citep{andrews_inference_2023, cox_simple_2022, fang_inference_2023, cho_simple_2024}. Under mild conditions, the central limit theorem implies that the vector of conditional sample means $\hat{p}$ is asymptotically normal, and thus existing methods from the aforementioned papers can thus be used directly to test the sharp null of full mediation.

\begin{rem}[Discretizing continuous outcomes] \label{rmk: discretizing y}
Suppose that the outcome $Y$ is continuously distributed. Let $I_1,...,I_Q$ be disjoint intervals that partition the outcome space, and let $Y^{disc}$ be the discretization of $Y$ that equals $j$ when $Y \in I_j$. Let $\Delta_k^{disc}(A)$ be the analog to $\Delta_k(A)$ using $Y^{disc}$ instead of $Y$.  Observe that
\begin{align*}
\sup_{A} \Delta_k^{disc}(A) &= \sup_{A} \hspace{.1cm} P(Y^{disc} \in A, M=m_k \mid D=1) - P(Y^{disc} \in A, M=m_k \mid D=0) \\
&= \sup_{A \in \mathcal{A}_{disc}} P(Y \in A, M=m_k \mid D=1) - P(Y\in A, M=m_k \mid D=0)  = \sup_{A \in \mathcal{A}_{disc}} \Delta_k(A)
\end{align*}
\noindent where $\mathcal{A}_{disc}$ is the $\sigma$-algebra generated by $I_1,...,I_Q$. Since $\mathcal{A}_{disc}$ is a subset of the Borel $\sigma$-algebra, it follows that $\sup_A \Delta_k^{disc} (A) = \sup_{A \in \mathcal{A}_{disc}} \Delta_k(A) \leq \sup_A \Delta_k(A)$. Hence, the testable implications of the sharp null for $Y$ imply the testable implications of the sharp null for any discretization of $Y$. One can thus obtain valid inference on the sharp null by discretizing the outcome and then using the approach described above with $Y^{disc}$. Of course, to retain approximate sharpness of the testable implications, one would like to choose a discretization fine enough such that $\sup_A \Delta^{disc}_k(A) \approx \sup_A \Delta_k(A)$. Observe that with a continuous outcome, $\sup_A \Delta^{disc}_k(A) = \sup_A \Delta_k(A)$ if the sign of $f_{Y,M=m_k \mid D=1}(y) - f_{Y,M=m_k \mid D=0}(y)$ is constant at all $y$ within the same interval $I_j$. To obtain approximate sharpness of the testable implications, one would thus like to choose a discretization such that there is a cut-point close to any point where the partial densities cross. A practical tradeoff arises, however, because the validity of the methods described above to test moment inequalities relies on a central limit theorem for the sample probabilities $\hat{p}$, and hence requires that the number of observations per cell (i.e. $(Y^{disc},M,D)$ combination) not be too small. There is thus a trade-off whereby we expect that choosing a smaller number of bins is beneficial for size control but may lead to less sharp testable implications. Matters are further complicated by the fact that the finite-sample power of moment inequality methods may be non-monotonic in the number of bins (as we find in our Monte Carlo simulations below). In Appendix \ref{sec: disc-cont-outcome}, we discuss how the choice of bins is closely related to the choice of instrument functions in the literature on conditional moment inequalities \citep[e.g.][]{andrews_cmi_2013}, which is known to be a challenging problem. Although a formal treatment of the optimal bin choice is beyond the scope of this paper, we provide some practical heuristics following our Monte Carlo simulations in \Cref{sec: monte carlo}. We note, further, that having a modest number of bins may lead to a more interpretable parameter $\nu_k$. For example, if one uses a discretization using 5 bins based on the quintiles of the outcome, then $\nu_k$ corresponds to the fraction of $k$-always-takers whose outcome changes quintile when treated; this may be easier to interpret in some settings than the fraction of always-takers whose outcome is affected at all. 

\end{rem}

\subsection{Monte Carlo} \label{sec: monte carlo}
To evaluate the methods for inference described above, we conduct Monte Carlo simulations calibrated to our applications to \citet{bursztyn_misperceived_2020} and \citet{baranov_maternal_2020} in \Cref{sec: applications} below. For simplicity, we focus on testing the sharp null under a monotonicity assumption.

\paragraph{Treatment, outcome, and mediator.} The treatment, outcome, and mediator in our simulations match those in our empirical applications. For \citet{bursztyn_misperceived_2020}, the treatment is receiving information about other men's beliefs, the outcome is a binary indicator for applying for jobs outside of the home, and the mediator is a binary indicator for job-search service sign-up. For \citet{baranov_maternal_2020}, the treatment is cognitive behavioral therapy and the outcome is an index of financial empowerment. We consider two mediators, a binary indicator for the presence of a grandmother in the household, and a relationship-quality score, which is a score on a 1-5 scale.

\paragraph{Sample sizes.} The sample used for our main analysis of \citet{bursztyn_misperceived_2020} contains $284$
\unskip people, with treatment assignment randomized at the individual level (approximately half ($139$
\unskip) were treated). For the simulations calibrated to
\citet{bursztyn_misperceived_2020}, we draw \unskip $iid$ observations to match the
original sample size. In
\citet{baranov_maternal_2020}, treatment was assigned at the level of a cluster
(i.e. at the Union Council level), with a total of 40 clusters (20 treated, 20
control), and a total sample size of approximately 600 individuals ($568$
\unskip or $585$
\unskip depending on the choice of $M$). For simulations calibrated to \citet{baranov_maternal_2020}, we therefore draw 20
independent clusters from each treatment group. Given the small
number of clusters, we expect this to be a relatively challenging setting for
inference. To evaluate the impact of having a small number of clusters, we also
consider alternative simulation designs where we sample 40 or 100 clusters of
each treatment type, with a total of 80 and 200 clusters for each design.

\paragraph{Description of DGP.} In all of our simulations, we sample the distribution of $(Y,M)$ for control units (or clusters) from the empirical distribution of control units (or clusters) in our applications (i.e. from $(Y,M) \mid D=0$). For treated units in our simulations, we draw with probability $t$ from the empirical distribution of $(Y,M)$ for treated units, and with probability $1-t$ from the empirical distribution for control units, where $t \in \{0,0.5,1\}$ is a simulation parameter. Thus, when $t=1$, we are sampling both treated and control units in the simulation from the empirical distribution in the data, under which the sharp null is violated. This allows us to assess the power of the various tests. When $t=0$, on the other hand, the distribution of $(Y,M)$ for both treated and control units in the simulation is drawn from the empirical distribution for control units in the original data. This ensures that the testable implications of the sharp null and monotonicity are satisfied, which allows us to evaluate size control. (In fact, the design ensures that all of the implied moment inequalities hold with equality, which is generally a challenging setting for size control for moment inequality methods.) When $t=0.5$, the distribution of $(Y,M)$ for treated units is a mixture of the empirical distribution for treated and control units in the original data. Thus, the sharp null is violated, but the violation is smaller than under the case when $t=1$. Comparing across the cases $t=0.5$ and $t=1$ thereby allows us to evaluate how power changes with the size of the violation of the null.

\paragraph{Methods used.} To implement tests based on moment inequalities as
described above, we consider the hybrid test proposed by
\citet[][henceforth ARP]{andrews_inference_2023}, the conditional conditional
chi-squared test proposed by \citet[][henceforth CS]{cox_simple_2022},\footnote{More precisely, CS propose a conditional chi-squared test and a ``refined'' version of this test. The refinement addresses the fact that the baseline test can be conservative. Since the refinement is computationally costly with many moments, and only matters when one moment is binding, we only implement the refinement in DGPs with a binary outcome, for which there are fewer moments.\label{fn:cs-details}} and the
test proposed by \citet[][henceforth FSST]{fang_inference_2023}.\footnote{When $M$ is binary, we implement the formulation of the moment inequalities derived in \eqref{eq:bin_sharp_imp} without nuisance parameters. For non-binary $M$, we use the formulation in \eqref{eqn: h0 w nuisance}.} For comparison
to existing methods in the case where $M$ is binary, we consider the test for
instrument validity proposed by \citet[][henceforth
K]{kitagawa_test_2015}.\footnote{For the DGPs based on
  \cite{baranov_maternal_2020}, we use a modified version of
  \cite{kitagawa_test_2015} to account for clustering.} In the simulations calibrated to \citet{bursztyn_misperceived_2020}, the outcome is binary, and thus no discretization of the outcome is needed. For the simulations calibrated to \citet{baranov_maternal_2020}, where the outcome takes many values, for the moment inequality methods we consider a discretization of the outcome based on 5 bins in our main specification (see \Cref{rmk: discretizing y}). We also consider alternative specifications using 2 and 10 bins. Since the K test does not require a discrete outcome, we use the original continuous outcome when implementing the K test. Implementation of the FSST test requires specifying the moment-selection tuning parameter $\lambda$. We consider the two choices recommended by FSST in their Remark 4.2, one of which is data-driven and the other is not. We refer to the resulting tests as FSSTdd and FSSTndd (where `dd' denotes data-driven). For CS and ARP, we use analytic estimates of the variance of the moments, assuming the data are drawn $iid$ in the simulations calibrated to \citet{bursztyn_misperceived_2020}, or that clusters are drawn $iid$ in the simulations calibrated to \citet{baranov_maternal_2020}. Since the K and FSST tests require bootstrap replicates, we use a non-parametric bootstrap at either the individual or cluster level, as appropriate.\footnote{We have verified that ARP and CS return similar results if we use an analogous bootstrap estimate of the variance rather than the analytic estimates.} All tests impose monotonicity as defined in \Cref{eqn: delta for monotonicity}.\footnote{\Copy{mon-violated}{As described in the empirical section below, for the multi-valued $M$ in \citet{baranov_maternal_2020}, the empirical distribution for $M \mid D$ is inconsistent with monotonicity (although the violation is not statistically significant). Our simulation design ensures that the data are consistent with monotonicity under the null DGP ($t=0$). However, the alternative DGPs ($t \in \{0.5,1\}$) are based on the empirical distribution and are therefore inconsistent with monotonicity. Hence, the reported power of tests imposing monotonicity under these alternatives corresponds to their power to jointly detect a violation of the sharp null and a relatively small violation of the monotonicity assumption. We found the ranking of power across methods was identical when we modified the tests to allow for the minimal relaxation of monotonicity consistent with the data, and therefore present the results imposing monotonicity for simplicity and consistency with the other specifications.}\label{fn:mon-violated}} All tests are implemented with nominal size of 5\%.

  \begin{table}[!htbp] \centering
  \caption{Simulation results for binary $M$}
  \label{tab: main_sim_binM}
\begin{threeparttable}
\begin{tabular}{@{\extracolsep{5pt}} ccccccc} 
\\[-1.8ex]\hline 
\hline \\[-1.8ex] 
\\[-2.0ex] \multicolumn{7}{@{} l}{Panel A: Bursztyn et al}
 \\
 \\[-1.5ex]
 & $\bar{\nu}$ LB & ARP & CS & K & FSSTdd & FSSTndd \\ 
\cline{3-7} \\[-2.0ex]
t=0 & $0$ & $0.038$ & $0.032$ & $0.030$ & $0.078$ & $0.070$ \\ 
t=0.5 & $0.036$ & $0.196$ & $0.190$ & $0.116$ & $0.214$ & $0.194$ \\ 
t=1 & $0.077$ & $0.626$ & $0.632$ & $0.386$ & $0.620$ & $0.584$ \\ 
\\[-1.83ex] 
 \hline \\[-1.83ex]
\\[-2.0ex] \multicolumn{7}{@{} l}{Panel B: Baranov et al, 40 clusters}
 \\
 \\[-1.5ex]
 & $\bar{\nu}$ LB & ARP & CS & K & FSSTdd & FSSTndd \\ 
\cline{3-7} \\[-2.0ex]
t=0 & $0$ & $0.056$ & $0.154$ & $0.050$ & $0.232$ & $0.212$ \\ 
t=0.5 & $0.134$ & $0.194$ & $0.206$ & $0.064$ & $0.314$ & $0.270$ \\ 
t=1 & $0.283$ & $0.570$ & $0.668$ & $0.422$ & $0.750$ & $0.680$ \\ 
\\[-1.83ex] 
 \hline \\[-1.83ex]
\\[-2.0ex] \multicolumn{7}{@{} l}{Panel C: Baranov et al, 80 clusters}
 \\
 \\[-1.5ex]
 & $\bar{\nu}$ LB & ARP & CS & K & FSSTdd & FSSTndd \\ 
\cline{3-7} \\[-2.0ex]
t=0 & $0$ & $0.044$ & $0.064$ & $0.040$ & $0.132$ & $0.112$ \\ 
t=0.5 & $0.134$ & $0.322$ & $0.340$ & $0.160$ & $0.410$ & $0.322$ \\ 
t=1 & $0.283$ & $0.836$ & $0.936$ & $0.846$ & $0.956$ & $0.936$ \\ 
\\[-1.83ex] 
 \hline \\[-1.83ex]
\\[-2.0ex] \multicolumn{7}{@{} l}{Panel D: Baranov et al, 200 clusters}
 \\
 \\[-1.5ex]
 & $\bar{\nu}$ LB & ARP & CS & K & FSSTdd & FSSTndd \\ 
\cline{3-7} \\[-2.0ex]
t=0 & $0$ & $0.044$ & $0.054$ & $0.030$ & $0.120$ & $0.090$ \\ 
t=0.5 & $0.134$ & $0.686$ & $0.776$ & $0.618$ & $0.776$ & $0.734$ \\ 
t=1 & $0.283$ & $0.998$ & $1$ & $1$ & $1$ & $1$ \\ 
\\[-2.0ex]
\hline \\[-1.8ex] 
\end{tabular} 
    
\begin{tablenotes}[flushleft]\footnotesize
\item\emph{Notes}: This table contains simulation results for the DGPs where we
  have a binary mediator. The first column shows the value of $t$, which
  determines the distance from the null, as described in the main text. The second column shows the lower-bound on the fraction of always-takers affected by treatment, $\bar{\nu}$. The remaining columns contain the rejection probabilities for each of the methods considered. Panel A shows the results for the DGP based on
  \cite{bursztyn_misperceived_2020} and Panels B-D show the results for the DGPs
  based on \cite{baranov_maternal_2020}, with the binary grandmother mediator, under
  different numbers of clusters. In Panels B-D, we use a discretization of the outcome into 5 bins for all tests except the K test. Rejection probabilities are computed over 500
  simulation draws, under a 5\% nominal significance level.
  \end{tablenotes}
\end{threeparttable}
\end{table}

\begin{table}[!htbp]
  \centering
  \caption{Simulation results for non-binary $M$}
  \label{tab: main_sim_nonbinM}
  \begin{threeparttable}
    \begin{tabular}{@{\extracolsep{5pt}} cccccc} 
\\[-1.8ex]\hline 
\hline \\[-1.8ex] 
\\[-2.0ex] \multicolumn{6}{@{} l}{Panel A: Baranov et al, 40 clusters}
 \\
 \\[-1.5ex]
 & $\bar{\nu}$ LB & ARP & CS & FSSTdd & FSSTndd \\ 
\cline{3-6} \\[-2.0ex]
t=0 & $0$ & $0.052$ & $0.088$ & $0.274$ & $0.178$ \\ 
t=0.5 & $0.119$ & $0.066$ & $0.228$ & $0.438$ & $0.374$ \\ 
t=1 & $0.255$ & $0.166$ & $0.754$ & $0.864$ & $0.828$ \\ 
\\[-1.83ex] 
 \hline \\[-1.83ex]
\\[-2.0ex] \multicolumn{6}{@{} l}{Panel B: Baranov et al, 80 clusters}
 \\
 \\[-1.5ex]
 & $\bar{\nu}$ LB & ARP & CS & FSSTdd & FSSTndd \\ 
\cline{3-6} \\[-2.0ex]
t=0 & $0$ & $0.066$ & $0.048$ & $0.188$ & $0.128$ \\ 
t=0.5 & $0.119$ & $0.066$ & $0.314$ & $0.582$ & $0.500$ \\ 
t=1 & $0.255$ & $0.164$ & $0.962$ & $0.994$ & $0.990$ \\ 
\\[-1.83ex] 
 \hline \\[-1.83ex]
\\[-2.0ex] \multicolumn{6}{@{} l}{Panel C: Baranov et al, 200 clusters}
 \\
 \\[-1.5ex]
 & $\bar{\nu}$ LB & ARP & CS & FSSTdd & FSSTndd \\ 
\cline{3-6} \\[-2.0ex]
t=0 & $0$ & $0.046$ & $0.026$ & $0.144$ & $0.108$ \\ 
t=0.5 & $0.119$ & $0.076$ & $0.542$ & $0.862$ & $0.824$ \\ 
t=1 & $0.255$ & $0.286$ & $1$ & $1$ & $1$ \\ 
\\[-2.0ex]
\hline \\[-1.8ex] 
\end{tabular} 

    \begin{tablenotes}[flushleft]\footnotesize
    \item\emph{Notes}: This table contains simulation results for the DGPs where
      we have a non-binary mediator. The first column shows the value of $t$,
      which determines the distance from the null, as described in the main
      text. The second column shows the lower-bound on the fraction of always-takers affected by treatment, $\bar{\nu}$. The remaining columns contain the rejection probabilities for
      each of the inference methods considered. Each panel contains results for
      the DGPs based on \cite{baranov_maternal_2020}, where the non-binary relationship-quality
      mediator is considered, for different numbers of clusters. All tests use a discretization of the outcome based on 5 bins. Rejection
      probabilities are computed over 500 simulation draws, under a 5\% nominal significance level.
    \end{tablenotes}
  \end{threeparttable}
\end{table}

\paragraph{Simulation results.} \Cref{tab: main_sim_binM} reports the results for simulations
designs where we have a binary mediator. This includes the DGP based on
\cite{bursztyn_misperceived_2020} (Panel A), and the DGPs that are based on
\cite{baranov_maternal_2020} where the considered mediator is the binary
indicator for the presence of a grandmother (Panels B-D). \Cref{tab: main_sim_nonbinM} shows results calibrated to \citet{baranov_maternal_2020} using the non-binary relationship quality variable as the mediator. Both tables show the rejection probabilities for each of the methods described above under different simulation designs. To quantify the magnitude of the violations of the sharp null, the table also reports the lower-bound on the fraction of always-takers affected ($\bar{\nu}$).\footnote{For the simulations calibrated to \citet{baranov_maternal_2020} with multi-valued $M$, we compute the lower bound on $\bar{\nu}$ in the same way as described in \Cref{fn: monotonicity violated} in the application section below, which deals with the fact that the empirical distribution shows a small (but statistically insignificant) violation of monotonicity.}

We first evaluate size control. Recall that DGPs with $t=0$ impose the sharp null of full mediation. Across nearly all simulation designs, we find that the ARP, CS, and K tests have close to nominal size, with rejection probabilities no larger than 9\% for a 5\% test. The one notable exception is the simulations in Panel B of \Cref{tab: main_sim_binM}, where there are only 40 independent clusters, in which case CS is somewhat over-sized, with a null rejection probability of 0.15. Doubling the number of clusters to 80 (Panel C) restores approximate size control, however. We find that the FSST tests often have reasonable size control for settings with a large number of independent observations or clusters, but can be substantially over-sized in settings with a small or moderate number of clusters using the two default choices of tuning parameters, particularly with multi-valued $M$ (e.g. rejection probabilities of 0.274 and 0.178 in \Cref{tab: main_sim_nonbinM}, Panel A).

We next evaluate power, focusing on the simulations with $t=0.5$ and $t=1$ under which the null is violated. Across all of the simulation designs, the CS test has power similar to or greater than that of ARP. The differences can be substantial in some cases, particularly with multi-valued $M$ (e.g. power of $0.96$ vs $0.16$ in Panel B of \Cref{tab: main_sim_nonbinM}). Likewise, the power of the FSST tests is similar to or exceeds that of the CS test across nearly all simulation designs, although this comparison must be taken with some caution in cases where the FSST test appears to be over-sized. Finally, we note that in all of the simulations with binary $M$ (\Cref{tab: main_sim_binM}), the power of the three moment inequality tests (ARP, CS, FSST) is either similar to or exceeds that of the K test. This is the case both when the outcome is binary (Panel A), and when the outcome is continuous (Panels B-D). Recall that when the outcome is continuous, the moment inequality tests use a discretization of the outcome to 5 bins, whereas the K test does not use a discretization. The favorable power comparisons in Panels B-D thus suggest that discretization does not come at a large loss of power in this simulation design, although of course this conclusion may be specific to the particular DGP studied here.

\paragraph{Choice of bins.} In \Cref{tab: app_sim_baranov_binM,tab: app_sim_baranov_nonbinM}, we present results for simulations calibrated to \citet{baranov_maternal_2020} using a discretization with 2 or 10 bins, rather than the 5 considered above. The comparisons of size control and power across the methods are similar to the results reported above. However, increasing the number of bins from 5 to 10 exacerbates the size control issues seen with CS in the simulations based on \citet{baranov_maternal_2020} with only 40 clusters and binary $M$, while decreasing the number of bins to 2 improves size control. This is intuitive, since the number of moments used increases with the bin size, and thus we expect the quality of the central limit theorem approximation to be worse with more bins. In \Cref{tab: cell_count_nonbinM}, we report the median number of independent observations (unique clusters) per $(Y^{disc},M,D)$ cell in each of the simulation designs. We find that CS exhibits close to nominal coverage in all specifications with 15 or more independent observations per cell, but exhibits moderate size distortions in several (but not all) of the specifications with fewer than 15 observations per cell. In terms of power, we do not find an obvious pattern across bin sizes, with power increasing in the number of bins for some tests/DGPs and decreasing for others. This reflects the fact that although the testable implications become sharper the more bins are used (see \Cref{rmk: discretizing y}), the finite-sample power of moment inequality methods often decreases when increasing the number of moments. Based on our simulations, we heuristically recommend that researchers should try to have at least 15 independent observations per cell, acknowledging that there is a potential tradeoff between size control and power. This heuristic roughly aligns with that in \citet{andrews_cmi_2013}, who recommend having 10-20 observations per cell in settings with conditional moment inequalities.

\paragraph{Choice of test.} Based on our simulations, CS strikes us a reasonable default choice for most empirical settings, given that it has approximate size control across most of our simulation designs and favorable power relative to ARP. However, ARP performs somewhat better in terms of size control in settings with a small number of clusters, and thus may be an attractive alternative for researchers concerned about size control in such settings, albeit at the loss of some power (particularly with multi-valued $M$). Likewise, FSST may offer power improvements relative to CS in settings with a large number of independent observations, so that size control is less of a concern. In our applications below, we report results for CS in the main text; analogous results for ARP and FSST are given in \Cref{tbl:pvals}.

\section{Extension to Non-experimental Settings}
\label{sec: nonexperimental}

Our results so far have relied on the assumption that the treatment is as good
as randomly assigned (\Cref{asm: independence}). The role of randomization was
simply to identify the distribution of potential outcomes and potential
mediators under each treatment: randomization ensures that the observable
distribution $(Y,M) \mid D=d$ corresponds to the distribution of
$(Y^{\text{tot}}(d), M(d))$, where $Y^{\text{tot}}(\cdot) := Y(\cdot, M(\cdot))$. In this section, we show that analogous results go through if the distributions of $(Y^{\text{tot}}(d),M(d))$ are identified through some other strategy. One simply substitutes the expressions involving $(Y,M) \mid D=d$ in our earlier results with the formulas for $(Y^{\text{tot}}(d), M(d))$ under the alternative identifying assumptions. We first provide a general result extending our results to the case where $(Y^{\text{tot}}(d), M(d))$ is identified, then discuss how this applies to the common settings of instrumental variables, conditional unconfoundedness, and difference-in-differences.

To be more precise, define 
$$\Delta_k^*(A) := P(Y^{\text{tot}}(1) \in A, M(1) =m_{k}) - P(Y^{\text{tot}}(0) \in A, M(0) =m_{k})$$
\noindent to be the treatment effect on the compound outcome $1\{Y \in A, M=m_{k}\}$. Note that under randomization, $\Delta_k^*(A) = \Delta_k(A)$. Likewise, define 
\begin{equation}
  \Theta_I^* := \{ \tilde\theta^* \in R \,:\, \text{ for all } k=0,...,K-1, \hspace{.2cm} \textstyle\sum_{l} \tilde\theta^*_{lk} = P(M(1)=m_{k}), \sum_{l} \tilde\theta^*_{kl} = P(M(0)=m_{k}) \} \label{eqn: defn of ThetaI*}  
\end{equation}
\noindent and observe that $\Theta_I^* = \Theta_I$ under randomization. We then have the following result, analogous to \Cref{prop: testable implications for tv}, which gives lower bounds on the $\nu_k$ in terms of probabilities involving $(Y^{\text{tot}}(d),M(d))$. 

\begin{prop} \label{prop: lower bounds on tv - po version}
\leavevmode
    \begin{enumerate}
    \item 
    Suppose \Cref{asm: restricted theta} holds. Then the true shares $\theta$ satisfy
    \begin{align}
    \theta_{kk} \nu_k \geq \left( \sup_A \Delta_k^*(A)  - \sum_{l: l \neq k} \theta_{lk} \right)_+  \label{eqn: lower bound tv - true theta*}
    \end{align}
    
    \noindent for $k=0,...,K-1$. Since $\theta \in \Theta_I^*$, it follows that there exists some $\tilde\theta^* \in \Theta_I^*$ such that
    \begin{equation}
        \tilde\theta^*_{kk} \nu_k \geq \left( \sup_A \Delta_k^*(A)  - \sum_{l: l \neq k} \tilde\theta^*_{lk} \right)_+ =: \eta_k  \label{eqn: lower bound tv tildetheta*}
    \end{equation}
    \noindent for all $k = 0,...,K-1$, and hence $\nu_k \geq \inf \{\nu_k \geq 0 \,:\, \exists \tilde\theta^* \in \Theta^* \text{ s.t. } \eqref{eqn: lower bound tv tildetheta*} \text{ holds}\}.$ 

    \item
    The bound in \eqref{eqn: lower bound tv tildetheta*} is the sharp bound that only uses information from the marginals $(Y^{\text{tot}}(d),M(d))$ for $d=0,1$: if $\tilde\theta^* \in \Theta_I^*$, then there exists a distribution $P^*$ for $(Y(\cdot,\cdot),M(\cdot))$ that is consistent with the marginals of $(Y^{\text{tot}}(d),M(d))$ for $d=0,1$ such that $P^*(G=lk) = \tilde\theta_{lk}^*$ for all $l,k$, and
    \begin{equation}
      \tilde\theta_{kk}^* \cdot P^*(Y(1,m_{k}) \neq Y(0,m_{k}) \mid G = kk) = \eta_k \label{eqn: tv equality - po version}  
    \end{equation}
    \noindent for all $k$. Further, $P^*(Y(1,m) \neq Y(0,m) \mid G = lk) = 0$ if either $l \neq k$ or $m \not\in \{m_{l},m_{k}\}$. 
    \end{enumerate}
\end{prop}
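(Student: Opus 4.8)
The plan is to observe that this proposition is the potential-outcomes version of \Cref{prop: testable implications for tv}, of which the latter is the special case obtained under \Cref{asm: independence}. Randomization entered \Cref{prop: testable implications for tv} only through the identities $\Delta_k(A) = \Delta_k^*(A)$ and $\Theta_I = \Theta_I^*$; the substance of the argument --- both the lower bound and the sharp construction --- is a statement about the joint law of $(Y(\cdot,\cdot),M(\cdot))$ that never invokes $D \indep (Y(\cdot,\cdot),M(\cdot))$. So I would re-run the proof of \Cref{prop: testable implications for tv}, working directly with the potential-outcome quantities $\Delta_k^*(A)$ and $\Theta_I^*$.

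For Part 1, I would reproduce the sketch given after \Cref{cor: testable implications sharp null}, reading $\Delta_k^*(A)$ in place of $\Delta_k(A)$. Put $\tilde Y(d) := 1\{Y(d,M(d)) \in A,\, M(d) = m_k\}$ and $\tilde\tau := \tilde Y(1) - \tilde Y(0) \in \{-1,0,1\}$; then $\Delta_k^*(A) = E[\tilde\tau]$ holds by the definition of $\Delta_k^*$, with no appeal to randomization. Since $\tilde\tau = 1$ forces $M(1) = m_k$ (so $G = lk$ for some $l$) and, when $l = k$, additionally forces $Y(1,m_k) \neq Y(0,m_k)$, we get $\Delta_k^*(A) \leq P(G = kk,\, Y(1,m_k) \neq Y(0,m_k)) + \sum_{l \neq k} P(G = lk) = \theta_{kk}\nu_k + \sum_{l \neq k}\theta_{lk}$, and rearranging with $\theta_{kk}\nu_k \geq 0$ gives \eqref{eqn: lower bound tv - true theta*}. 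For the feasible version I would note that the true $\theta$ lies in $\Theta_I^*$: $\theta \in R$ by \Cref{asm: restricted theta}, and $\sum_l \theta_{lk} = P(M(1) = m_k)$, $\sum_l \theta_{kl} = P(M(0) = m_k)$ directly from the definitions of $\theta_{lk}$ and $M(\cdot)$ --- exactly the constraints in \eqref{eqn: defn of ThetaI*}. Taking $\tilde\theta^* = \theta$ yields \eqref{eqn: lower bound tv tildetheta*}, and the infimum characterization is then immediate.

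For Part 2, the strategy is to mirror the sharpness construction in the appendix proof of \Cref{prop: testable implications for tv}, with $(Y,M)\mid D = d$ replaced everywhere by the marginal law of $(Y^{\text{tot}}(d),M(d))$. Fix $\tilde\theta^* \in \Theta_I^*$ and, for each $k$, let $\mu_d^k(\cdot) := P(Y^{\text{tot}}(d) \in \cdot,\, M(d) = m_k)$, a sub-probability measure of total mass $P(M(d) = m_k)$. I would build $P^*$ type by type: assign mass $\tilde\theta^*_{lk}$ to $\{G = lk\}$; on an $lk$-type with $l \neq k$, declare $Y(1,m) = Y(0,m)$ for every $m$ (so this type has zero direct effect) and choose the law of the pair $(Y(m_l),Y(m_k))$ so the contributions aggregate correctly into $\mu_0^l$ and $\mu_1^k$; on the $k$-always-taker mass, split it into a part where $Y(1,m_k) = Y(0,m_k)$, obtained by coupling the common part $\mu_0^k \wedge \mu_1^k$ after the compliers have absorbed as much overlap as possible, and a residual part where $Y(1,m_k) \neq Y(0,m_k)$; and on every type set $Y(d,m)$ for $m \notin \{m_l, m_k\}$ equal to a fixed constant, giving $P^*(Y(1,m) \neq Y(0,m) \mid G = lk) = 0$ for such $m$. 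One then checks that $P^*$ reproduces the marginals of $(Y^{\text{tot}}(d),M(d))$, satisfies \Cref{asm: restricted theta} (immediate from $\tilde\theta^* \in R$), realizes $P^*(G = lk) = \tilde\theta^*_{lk}$, and attains \eqref{eqn: tv equality - po version}.

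The main obstacle is the bookkeeping in Part 2: showing that the residual mass of $k$-always-takers forced to carry a nonzero direct effect is \emph{exactly} $\eta_k$, i.e. that a coupling achieving equality exists. This is where the identity $\sup_A \Delta_k^*(A) = \int (d\mu_1^k - d\mu_0^k)_+$ (a one-sided total-variation distance between sub-probability measures) does the work: the overlap $\mu_0^k \wedge \mu_1^k$ can always be coupled with zero direct effect, and one first routes excess mass through the $lk$-compliers and only then through the diagonal of the $k$-always-takers, so that precisely $(\sup_A \Delta_k^*(A) - \sum_{l \neq k}\tilde\theta^*_{lk})_+$ cannot be absorbed. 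Since this is verbatim the randomized case, I would invoke the appendix proof of \Cref{prop: testable implications for tv} under the substitution above rather than re-deriving it.
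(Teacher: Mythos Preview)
Your substantive argument is correct and essentially matches the paper, but you have the logical dependency backwards. In the appendix the authors explicitly state that they prove \Cref{prop: lower bounds on tv - po version} \emph{first} and then derive \Cref{prop: testable implications for tv} as an immediate corollary (under \Cref{asm: independence}, $\Delta_k = \Delta_k^*$ and $\Theta_I = \Theta_I^*$). So when you propose to ``invoke the appendix proof of \Cref{prop: testable implications for tv}'' for the sharpness construction, you are pointing at a proof that consists of nothing more than ``apply \Cref{prop: lower bounds on tv - po version}.'' The detailed construction you want to cite lives in the proof of \emph{this} proposition, not the other one.

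On content: your Part~1 argument via $\tilde\tau \in \{-1,0,1\}$ is valid and is the informal sketch from the main text; the paper's formal appendix proof takes a slightly different route, decomposing $P(Y^{\text{tot}}(d)\in A, M(d)=m_k)$ by the law of total probability, bounding to get $\theta_{kk}\,TV_{kk} \ge \sup_A \Delta_k^*(A) - \sum_{l\neq k}\theta_{lk}$, and then invoking the Wasserstein-0 characterization of total variation to show $\nu_k \ge TV_{kk}$. Both arrive at the same inequality. For Part~2, your outline (couple on the overlap $\mu_0^k \wedge \mu_1^k$, absorb excess through compliers first, leave the residual $\eta_k$ on the $k$-always-takers' diagonal) is the right picture, but the paper does not leave this as a sketch: it proves an auxiliary lemma reducing the problem to exhibiting conditional densities $f^*_{Y(d,k)\mid G}$ with the right marginals and the right $TV_{kk}$, and then constructs those densities explicitly in four exhaustive cases ($\tilde\theta_{kk}^*=0$; $\eta_k>0$ with $f_{\min}=0$; $\eta_k>0$ with $f_{\min}>0$; $\eta_k=0$). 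Your proposal would need to carry out that bookkeeping directly rather than defer to a proof that, in this paper, does not exist independently.
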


We likewise obtain the following corollary regarding the sharp null of full mediation, analogous to \Cref{cor: testable implications sharp null}. 

\begin{cor} \label{cor: testing sharp null - po version}
\leavevmode
\begin{enumerate}
    \item 
    Suppose \Cref{asm: restricted theta} holds. If the sharp null of full mediation is satisfied, then there exists $\tilde\theta^* \in \Theta_I^*$ such that
    \begin{align}
     \sup_A \Delta_k^*(A)  \leq \sum_{l: l \neq k} \tilde\theta^*_{lk}  \label{eqn: testable implications sharp null - po version}
    \end{align}
   for $k=0,...,K-1$. 

   \item
   The testable implication in \eqref{eqn: testable implications sharp null - po version} is the sharp implication using only the marginals of $(Y^{\text{tot}}(d),M(d))$ for $d=0,1$: if there exists a $\tilde\theta^* \in \Theta_I^*$ such that \eqref{eqn: testable implications sharp null - po version} holds for all $k$, then there exists a distribution $P^*$ for $(Y(\cdot,\cdot),M(\cdot))$ consistent with the marginals for $(Y^{\text{tot}}(d),M(d))$ and the restriction that $\theta \in R$ such that the sharp null of full mediation holds.    
\end{enumerate}

\end{cor}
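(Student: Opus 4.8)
The plan is to obtain both parts of \Cref{cor: testing sharp null - po version} as the $\nu_k = 0$ specialization of \Cref{prop: lower bounds on tv - po version}, in exact parallel with how \Cref{cor: testable implications sharp null} was deduced from \Cref{prop: testable implications for tv} in the randomized case; no construction beyond that proposition is needed. For part 1, I would note that under the sharp null of full mediation $Y(1,m_k) = Y(0,m_k)$ almost surely, so $\nu_k = 0$ whenever it is defined and, in any event, the product $\theta_{kk}\nu_k$ entering the bound in part 1 of \Cref{prop: lower bounds on tv - po version} equals zero. That bound, $\theta_{kk}\nu_k \geq \big(\sup_A \Delta_k^*(A) - \sum_{l \neq k}\theta_{lk}\big)_+$, then forces $\sup_A \Delta_k^*(A) \leq \sum_{l \neq k}\theta_{lk}$ for every $k$; since the true $\theta$ lies in $\Theta_I^*$, taking $\tilde\theta^* = \theta$ furnishes the required witness.

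For the sharpness claim, I would start from a $\tilde\theta^* \in \Theta_I^*$ with $\sup_A \Delta_k^*(A) \leq \sum_{l \neq k}\tilde\theta^*_{lk}$ for all $k$, which is precisely the statement that $\eta_k = 0$ for every $k$ in the notation of \eqref{eqn: lower bound tv tildetheta*}. Applying part 2 of \Cref{prop: lower bounds on tv - po version} to this $\tilde\theta^*$ produces a distribution $P^*$ for $(Y(\cdot,\cdot), M(\cdot))$ consistent with the marginals of $(Y^{\text{tot}}(d), M(d))$, $d = 0,1$, with $P^*(G = lk) = \tilde\theta^*_{lk}$ (so the constraint $\theta \in R$ is automatic because $\tilde\theta^* \in \Theta_I^* \subseteq R$), with $\tilde\theta^*_{kk}\, P^*(Y(1,m_k) \neq Y(0,m_k) \mid G = kk) = \eta_k = 0$, and with $P^*(Y(1,m) \neq Y(0,m) \mid G = lk) = 0$ whenever $l \neq k$ or $m \notin \{m_l, m_k\}$. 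I would then verify the sharp null under $P^*$ by fixing $m$ and decomposing $P^*(Y(1,m) \neq Y(0,m))$ over types $G = lk$: the $l \neq k$ terms vanish by the last property, and for $l = k$ the only possibly nonzero contribution is at $m = m_k$, where $P^*(G = kk)\, P^*(Y(1,m_k) \neq Y(0,m_k) \mid G = kk) = \tilde\theta^*_{kk}\, P^*(Y(1,m_k) \neq Y(0,m_k) \mid G = kk) = 0$ irrespective of the sign of $\tilde\theta^*_{kk}$. Hence $P^*(Y(1,m) \neq Y(0,m)) = 0$ for all $m$, i.e. the sharp null holds under $P^*$.

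I do not expect a genuine obstacle: essentially all the content sits in \Cref{prop: lower bounds on tv - po version}, whose proof is in turn the adaptation of the randomization-based argument behind \Cref{prop: testable implications for tv}, replacing $(Y,M) \mid D = d$ throughout by the identified marginals of $(Y^{\text{tot}}(d), M(d))$ and $\Theta_I$ by $\Theta_I^*$. The only points requiring mild care are the $l = k$, $m = m_k$ bookkeeping in the sharpness step --- using the equality $\tilde\theta^*_{kk} P^*(\cdot \mid G = kk) = 0$ rather than dividing by $\tilde\theta^*_{kk}$, so the argument is uniform in whether $\tilde\theta^*_{kk} > 0$ --- and recording the inclusion $\Theta_I^* \subseteq R$ that makes $P^*$ respect the type-share restriction.
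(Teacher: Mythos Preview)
Your proposal is correct and follows essentially the same route as the paper: both parts are obtained by specializing \Cref{prop: lower bounds on tv - po version} to $\nu_k=0$, with part 1 reading off the inequality from the lower bound and part 2 invoking the sharpness statement and then summing $P^*(Y(1,m)\neq Y(0,m)\mid G=lk)\,P^*(G=lk)$ over types. Your treatment is slightly more explicit than the paper's in two minor respects --- you take $\tilde\theta^*=\theta$ directly rather than citing the feasible bound, and you spell out why the product form avoids dividing by $\tilde\theta^*_{kk}$ --- but these are expository refinements, not a different argument.
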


\Cref{prop: lower bounds on tv - po version} and \Cref{cor: testing sharp null - po version} show that we can bound the fraction of always-takers affected by treatment and test the sharp null so long as the marginal distributions of $(Y^{\text{tot}}(d),M(d))$ are identified. We now outline several settings where these distributions are identified, possibly for some sub-population of interest (e.g. for compliers with respect to an instrument).

\paragraph{Instrumental variables.} Suppose rather than $D$ being randomly assigned, we have a valid binary instrument $Z \in \{0,1\}$ for $D$. For example, in an experiment with imperfect compliance, $Z$ could be the randomized treatment assignment, and $D$ the realized treatment take-up. Suppose that $Z$ satisfies the standard instrument-monotonicity, relevance, exclusion, and independence assumptions \citep{imbens_identification_1994}. To be precise, we assume that $D=D(Z)$, where $D(1) \geq D(0)$ (a.s.) and $P(D(1) > D(0)) > 0$, and that $(Y,M) = (Y(D(Z),M(D(Z))), M(D(Z)))$ for $Z \indep (Y(\cdot,\cdot),M(\cdot),D(\cdot))$. These assumptions allow us to identify the LATE of $D$ on $Y$, i.e. the treatment effect of $D$ on $Y$ for instrument-compliers.\footnote{We use the phrase ``instrument-compliers'' to refer to compliers with respect to the instrument, i.e. with $D(1) > D(0)$, in contrast to the use of ``compliers'' elsewhere in the paper, which refers to individuals with $M(1) \neq M(0)$.} They likewise allow us to identify the LATE of $D$ on $M$. We might then be interested in the extent to which the effect of $D$ on $Y$ for instrument-compliers operates through $M$.  To apply the results in \Cref{prop: lower bounds on tv - po version} and \Cref{cor: testing sharp null - po version}, we need to identify the distributions of $(Y^{\text{tot}}(d),M(d))$ for instrument-compliers. It is well-known, however, that the distributions of potential outcomes for instrument-compliers are non-parametrically identified (see, e.g., \citealp{abadie_semiparametric_2003}). In particular, if we define $C^z = 1\{ D(1) > D(0)\}$ to be an indicator for being an instrument-complier, then $P( Y^{\text{tot}}(1) \in A, M(1) =m_{k} \mid C^z =1 )$ is identified as  \begin{equation}
    \frac{ E[D \cdot 1\{Y \in A, M=m_{k}\} \mid Z = 1 ] - E[D \cdot 1\{Y \in A, M=m_{k}\} \mid Z = 0 ]   }{ E[D \mid Z=1] - E[D \mid Z=0]} . \label{eqn: id in iv}
\end{equation}
In words, the probability that $Y^{\text{tot}}(1) \in A$ and $M(1) =m_{k}$ for instrument-compliers corresponds to the population IV estimand using the compound outcome $D \cdot 1\{Y \in A, M=m_{k} \}$. We can likewise identify $P(Y^{\text{tot}}(0) \in A, M(0) =m_{k})$ using the population IV estimand with the compound outcome $-(1-D) \cdot 1\{Y \in A, M=m_{k} \}$.\footnote{Note that the instrument exclusion restriction implies that $Z$ affects $Y$ only through $D$, and the sharp null implies that $D$ affects $Y$ only through $M$. Hence, under the sharp null, $Z$ affects $Y$ only through $M$. A simple approach to testing the sharp null in IV settings is thus to apply the results in \Cref{cor: testable implications sharp null} for experiments, relabeling the randomized treatment as $Z$ and ignoring the endogenous take-up $D$. In Section \ref{appendix sec: IV} we show that this is equivalent to applying \Cref{cor: testing sharp null - po version} when one imposes monotonicity of $M(d)$ in $d$, but does not exhaust all of the information in the data without this monotonicity condition.}

\paragraph{Conditional unconfoundedness.} Suppose that $D$ is as good as randomly assigned conditional on observable characteristics, $D \indep (Y(\cdot,\cdot), M(\cdot)) \mid X$. Under the overlap condition that $\eta < E[D \mid X] < 1-\eta$ for some $\eta>0$, the distributions of $(Y^{\text{tot}}(d),M(d))$ are non-parametrically identified by re-weighting the observed outcomes by the propensity score $p(X) := E[D \mid X]$,
\begin{align*}
& P(Y^{\text{tot}}(1) \in A, M(1) =m_{k}) = E\left[ \frac{D}{p(X)} 1\{ Y \in A, M =m_{k}\}  \right] \\
& P(Y^{\text{tot}}(0) \in A, M(0) =m_{k}) = E\left[ \frac{1-D}{1-p(X)} 1\{ Y \in A, M =m_{k}\}  \right] . \numberthis \label{eqn: id ipw}
\end{align*}

\noindent Hence, one can apply the results in \Cref{prop: lower bounds on tv - po version} and \Cref{cor: testing sharp null - po version} to obtain lower-bounds on the fraction of always-takers affected by the treatment, and to test the sharp null.$\blacktriangle$

\paragraph{Difference-in-differences.} Consider a two-period setting where no units are treated in the first period and units with $D=1$ are treated in the second period. Under a parallel trends assumption for $Y^{\text{tot}}(0)$, we can identify $E[Y_2^{\text{tot}}(1) - Y_2^{\text{tot}}(0) \mid D=1]$, the average treatment effect on the treated (ATT) in period 2 (where the 2 subscript denotes the second time period). Likewise, under a parallel trends assumption for $M(0)$, we can identify $E[M_2(1) - M_2(0) \mid D=1]$, the ATT on $M_2$. We may then be interested in the extent to which the ATT for $Y_2$ is driven by the effect of the treatment on $M_2$. However, the standard parallel trends assumptions for $Y^{\text{tot}}(0)$ and $M(0)$ identify only the counterfactual \emph{means} for the treated group and not the counterfactual \emph{distributions}, as would be required to apply the results in \Cref{prop: lower bounds on tv - po version} and \Cref{cor: testing sharp null - po version}. However, several papers have developed extensions of the standard difference-in-differences approach that allow one to infer the full counterfactual distribution for the treated group \citep{athey_identification_2006, callaway_quantile_2019, roth_when_2023}. These approaches could be applied to identify the distributions of $(Y^{\text{tot}}_{2}(d), M_2(d)) \mid D=1$, which could then be used in conjunction with \Cref{prop: lower bounds on tv - po version} and \Cref{cor: testing sharp null - po version} to examine the extent to which the ATT on $Y_2$ is driven by the effect on $M_2$. $\blacktriangle$ \\ %

The approach to inference described in \Cref{sec: inference} naturally extends to these settings as well. In \Cref{sec: inference}, we considered inference based on a vector of estimates $\hat{p}$, where each element of $\hat{p}$ corresponded to an estimate of a probability of the form $P(Y^{\text{tot}}(d) \in A, M=m_{k})$ under the assumption of randomly assigned treatment. To test the sharp null under the identifying strategies described above, one simply replaces the $\hat{p}$ in \Cref{sec: inference} with analogous estimates of $P(Y^{\text{tot}}(d) \in A, M=m_{k})$ derived under the alternative identifying assumptions. For example, in IV settings we could use two-stage least squares estimates based on the sample analog to the identification result in \eqref{eqn: id in iv}; under conditional unconfoundedness, we could use inverse-probability weighting estimates based on sample analogs to equation \eqref{eqn: id ipw} (one could likewise use outcome-modeling or doubly-robust methods); and in difference-in-differences settings we could use estimates obtained from any of the distributional difference-in-differences approaches mentioned above. 

We note that the approach introduced in this section uses only the information about the implied marginal distributions of $(Y^{\text{tot}}(d),M(d))$. Under conditional unconfoundedness, one could potentially use information on the conditional distributions $(Y^{\text{tot}}(d),M(d)) \mid X$.\footnote{The same applies to conditional exogeneity or conditional parallel trends in the IV and difference-in-differences settings, respectively.} Indeed, if treatment is as-good-as-randomly assigned conditional on $X$, then the implications of the sharp null in \Cref{cor: testable implications sharp null} should hold conditional on $X$ (almost surely). If $X$ is discrete and takes on a small number of values (e.g. an indicator for gender), then it is straightforward to apply the testable content separately for each possible value of $X$, and inference can be conducted by simply stacking the moment inequalities for each value of $X$. When $X$ is continuously distributed, however, fully exploiting the covariates becomes complicated since it requires estimating conditional distributions and involves the infinite-dimensional conditional type shares, $\theta(X)$. The approach we described here is simpler, since it does not involve estimating full conditional distributions and has a finite-dimensional parameter $\theta$. However, it may be conservative since it does not exploit all the information available.  \citet{carr_testing_2023} and \citet{farbmacher_instrument_2022} propose methods for testing IV validity conditional on covariates in the setting with binary endogenous treatments, which can be applied off-the-shelf in the setting in \Cref{sec: binary case} with binary $M$ and monotonicity. Whether these approaches can be extended to the setting with multi-valued $M$ and relaxations of monotonicity is an interesting question for future research.

\section{Empirical Applications}\label{sec: applications}

\subsection{\citet{bursztyn_misperceived_2020} revisited}

We now revisit our application to \citet{bursztyn_misperceived_2020} from \Cref{sec: binary case}. Recall that our treatment $D$ is random assignment to an information treatment about other men's beliefs about women working outside the home, $M$ is sign-up for the job-search service, and $Y$ is an indicator for whether the wife applies for jobs outside of the home. For our main specification, we restrict attention to the majority of men who at baseline under-estimate other men's beliefs, so that the monotonicity assumption that treatment weakly increases job-search service is plausible. (We find similar results when including all men; see \Cref{sec:empirical appendix}.)

\paragraph{Statistical significance.} Recall from \Cref{fig:bursztyn_bin} that the testable implications of the sharp null were rejected based on the empirical distribution. Using the approach to inference described above, we find these violations are in fact statistically significant, with a $p$-value of $0.02$
\unskip using the CS test.\footnote{Since the outcome is binary, no discretization is needed for this application. The $p$-value reported here is the smallest value of $\alpha$ for which the test rejects.} (We obtain similar results using the other tests; see \Cref{tbl:pvals}.) The data thus provides strong evidence that the impact of the information treatment on long-run labor market outcomes does not operate solely through the sign-up for the job-search service. In particular, there are some never-takers who would not sign up for the service under either treatment who are nevertheless induced to apply for jobs by the treatment. We thus see that, for at least some people, the information treatment has meaningful impact outside of the lab, beyond its impact on job-search service sign-up. 

\paragraph{Magnitudes of alternative mechanisms.} How large are the effects of the information treatment for those who are not induced to sign-up for the job-search service? \Cref{prop: testable implications for tv} gives us a lower bound on the fraction of the always-takers/never-takers who are affected by the treatment despite having no effect on job-search service signup. Our estimates of the lower bounds suggest that at least \unskip percent of ``never-takers'' who would not be signed up for the job-search service under either treatment are nevertheless affected by the treatment. (We obtain a trivial lower-bound of 0 for the ``always-takers''.) Applying the results in \Cref{prop: lee bounds with unknown theta}, we also estimate lower and upper bounds on the average effect for these never-takers of $0.11$
\unskip to $0.18$
\unskip.\footnote{In this simple setting with a binary outcome, the lower bound for the average effect corresponds exactly to our lower bound on the fraction of always-takers affected.} For comparison, our estimate of the overall average treatment effect is \unskip. The effect for never-takers is thus of a fairly similar magnitude to that of the total population, despite the fact that they have no change in job-search service signup. If we were willing to assume that the direct effects (i.e. effects not through the job-search service) were similar between always-takers, never-takers, and compliers (granted, a strong assumption), this would imply that the majority of the total effect operates through the information treatment. 

\paragraph{Robustness to monotonicity violations.} Our baseline results impose the monotonicity assumption that receiving the information that other men are more open to women working than one initially thought only increases job-search service sign-up. This could be violated if, for example, there is measurement error in the initial elicitation of beliefs, so that some men included in our sample actually initially over-estimated other men's beliefs. To explore robustness to violations of the monotonicity assumption, we re-compute our bounds on the fraction of never-takers affected allowing for up to $\bar{d}$ fraction of the population to be defiers. We find that the estimated lower-bound is positive for $\bar{d}$ up to $0.07$
\unskip, which corresponds to $7$
\unskip\% of the population being defiers, or put otherwise, $0.33$
\unskip defiers for every complier. 

\subsection{\citet{baranov_maternal_2020}}

We next examine the setting of \citet{baranov_maternal_2020}. They present long-run results on an RCT that randomized access to a cognitive behavioral therapy (CBT) program intended to reduce depression for pregnant women and recent mothers. In a seven-year followup, they find that the program substantially reduced depression and increased measures of women's financial empowerment, such as having control over finances and working outside of the home. They are then interested in the mechanisms by which treating depression increases financial empowerment. They therefore examine a variety of intermediate outcomes. Two of the outcomes for which they find positive effects of the treatment are the presence of a grandmother in the household (a proxy for family support) and the women's self-reported relationship quality with the husband (on a 1-5 scale). They write (p. 849):
\begin{quote}
\small{
  These results suggest that improved social support within the household, either through a relationship with the husband or asking grandmothers for help, might be a mechanism underlying the effectiveness of this CBT intervention. }
\end{quote}
\noindent The tools developed above allow us to test the completeness of these conjectured mechanisms. Can the presence of a grandmother or improved relationship quality, either individually or together, explain the impact on financial empowerment, or must there be other mechanisms at play as well? We begin by analyzing each of these mechanisms separately, and then turn to studying the combination of the two. 

Since the outcome in \citet{baranov_maternal_2020} is a continuous index, we rely on a discretization using 5 bins, which we found in our Monte Carlo simulations calibrated to this application led to a reasonable tradeoff between size control and power (although with moderate size distortions for the setting with binary $M$). This also roughly aligns with our heuristic for choosing the bin size in the setting with binary $M$, as it yields a median cell count of 14. In the setting with non-binary $M$, this leads to a median cell count of 8, which is somewhat below our heuristic threshold of 15; using 2 bins would deliver a count of 15. Nevertheless, in the main text we present results using 5 bins to maintain consistency in the outcome variable across different specifications, and because the Monte Carlo results suggest that this choice performs well in this application with multi-valued $M$ despite the small cell size. This also gives the $\nu_k$ naturally-interpretable units as the fraction of always-takers whose outcome changes quintile when treated. In \Cref{tbl:pvals-bins}, we find qualitatively similar results using 2 or 10 bins. 

\paragraph{Grandmother mechanism.} We first examine whether the effects of the intervention can be explained through the binary mechanism of whether a grandmother is present in the household (measured at the 7-year follow-up).  \Cref{fig: grandma} shows estimates of $P(Y=y,M=0 \mid D=d)$ for both $d=1$ and $d=0$, similar to \Cref{fig:bursztyn_bin} for our previous application. If one imposes monotonicity, then as derived in \Cref{sec: binary case} we should have that $P(Y=y, M=0 \mid D=1) \leq P(Y=y, M=0 \mid D=0)$ for all values of $y$. As shown in the figure, however, this inequality appears to be violated at large values of $y$, suggesting that the outcome for some never-takers improved when receiving the treatment. These violations of the sharp null are statistically significant (CS $p=$ $0.02$
\unskip). Our estimates of the lower bound derived in \Cref{prop: testable implications for tv} imply that at least $19$
\unskip percent of never-takers are affected by the treatment. Thus, we can reject that the entirety of the treatment effect operates through increased grandmother presence in the home.\footnote{Specifically, we can reject that the effect operates through increasing \emph{long-run} grandmother presence, as measured at the 7-year follow-up. The results are less conclusive using the presence of a grandmother at the 1-year follow-up: we obtain $p=$$0.19$
\unskip, although the point estimates suggest that $14$
\unskip percent of never-takers are affected by treatment.} These conclusions rely on the monotonicity assumption that receiving CBT weakly increases the presence of the grandmother; this could be violated if, for example, some grandmothers were present when the mother was struggling but decided they were no longer needed as the mother improved. As before, we can explore robustness to allowing for defiers: our estimated lower bounds on the fraction of never-takers affected remain positive unless we allow for at least $11$
\unskip percent of the population to be defiers, or equivalently, $0.51$
\unskip defiers per complier. 

\paragraph{Relationship quality mechanism.} We next examine relationship quality (as of the 7-year follow-up) as the mechanism, which is measured on a 1-5 scale. We can thus apply the methods for multi-valued $M$ developed in \Cref{sec: general theory}. Under the monotonicity assumption that CBT improves the relationship with the husband, we reject the sharp null using CS ($p=$ $0.03$
\unskip); we obtain a point estimate of the lower bound on the fraction of always-takers affected (pooling across different values of $M$) of $10$
\unskip\%.\footnote{The monotonicity assumption requires that the population CDF of $M \mid D=1$ is everywhere smaller than the population CDF of $M \mid D=0$. This is satisfied at three of the four support points of the empirical distribution. However, the empirical CDF in the treated group is $0.015$
\unskip larger at $M=4$, although this difference is not statistically significant from zero ($p$=$0.75$
\unskip). Thus, the empirical distribution violates monotonicity, although we cannot reject that monotonicity holds in the population. To compute our estimate of the lower bound on the fraction of always-takers affected using the empirical distribution, we therefore allow for the minimum number of defiers compatible with the empirical distribution of $M \mid D$ (\unskip). We apply an analogous approach when considering the grandmother and relationship-quality mechanisms jointly (using the minimal relaxation of the elementwise version of monotonicity).\label{fn: monotonicity violated}} There is thus some evidence that the effect of CBT on financial empowerment does not operate entirely through improvements in relationship quality. The lower bound on the fraction of always-takers affected remains positive allowing for up to $8$
\unskip\% of the population to be defiers.

\begin{figure}
  \centering
  \includegraphics[width = 0.75\linewidth]{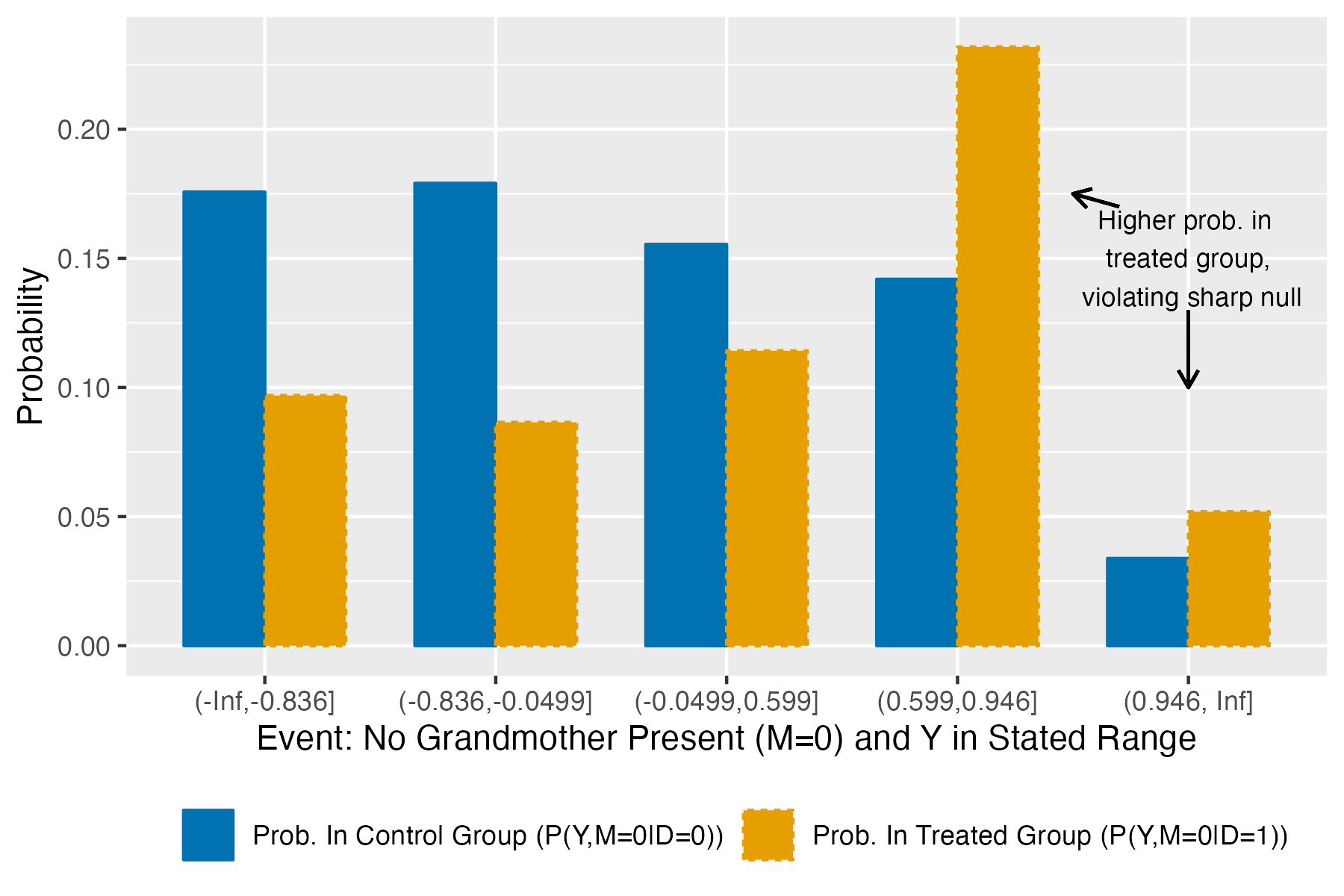}
  \caption{Testable Implications of the Sharp Null for the Grandmother Mediator in \citet{baranov_maternal_2020}}
  \label{fig: grandma}
  \floatfoot{Note: This figure shows testable implications of the sharp null of full mediation in \citet{baranov_maternal_2020}, similar to \Cref{fig:bursztyn_bin}. The mediator is presence of a grandmother in the home. The bars show estimates of probabilities of the form $P(Y^{disc} =y, M=0 \mid D=d)$, where $Y^{disc}$ is a discretization of the outcome (an index of mother's financial empowerment) into 5 bins. Under the sharp null of full mediation, we should have that $P(Y^{disc} =y, M=0 \mid D=0) \geq P(Y^{disc} =y, M=0 \mid D=1)$, but this appears to be violated for large values of $y$, as indicated with the black arrows.}
\end{figure}

\paragraph{Combinations of mechanisms.} Can the combination of the grandmother and relationship-quality mechanisms explain the improvement in financial empowerment? To evaluate this, we consider the case where $M$ is a vector containing both candidate mechanisms. Our estimated lower bound on the fraction of always-takers affected is $7$
\unskip\%. However, our test of the sharp null, imposing the monotonicity assumption that treatment increases each of the elements of $M$, is not statistically significant at conventional levels (CS $p = $ $0.65$
\unskip). Thus, while the point estimate suggests a moderate violation of the sharp null, we do not significantly reject the null hypothesis that the combination of these two mechanisms, which the authors interpret broadly as proxies for ``social support within the household'', can explain the effect of CBT on financial empowerment. This of course does not establish that no other mechanisms are at play, but rather that the data are statistically consistent with this null hypothesis at conventional levels.

\section{Conclusion} \label{sec: conclusion}

This paper develops tests for the ``sharp null of full mediation'' that the effect of a treatment $D$ on an outcome $Y$ operates only through a conjectured set of mediators $M$. A key observation is that when $M$ is binary, existing tools for testing the validity of the LATE assumptions can be used for testing the null. We develop sharp testable implications in a more general setting that allows for multi-valued and multi-dimensional $M$, and allows for relaxations of the monotonicity assumption. Our results also provide lower bounds on the size of the alternative mechanisms for always-takers. We illustrate the usefulness of these tests in two empirical applications. 

Future work might extend the analysis in this paper in several directions. First, our analysis focuses on the case where $M$ is discrete. Although one can discretize $M$ under the assumptions described in \Cref{rmk: binning M}, an interesting question for future work is whether one can impose alternative assumptions that allow for testing the sharp null directly when $M$ is continuous. One potentially fruitful direction is to explore whether methods for testing instrument validity with a continuous treatment \citep[e.g.][]{dhaultfoeuille_testing_2024} can be adapted to this setting. Second, our current analysis allows the potential outcomes to depend arbitrarily on $M$, and does not impose any assumptions on how $M$ is assigned. In some settings, however, it may be reasonable to restrict the magnitude of the effect of $M$ on $Y$, or to restrict the degree of endogeneity of $M$. Incorporating such restrictions may lead to sharper testable implications. Finally, it may be interesting to extend our results to settings with non-binary treatments.

\FloatBarrier 
\bibliography{Bibliography}

\newpage
\appendix 

\setcounter{page}{1}
 \renewcommand{\figurename}{Appendix Figure}
\setcounter{figure}{0}
\renewcommand{\tablename}{Appendix Table}
\setcounter{table}{0}
\noindent \textbf{For Online Publication}

\section{Proofs of Results in Main Text}
\label{sec: proofs}
We prove \Cref{prop: lower bounds on tv - po version} before \Cref{prop:
  testable implications for tv}, since \Cref{prop: testable implications for tv}
follows as a corollary to \Cref{prop: lower bounds on tv - po version}. In all
proofs presented in Appendix \ref{sec: proofs}, we relabel $m_{k}$ as $k$ for
notational convenience.
\paragraph{Proof of \Cref{prop: lower bounds on tv - po version}}

\paragraph{Proof of Part 1.}
By the law of total probability, we have that
\begin{align*}
P(Y^{\text{tot}}(1) \in A, M(1) = k) &= \sum_{l} P(M(0) =l,M(1)=k, Y(1,k) \in A)  \\
&=\sum_{l} P(M(0) =l,M(1)=k) \cdot P(Y(1,k) \in A \mid M(0) =l,M(1)=k) \\ 
&=\sum_{l} \theta_{lk} \cdot P(Y(1,k) \in A \mid G=lk)    
\end{align*}
 
\noindent and similarly,
$$P(Y^{\text{tot}}(0) \in A, M(0) = k) = \sum_{l} \theta_{kl} P(Y(0,k) \in A \mid G=kl). $$
\noindent Combining the previous two displays, it follows that
\begin{align*}
&\theta_{kk} \left( P(Y(1,k) \in A \mid G=kk) - P(Y(0,k) \in A \mid G=kk)  \right) \\
&= P(Y^{\text{tot}}(1) \in A, M(1) = k) - P(Y^{\text{tot}}(0) \in A, M(0) = k)  - \\
& \hspace{1cm} \sum_{l:l \neq k} \theta_{lk} P(Y(1,k) \in A \mid G=lk) + \sum_{l:l \neq k} \theta_{kl}  P(Y(0,k) \in A \mid G=kl)  \\
&\geq P(Y^{\text{tot}}(1) \in A, M(1) = k) - P(Y^{\text{tot}}(0) \in A, M(0) = k) - \sum_{l:l \neq k} \theta_{lk}
\end{align*}
\noindent where the inequality uses the fact that probabilities are bounded between $0$ and $1$. Taking a $\sup$ over Borel sets $A$, we obtain that
\begin{align*}
\theta_{kk} TV_{kk} &\geq \sup_A \left\{ P(Y^{\text{tot}}(1) \in A, M(1) = k) - P(Y^{\text{tot}}(0) \in A, M(0) = k) \right\} - \sum_{l: l \neq k} \theta_{lk}\\
&= \sup_A \, \Delta^*_k(A) - \sum_{l: l \neq k} \theta_{lk} ,
\end{align*}
where
$$TV_{kk} : = \sup_A \left\{ P(Y(1,k) \in A \mid G=kk) - P(Y(0,k) \in A \mid G=kk)  \right\}$$
\noindent is the total variation (TV) distance between $Y(1,k) \mid G=kk$ and $Y(0,k) \mid G=kk$. To establish the first claim, it thus suffices to show that $P(Y(1,k) \neq Y(0,k) \mid G=kk) \geq TV_{kk}$. Recall, however, that the TV distance is the Wasserstein-0 distance \citep[see, e.g.,][]{villani_optimal_2009}, and thus
$$TV_{kk} = \inf_{\substack{Q \text{ s.t. } \\ (Y(1,k),Y(0,k)) \sim Q \\ Y(1,k) \sim P_{Y(1,k) \mid G=kk} \\ Y(0,k) \sim P_{Y(0,k) \mid G=kk}} } E_Q[1\{Y(1,k) \neq Y(0,k)\}],$$
\noindent where $P_{Y(d,k) \mid G=kk}$ is the marginal distribution of $Y(d,k) \mid G =kk$. Since $E_Q[1\{Y(1,k) \neq Y(0,k)\}] = P_Q(Y(1,k) \neq Y(0,k))$, it follows (from the definition of the $\inf$) that $P(Y(1,k) \neq Y(0,k) \mid G=kk) \geq TV_{kk}$, which completes the proof of the first claim. The second claim is immediate from the fact that $\theta \in \Theta_I^*$ by construction.

\paragraph{Proof of Part 2.} Fix $\tilde\theta^* \in \Theta_I^*$. Since $M(d)$ has finite support, there exists a dominating, positive $\sigma$-finite measure $\mu$\footnote{Specifically, we can take $\mu(\cdot) = \sum_{m,d} \mu_{Y^{\text{tot}}(d) \mid M(d) =m}(\cdot)$, where $\mu_{Y^{\text{tot}}(d) \mid M(d) =m}$ is the probability measure for $Y^{\text{tot}}(d) \mid M(d) =m$ if $P(M(d) =m) > 0$ and zero otherwise. By construction $\mu$ is a positive $\sigma$-finite dominating measure, and hence the densities exist by the Radon-Nikodym theorem for $m,d$ such that $P(M(d) =m) >0$. For $m,d$ such that $P(M(d)=m) =0$, we can trivially set $f_{Y^{\text{tot}}(d) \mid M(d)=m}$ to be any probability density wrt $\mu$.} and densities $f_{Y^{\text{tot}}(d) \mid M(d)}$ such that for $d=0,1$ and all $k$,
$$P(Y^{\text{tot}}(d) \in A, M(d) = k) = P(M(d)=k) \cdot \int_A f_{Y^{\text{tot}}(d) \mid M(d)=k} \, d\mu.$$
\noindent We define the partial density $f_{Y^{\text{tot}}(d),M(d)=k}(y) := P(M(d)=k) \cdot  f_{Y^{\text{tot}}(d) \mid M(d)=k}(y)$. Note that
$$\sup_A \Delta_k^*(A) = \int_{\mathcal{Y}} \left( f_{Y^{\text{tot}}(1),M(1)=k} - f_{Y^{\text{tot}}(0),M(0)=k} \right)_+ \, d\mu.$$

We start by providing an intuitive sketch of the proof.\footnote{We thank an anonymous referee for the suggestion.} 
For a mediator value $k$, let \[f_{min}(y) := \min\{ f_{Y^{\text{tot}}(1),M(1)=k}(y), f_{Y^{\text{tot}}(0),M(0)=k}(y)\}\] be the minimum of the partial densities. Intuitively, $f_{min}(y)$ is an upper bound on the density of individuals who can have $(Y^{\text{tot}}(1),M(1))= (Y^{\text{tot}}(0),M(0)) = (y,k)$, i.e. an upper bound on the share of individuals who are $k$-always takers with $Y(1)=Y(0)=y$. It follows that $\int f_{\mathrm{min}}(y) \, d\mu(y)$ is an upper bound on the fraction of individuals who are $k$-always-takers with $Y(1) = Y(0)$. It turns out that the quantity $\eta_k$ given in \eqref{eqn: lower bound tv tildetheta*} is equal to $(\tilde\theta_{kk}^* - \int f_{\mathrm{min}}(y) \, d\mu(y))_+ $, the amount by which the share of $k$-always-takers, $\tilde\theta_{kk}^*$, exceeds the upper bound on the share of individuals who can be $k$-always-takers with $Y(1)=Y(0)$. If $\eta_k = 0$, then the share of $k$-always-takers is smaller than the upper bound. We can then construct distributions for the potential outcomes and mediators such that the sharp null holds: we specify all $k$-always-takers to have $Y(1)=Y(0)$, with the marginal density of their potential outcomes proportional to $f_{min}$. We then allocate the outcomes of the remaining types to match the marginals of $(Y^{\text{tot}}(d),M(d))$. If $\eta_k > 0$, then we cannot possibly allocate all $k$-always-takers to have $Y(1) = Y(0)$. Instead, we construct distributions of potential outcomes and mediators to achieve the maximum possible share of always-takers with $Y(1) = Y(0)$, $\int f_{\mathrm{min}}(y) d\mu(y)$. We do this by assigning $\int f_{\mathrm{min}}(y) \, d\mu(y)$ share of the population to be always-takers with $Y(1)=Y(0)$, with the marginal distribution of their potential outcomes proportional to $f_{min}$. We then assign the remaining mass of $k$-always-takers, $\eta_k$, to have $Y(1) \neq Y(0)$ in a way that is compatible with the observable data. 

We now continue with the formal proof of sharpness, which shows exactly how to construct these densities. We begin by proving the following lemma. 
\begin{lem}\label{lem: sharpness step}
Suppose that there exist collections of probability densities (measurable wrt $\mu$), $\mathcal{F}_1 := (f^*_{Y(1,k)\mid G=lk} )_{l=0}^{K-1}$ and $\mathcal{F}_0 := (f^*_{Y(0,k) \mid G=kl})_{l=0}^{K-1}$, such that for all $k$,
\begin{align}
 & f_{Y^{\text{tot}}(1),M(1)=k} = \sum_{l} \tilde\theta^*_{lk} f^*_{Y(1,k) \mid G = lk} \label{eqn: data matching 1} \\
  &f_{Y^{\text{tot}}(0),M(0)=k} = \sum_{l} \tilde\theta^*_{kl} f^*_{Y(0,k) \mid G = kl}, \label{eqn: data matching 2}
\end{align}
\noindent and for all $k$ with $\tilde\theta^*_{kk} > 0$,
\begin{equation}
 \tilde\theta^*_{kk} TV_{kk} = \eta_k, \text{ where } TV_{kk} := \int_{\mathcal{Y}} \left(f^*_{Y(1,k) \mid G=kk} - f^*_{Y(0,k) \mid G=kk} \right)_+ \, d\mu. \label{eqn: match tv} 
\end{equation}
\noindent Then there exists a joint distribution $P^*$ for $(Y(\cdot,\cdot),M(\cdot))$ satisfying the conditions of \Cref{prop: lower bounds on tv - po version} part 2, i.e. such that $P^*(G=lk) = \tilde\theta_{lk}^*$ for all $l,k$; $\tilde\theta_{kk}^* P^*(Y(1,k) \neq Y(0,k) \mid G = kk) = \eta_k$ for all $k$; and $P^*(Y(1,m) \neq Y(0,m) \mid G = lk) = 0$ if either $l \neq k$ or $m \not\in \{l,k\}$. 
\end{lem}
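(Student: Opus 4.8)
The plan is to construct the joint law $P^*$ of $(Y(\cdot,\cdot),M(\cdot))$ by hand and then check the three asserted properties. First I would let $(M(0),M(1))$ have the distribution that assigns probability $\tilde\theta^*_{lk}$ to the event $G=lk$, which yields $P^*(G=lk)=\tilde\theta^*_{lk}$ at once; all that remains is to prescribe, conditionally on each type $G=lk$, a law for the potential-outcome array $(Y(d,m))_{d\in\{0,1\},\,m}$. The only coordinates of this array that are pinned down by the target marginals are the two ``active'' ones for the type, namely $Y(1,m_k)$ (seen under treatment, since $M(1)=m_k$) and $Y(0,m_l)$ (seen under control, since $M(0)=m_l$); the remaining coordinates are free, so I would set $Y(1,m)=Y(0,m)=y_0$ for an arbitrary fixed $y_0\in\mathcal Y$ whenever $m\notin\{m_l,m_k\}$, which already delivers the part of the last conclusion concerning such $m$.

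For a complier type ($l\neq k$) the two active coordinates lie at distinct mediator values, so I can draw them independently --- $Y(1,m_k)$ from $f^*_{Y(1,k)\mid G=lk}$ and $Y(0,m_l)$ from $f^*_{Y(0,l)\mid G=lk}$ (the latter an element of $\mathcal F_0$ after relabeling) --- and then set $Y(0,m_k)=Y(1,m_k)$ and $Y(1,m_l)=Y(0,m_l)$, so that $Y(1,m)=Y(0,m)$ holds for every $m$ within the type; this gives the ``$l\neq k$'' half of the last conclusion. For an always-taker type ($l=k$) with $\tilde\theta^*_{kk}>0$, the two active coordinates $Y(1,m_k)$ and $Y(0,m_k)$ sit at the same mediator value, and there I would draw $(Y(1,m_k),Y(0,m_k))$ from a maximal coupling of $f^*_{Y(1,k)\mid G=kk}$ and $f^*_{Y(0,k)\mid G=kk}$, i.e.\ the coupling satisfying $P(Y(1,m_k)\neq Y(0,m_k))=TV_{kk}$ produced by the standard diagonal-plus-residual construction; equivalently, this is the point where the infimum in the coupling representation of $TV_{kk}$ used in the proof of Part~1 is attained. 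When $\tilde\theta^*_{kk}=0$ the conditioning event is null, so any choice works.

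Finally I would verify the three conclusions. For consistency with the marginals: conditionally on $G=lk$ one has $Y^{\text{tot}}(1)=Y(1,m_k)\sim f^*_{Y(1,k)\mid G=lk}$ and $Y^{\text{tot}}(0)=Y(0,m_l)\sim f^*_{Y(0,l)\mid G=lk}$ in every case, so summing over types with $M(1)=m_k$ (respectively $M(0)=m_k$) and applying \eqref{eqn: data matching 1} (respectively \eqref{eqn: data matching 2}) recovers the partial density $f_{Y^{\text{tot}}(1),M(1)=k}$ (respectively $f_{Y^{\text{tot}}(0),M(0)=k}$); since $\tilde\theta^*\in R$, this $P^*$ also satisfies \Cref{asm: restricted theta}. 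For the equality $\tilde\theta^*_{kk}P^*(Y(1,m_k)\neq Y(0,m_k)\mid G=kk)=\eta_k$: when $\tilde\theta^*_{kk}>0$ the conditional disagreement probability is $TV_{kk}$ by the maximal coupling, and $\tilde\theta^*_{kk}TV_{kk}=\eta_k$ is exactly hypothesis \eqref{eqn: match tv}; when $\tilde\theta^*_{kk}=0$ both sides vanish, because $\sup_A\Delta^*_k(A)\le P(M(1)=m_k)=\sum_{l\neq k}\tilde\theta^*_{lk}$ forces $\eta_k=0$. The third conclusion is arranged directly by the construction. The only genuinely non-routine ingredient is the maximal-coupling step for the always-takers, which is essentially already in hand from Part~1, so I expect the remainder to be bookkeeping; the one point to watch is keeping straight which conditional density attaches to which active coordinate in each type, so that the aggregation identities line up with \eqref{eqn: data matching 1} and \eqref{eqn: data matching 2}.
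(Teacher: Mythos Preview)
Your proposal is correct and follows essentially the same route as the paper's proof: construct $P^*$ type-by-type with $P^*(G=lk)=\tilde\theta^*_{lk}$, use the maximal (total-variation-achieving) coupling for the always-takers' active coordinates, diagonal couplings for compliers and inactive coordinates, and verify the marginals by summing \eqref{eqn: data matching 1}--\eqref{eqn: data matching 2}; the $\tilde\theta^*_{kk}=0$ case is handled identically. The only cosmetic difference is that the paper sets inactive coordinates to an arbitrary distribution with $Y(1,m)=Y(0,m)$ (to keep marginals dominated by $\mu$) rather than a fixed point $y_0$, but this is immaterial since inactive coordinates do not enter the data-matching constraints.
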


\noindent\textit{Proof of Lemma.} We will construct a distribution $P^*$ under which $(Y(1,k),Y(0,k)) \indep (Y(1,k'),Y(0,k')) \mid G$ for $k \neq k'$. That is, we will construct a $P^*$ that takes the form
$$P^*((Y(0,0),Y(1,0)) \in B_0,..., (Y(0,K-1),Y(1,K-1)) \in B_{K-1}, G=lk) = \tilde\theta_{lk}^* \cdot \prod_{m = 0}^{K-1} \int_{B_m} f^*_{(Y(1,m),Y(0,m)) \mid G=lk} \, \, d\tilde{\mu}_{m,lk},$$
\noindent where the $\tilde{\mu}_{m,lk}$ are measures on $\mathcal{Y}^2$ with one-dimensional marginals dominated by $\mu$. Note that this construction implies that $P^*(G=lk) = \tilde\theta^*_{lk}$. Let $f^{**}_{Y(d,k) \mid G=g}$ denote the implied marginal distribution over $Y(d,k) \mid G=g$ under $P^*$. Then $P^*$ matches the marginals $(Y^{\text{tot}}(d),M(d))$ if and only if 
\begin{align}
& P^*(M(1) =k ) = P(M(1) =k) \text{ for all } k \label{eqn: P* matches M1}\\
& P^*(M(0) =k ) = P(M(0) =k) \text{ for all } k \label{eqn: P* matches M0} \\
&  \sum_{l} \tilde\theta^*_{lk} f^{**}_{Y(1,k) \mid G=lk} = f_{Y^{\text{tot}}(1), M(1) =k} \text{ ($\mu$-a.s., for all $k$)} \label{eqn: P* matches pd 1}\\
&  \sum_{l} \tilde\theta^*_{kl} f^{**}_{Y(0,k) \mid G=kl} = f_{Y^{\text{tot}}(0), M(0) =k}  \text{ ($\mu$-a.s., for all $k$)} \label{eqn: P* matches pd 0}
\end{align}
\noindent Note that by construction, $P^*(M(1)=k) = \sum_{l} \tilde\theta^*_{lk}$. However, since $\tilde\theta^* \in \Theta_I^{*}$, we have that $\sum_{l} \tilde\theta^*_{lk} = P(M(1) =k)$, and hence \eqref{eqn: P* matches M1} holds. Likewise, we have that $P^*(M(0)=k) = \sum_{l} \tilde\theta^*_{kl} = P(M(0) =k)$, so \eqref{eqn: P* matches M0} holds. Observe further that if $f^{**}_{Y(1,k)\mid G=lk} = f^{*}_{Y(1,k)\mid G=lk}$ and $f^{**}_{Y(0,k)\mid G=kl} = f^{*}_{Y(0,k)\mid G=kl}$ for all $l,k$ such that $\tilde\theta^*_{lk}>0$, then \eqref{eqn: P* matches pd 1} and \eqref{eqn: P* matches pd 0} hold by assumption. To complete the proof of the lemma, it thus suffices to show that we can construct joint densities $f^*_{Y(1,m),Y(0,m) \mid G}$ satisfying the conditions of \Cref{prop: lower bounds on tv - po version} part 2 with marginals matching $f^{*}_{Y(1,k)\mid G=lk}$ and $f^{*}_{Y(0,k)\mid G=kl}$ for all $l,k$ such that $\tilde\theta_{lk}^*>0$.

Note that for $l \neq k$, the requirement that $P^*$ has marginals $f^{*}_{Y(1,k)\mid G=lk}$ and $f^{*}_{Y(0,k)\mid G=kl}$ for all $l,k$ such that $\tilde\theta_{lk}^*>0$ does not restrict the joint density for $(Y(1,m),Y(0,m)) \mid G=lk$ for $m \not \in \{l,k\}$. Hence, for $m \not\in \{l,k\}$ we may pick any joint distribution for $(Y(1,m),Y(0,m)) \mid G=lk$ (with marginals dominated by $\mu$) satisfying the sharp-null restriction $P^*(Y(1,m)=Y(0,m) \mid G=lk)=1$ (e.g., if $\mu$ is counting measure, we may choose the degenerate distribution, $(Y(1,m),Y(0,m))=(0,0)$ w.p. 1). It then suffices to specify the remaining two joint densities as
\begin{align*}
 & f^*_{Y(1,k), Y(0,k) \mid G =lk }(y_1,y_0) \propto f^*_{Y(1,k) \mid G=lk}(y_1) \cdot 1\{y_1 = y_0\} \\
 & f^*_{Y(1,l), Y(0,l) \mid G =lk }(y_1,y_0) \propto f^*_{Y(0,l) \mid G=lk}(y_0) \cdot 1 \{ y_1 = y_0\},
\end{align*}
which yields the desired marginals and ensures that $P^*(Y(1,m) = Y(0,m) \mid G=lk) = 1$ for all $m$.\footnote{The densities in the previous display are measurable with respect to the dominating measure $\tilde\mu$ such that for $B \subset \mathcal{Y}^2$, $\tilde{\mu}(B) = \mu(\{ y \,:\, (y,y) \in B \})$.} 

Next, consider the case where $G=kk$. Observe that the requirement that $P^*$ has marginals $f^{*}_{Y(1,k)\mid G=lk}$ and $f^{*}_{Y(0,k)\mid G=kl}$ for all $l,k$ such that $\tilde\theta_{lk}^*>0$ does not restrict the joint distribution of $(Y(1,m),Y(0,m))\mid G=lk$ for $m \neq k$, and hence for $m \neq k$ we can choose $f^*_{Y(1,m),Y(0,m) \mid G=kk} $ corresponding to any arbitrary density (with marginals measurable wrt $\mu$) such that $P^*(Y(1,m) = Y(0,m) \mid G=kk) =1$. Now, suppose first that $\tilde\theta_{kk}^* > 0$. Recall that for scalar random variables $H_1$ and $H_2$ with densities $h_1$ and $h_2$ (measurable wrt $\mu$), respectively, there exists a coupling $(H_1,H_2) \sim Q$ with marginals matching $H_1$ and $H_2$ such that $P_Q(H_1 \neq H_2) = tv$, where $tv = \int_{\mathcal{Y}} (h_1-h_2)_+ \, d\mu$ is the total variation distance between $H_1$ and $H_2$. However, by assumption, $\int (f^*_{Y(1,k) \mid G=kk} - f^*_{Y(0,k) \mid G=kk})_+ \, d\mu = \eta_k / \tilde\theta_{kk}^*$, and hence there exists a joint density $f^{*}_{Y(0,k),Y(1,k) \mid G=kk}$ with marginals $f^*_{Y(1,k) \mid G=kk}$ and $f^*_{Y(0,k) \mid G=kk}$ such that $P^*(Y(1,k) \neq Y(0,k) \mid G=kk) = \eta_k / \tilde\theta_{kk}^*$, and hence \eqref{eqn: tv equality - po version} holds. Finally, suppose that $\tilde\theta_{kk}^* =0$. We claim that in this case $\eta_k =0$. Observe that since $\tilde\theta^* \in \Theta_I^*$ and $\tilde\theta_{kk}^* =0$, we have that $\sum_{l: l \neq k} \tilde\theta^*_{lk} = P(M(1) =k)$. However, 
\begin{align*}
 \sup_A \left\{P(Y^{\text{tot}}(1) \in A, M(1) =k) - P(Y^{\text{tot}}(0) \in A, M(0) =k)\right\} \leq \sup_A \left\{P(Y^{\text{tot}}(1) \in A, M(1) =k) \right\}  = P(M(1) =k),
\end{align*}
\noindent and hence $$\sup_A \left\{P(Y^{\text{tot}}(1) \in A, M(1) =k) - P(Y^{\text{tot}}(0) \in A, M(0) =k) \right\} - \sum_{l: l \neq k} \tilde\theta_{lk}^* \leq 0, $$
\noindent which implies that $\eta_k =0$. It follows that \eqref{eqn: tv equality - po version} holds (with both sides of the equation equal to zero) regardless of the value of $ P^*(Y(1,m_{k}) \neq Y(0,m_{k}) \mid G = kk)$. It thus suffices to choose $f^*_{Y(0,k),Y(1,k) \mid G=kk}$ to be any joint distribution with marginals $f^*_{Y(0,k) \mid G=kk}$ and $f^*_{Y(1,k) \mid G=kk}$ (e.g. using the perfect-dependence copula). $\blacktriangle$

We now show that there exist densities satisfying the conditions of the Lemma. (For the remainder of the proof, all densities are measurable wrt $\mu$, and integrals are taken wrt $\mu$, so we omit the dependence for ease of notation.) Fix $k$. There are four possible cases, which we consider in turn below.

\paragraph{Case 1 ($\tilde\theta_{kk}^* = 0$).} Suppose $\tilde\theta_{kk}^* = 0$. Choose
\begin{align*}
  &  f^*_{Y(1,k) \mid G=lk} = f_{Y^{\text{tot}}(1) \mid M(1) =k} \text{ for all } l  \\
  &  f^*_{Y(0,k) \mid G=kl} = f_{Y^{\text{tot}}(0) \mid M(0) =k} \text{ for all } l .
\end{align*}

\noindent Note that 
$$ \sum_{l} \tilde\theta^*_{lk} f^*_{Y(1,k) \mid G = lk} =  \left( \sum_{l} \tilde\theta^*_{lk} \right) f_{Y^{\text{tot}}(1) \mid M(1) = k} = P(M(1) =k) \cdot f_{Y^{\text{tot}}(1) \mid M(1) = k}  = f_{Y^{\text{tot}}(1),M(1)=k},$$
\noindent where the first equality uses the construction of $f^*$, the second equality uses the fact that $\tilde\theta^* \in \Theta_I^*$, and the final equality uses the definition of the partial density. It follows that \eqref{eqn: data matching 1} holds. Equation \eqref{eqn: data matching 2} can be verified analogously. Since $\tilde\theta_{kk}^* =0$, we do not need to verify \eqref{eqn: match tv}. (As shown at the end of the proof of \Cref{lem: sharpness step}, $\eta_k=0$ in this case, so \eqref{eqn: match tv} holds trivially.)

For the remainder of the proof, we suppose that $\tilde\theta_{kk}^*>0$. We define $f_{min} := \min\{f_{Y^{\text{tot}}(1),M(1)=k}, f_{Y^{\text{tot}}(0),M(0)=k} \}$.

\paragraph{Case 2 ($\tilde\theta_{kk}^*>0$, $\eta_k>0$, and $f_{min}=0$ ($\mu$-a.e.)).}
 Assume that $\eta_k > 0$ and that $f_{min} =0$ ($\mu$-a.e.). Consider the densities of the potential outcomes
\begin{align*}
 &f^*_{Y(1,k) \mid G =g} = f_{Y^{\text{tot}}(1),M(1)=k} / P(M(1)=k) \text{ for all } g\\
 &f^*_{Y(0,k) \mid G =g} = f_{Y^{\text{tot}}(0),M(0)=k} / P(M(0) =k) \text{ for all } g.
\end{align*}
\noindent By construction, the densities are non-negative and integrate to 1, and thus are valid densities. Since the definition of $\Theta_I^*$ implies that $\sum_{l} \tilde\theta^*_{lk} = P(M(1)=k)$ and $\sum_{l} \tilde\theta^*_{kl} = P(M(0)=k)$, it is immediate that \eqref{eqn: data matching 1} and \eqref{eqn: data matching 2} hold. Moreover, since $f_{min} =0$, it follows that $f_{Y^{\text{tot}}(0),M(0)=k} =0$ whenever $f_{Y^{\text{tot}}(1),M(1)=k} >0$, and consequently $(f^*_{Y(1,k) \mid G=kk} -f^*_{Y(0,k) \mid G=kk})_+ = f^*_{Y(1,k) \mid G=kk}$. It follows that $$\tilde\theta^*_{kk} \int_{\mathcal{Y}} (f^*_{Y(1,k) \mid G = kk} - f^*_{Y(0,k) \mid G = kk})_+ = \tilde\theta^*_{kk} \int_{\mathcal{Y}} f^*_{Y(1,k) \mid G=kk} = \tilde\theta^*_{kk}.$$ Note, however, that 
\begin{align*}
    \eta_k = &\int_{\mathcal{Y}} \left(f_{Y^{\text{tot}}(1),M(1)=k} - f_{Y^{\text{tot}}(0),M(0)=k} \right)_+ - \sum_{l:l \neq k} \tilde\theta^*_{lk} \\
    =  & \int_{\mathcal{Y}} f_{Y^{\text{tot}}(1),M(1)=k} - \sum_{l:l \neq k} \tilde\theta^*_{lk} \\ = & P(M(1) =k) - \sum_{l:l \neq k} \tilde\theta^*_{lk} \\ = & \tilde\theta^*_{kk}
\end{align*}
\noindent where the first equality uses the fact that $\sup_A \Delta_k^*(A) = \int_{\mathcal{Y}} \left(f_{Y^{\text{tot}}(1),M(1)=k} - f_{Y^{\text{tot}}(0),M(0)=k} \right)_+$ and the assumption that $\eta_k >0$; the second equality uses the fact that $f_{min} =0$, and thus $f_{Y^{\text{tot}}(0),M(0)=k}$ is zero whenever $f_{Y^{\text{tot}}(1),M(1)=k}>0$ as argued above; and the final equality uses the fact that $P(M(1)=k) = \tilde\theta^*_{kk} + \sum_{l:l \neq k} \tilde\theta^*_{lk}$ by the definition of the identified set $\Theta_I^*$. It follows from the previous two displays that \eqref{eqn: match tv} holds.

\paragraph{Case 3 ($\tilde\theta_{kk}^*>0$, $\eta_k>0$, and $f_{min}>0$ on a set of positive measure).}
Suppose next that $\eta_k >0$ and that $f_{min}>0$ on a set of positive measure (wrt $\mu$). Then $\int_{\mathcal{Y}} f_{min}>0$, and since $f_{min} \geq 0$ by construction, it follows that $\tilde{f}_{min} = f_{min} / \int_{\mathcal{Y}} f_{min}$ is a valid density. Define $f_d := f_{Y^{\text{tot}}(d),M(d)=k} - f_{min}$ and $\tilde{f}_d := f_d / \int_{\mathcal{Y}} f_d$. We claim that the $\tilde{f}_d$ are valid densities. First, observe from the definition of $f_{min}$ that $f_d \geq 0$ everywhere. To show that $\tilde{f}_d$ is a valid density, it thus remains to show that $\int_{\mathcal{Y}} f_d > 0$, in which case $\tilde{f}_d$ is well-defined and integrates to 1 by construction. Observe, however, that since $\eta_k > 0$,
\begin{align*}
0 < \sup_{A} \Delta_k^*(A) = \int_{\mathcal{Y}} (f_{Y^{\text{tot}}(1),M(1)=k} - f_{Y^{\text{tot}}(0),M(0)=k})_+ = \int_{\mathcal{Y}} f_1  
\end{align*}
\noindent where the second equality follows from the fact that $(A-B)_+ = A - \min\{A,B\}$ and the definition of $f_1$. We thus see that $\int_{\mathcal{Y}} f_1 >0$. Next, observe that if $f_0 = 0$ ($\mu$-a.e.), then $(f_{Y^{\text{tot}}(1),M(1)=k} - f_{Y^{\text{tot}}(0),M(0)=k})_+ = f_{Y^{\text{tot}}(1),M(1)=k} - f_{Y^{\text{tot}}(0),M(0)=k}$ ($\mu$-a.e.), and thus
\begin{align*}
    \sup_{A} \Delta_k^*(A) = & \int_{\mathcal{Y}} f_{Y^{\text{tot}}(1),M(1)=k} - f_{Y^{\text{tot}}(0),M(0)=k} \\ = & P(M(1)=k ) - P(M(0)=k) \\ = & \sum_{l:l \neq k} \tilde\theta^*_{lk} - \sum_{l:l \neq k} \tilde\theta^*_{kl} \\ \leq &  \sum_{l:l \neq k} \tilde\theta^*_{lk},
\end{align*}
\noindent which implies that $\sup_{A} \Delta_k^*(A)  -  \sum_{l:l \neq k} \tilde\theta^*_{lk} \leq 0$, which contradicts the assumption that $\eta_k > 0$. Hence, we see that $f_0 >0$ on a set of positive measure (wrt $\mu$), and thus $\int_{\mathcal{Y}} f_0 > 0$, completing the proof that the $\tilde{f}_d$ are valid densities. Now, let $\nu_k^* = \eta_k / \tilde\theta^*_{kk}$, and construct the densities as follows: 
\begin{align*}
& f^*_{Y(d,k) \mid G=kk} = (1 - \nu_k^*) \tilde{f}_{min} + \nu_k^*  \tilde{f}_d \text{ for } d=0,1\\
& f^*_{Y(1,k) \mid G=g} = \tilde{f}_1 \text{ for } g \in \{lk \,:\, l \neq k\} \\
& f^*_{Y(0,k) \mid G=g} = \tilde{f}_0 \text{ for } g \in \{kl \,:\, l \neq k\}.
\end{align*}

\noindent To verify that $f^*_{Y(d,k) \mid G=kk}$ is a valid density, we will show that $\nu_k^* \in [0,1]$, in which case $f_{Y(d,k) \mid G=kk}$ is a convex combination of valid densities and hence a valid density. Note that $\nu_k^* = \eta_{k} / \tilde\theta^*_{kk}$, where $\eta_k>0$ and $\tilde\theta^*_{kk}>0$ by assumption, from which we see that $\nu_k^* \geq 0$. To show that $\nu_k^* \leq 1$, observe that
\begin{align*}
 \nu_k^* &= \frac{ \sup_{A} \left\{ P(Y^{\text{tot}}(1) \in A, M(1)=k ) - P(Y^{\text{tot}}(0) \in A, M(0)=k) \right\} - \sum_{l: l
  \neq k} \tilde\theta^*_{lk} }{ \tilde\theta^*_{kk} }\\ 
  &\leq \frac{ \sup_{A} \left\{ P(Y^{\text{tot}}(1) \in A, M(1)=k ) \right\} - \sum_{l: l
  \neq k} \tilde\theta^*_{lk} }{ \tilde\theta^*_{kk} } \\
  & = \frac{ P(M(1)=k) - \sum_{l: l
  \neq k} \tilde\theta^*_{lk} }{ \tilde\theta^*_{kk} } \\&= \frac{ \tilde\theta^*_{kk} }{ \tilde\theta^*_{kk} }.   
\end{align*}
\noindent We have thus verified that the density for $f_{Y(d,k) \mid G=kk}$ is valid.

We now verify that the specified densities satisfy \eqref{eqn: data matching 1}. Note that 
$$\sum_{l} \tilde\theta^*_{lk} f^*_{Y(1,k) \mid G=lk} = \left(\sum_{l:l \neq k} \tilde\theta^*_{lk} + \tilde\theta^*_{kk} \nu_k^* \right) \frac{f_1}{ \int_{\mathcal{Y}} f_1 } + \tilde\theta^*_{kk} (1-\nu_k^*) \frac{f_{min}}{\int_{\mathcal{Y}} f_{min}}.$$

\noindent Since $f_1 + f_{min} = f_{Y^{\text{tot}}(1), M(1)=k}$ by the definition of $f_1$, to verify \eqref{eqn: data matching 1} it suffices to verify that $\left(\sum_{l:l \neq k} \tilde\theta^*_{lk} + \tilde\theta^*_{kk}\nu_k^*  \right) / \int_{\mathcal{Y}} f_{1} = 1$ and $\tilde\theta^*_{kk} (1-\nu_k^*) / \int_{\mathcal{Y}} f_{min}=1$. Observe, however, that
\begin{align*}
 \nu_k^* &= \frac{1}{\tilde\theta^*_{kk}} \left(\sup_A \Delta_k^*(A) - \sum_{l:l \neq k} \tilde\theta^*_{lk} \right)  \\
 &= \frac{1}{\tilde\theta^*_{kk}} \left( \int_{\mathcal{Y}} (f_{Y^{\text{tot}}(1),M(1)=k} - f_{min}) -  \sum_{l:l \neq k} \tilde\theta^*_{lk}\right) \\
 &= \frac{1}{\tilde\theta^*_{kk}} \left( P(M(1)=k ) - \int_{\mathcal{Y}} f_{min} - \sum_{l:l \neq k} \tilde\theta^*_{lk} \right) \\
 &= \frac{1}{\tilde\theta^*_{kk}} \left( \tilde\theta^*_{kk} - \int_{\mathcal{Y}} f_{min} \right)  = 1 - \frac{\int_{\mathcal{Y}} f_{min}}{\tilde\theta^*_{kk}}
\end{align*}

\noindent where the first equality uses the definition of $\nu_k^*$ and the assumption that $\eta_k >0$; the second equality uses the fact that $\int_{\mathcal{Y}} (f-g)_+ = \int_{\mathcal{Y}} (f - \min\{f,g\})$; the third equality uses basic properties of densities; and the fourth equality uses the fact that $P(M(1)=k ) = \sum_{l:l \neq k} \tilde\theta^*_{lk} + \tilde\theta^*_{kk}$ since $\tilde\theta^* \in \Theta^*_I$. It is then immediate from the previous display that $\tilde\theta^*_{kk} (1-\nu_k^*) / \int_{\mathcal{Y}} f_{min}=1$. To show that $\left(\sum_{l:l \neq k} \tilde\theta^*_{lk} + \tilde\theta^*_{kk}\nu_k^*  \right) / \int_{\mathcal{Y}} f_{1} = 1$, we again use the fact that $P(M(1)=k) = \sum_{l:l \neq k} \tilde\theta^*_{lk} + \tilde\theta^*_{kk}$, to obtain that $\sum_{l:l \neq k} \tilde\theta^*_{lk} + \tilde\theta^*_{kk}\nu_k^* = P(M(1)=k) - (1-\nu_k^*) \tilde\theta^*_{kk} = P(M(1)=k) - \int_{\mathcal{Y}} f_{min}$, where the second equality uses the result in the previous display. However, from the definition of $f_1$, $\int_{\mathcal{Y}} f_1 = \int_{\mathcal{Y}} f_{Y^{\text{tot}}(1), M(1)=k} - f_{min} = P(M(1)=k ) - \int_{\mathcal{Y}} f_{min}$, and we thus see that  $\left(\sum_{l:l \neq k} \tilde\theta^*_{lk} + \tilde\theta^*_{kk}\nu_k^*  \right) / \int_{\mathcal{Y}} f_{1} = 1$, as needed to verify \eqref{eqn: data matching 1}. An analogous argument can be used to verify \eqref{eqn: data matching 2}. 

To show that the specified densities match \eqref{eqn: match tv}, note that the construction of $$\tilde{f}_d \, \propto \, f_{Y^{\text{tot}}(d), M(d)=k} - f_{min}$$ implies that $\tilde{f}_0 =0$ whenever $(\tilde{f}_1 - \tilde{f}_0)_+ > 0$ and likewise $\tilde f_1 =0$ whenever $(\tilde f_1-\tilde f_0)_+=0$. It follows that $\int_{\mathcal{Y}} (\tilde{f}_1 - \tilde{f}_0)_+ = \int_{\mathcal{Y}} \tilde{f}_1 = 1$. Hence,
\begin{align*}
\tilde\theta^*_{kk} \int_{\mathcal{Y}} (f^*_{Y(1,k) \mid G = kk} - f^*_{Y(0,k) \mid G = kk})_+ = \tilde\theta^*_{kk} \int_{\mathcal{Y}} \nu_k^* (\tilde{f}_1 - \tilde{f}_0)_+   = \tilde\theta^*_{kk} \nu_k^* = \eta_k,
\end{align*}
\noindent as needed.

\paragraph{Case 4 ($\tilde\theta_{kk}^*>0$ and $\eta_k=0$).}
Next, consider the case where $\eta_k =0$ and hence $\sup_A \Delta_k^*(A) - \sum_{l:l \neq k} \tilde\theta^*_{lk} \leq 0$. Consider the densities
\begin{align*}
& f^*_{Y(1,k) \mid G =kk} = f^*_{Y(0,k) \mid G =kk} = f_{min}/ \int_{\mathcal{Y}} f_{min} \\
& f^*_{Y(1,k) \mid G =g} = \frac{1}{\sum_{l: l \neq k} \tilde\theta^*_{lk}} \left( f_{Y^{\text{tot}}(1), M(1)=k} - \tilde\theta^*_{kk} \frac{f_{min}}{\int_{\mathcal{Y}} f_{min}} \right) \text{ for all } g \in \{lk \,:\, l \neq k\} \\
& f^*_{Y(0,k) \mid G =g} = \frac{1}{\sum_{l: l \neq k} \tilde\theta^*_{kl}} \left( f_{Y^{\text{tot}}(0), M(0)=k } - \tilde\theta^*_{kk} \frac{f_{min}}{\int_{\mathcal{Y}} f_{min}} \right) \text{ for all } g \in \{kl \,:\, l \neq k\}. 
\end{align*}

We now verify that the specified densities are in fact proper. First, we showed above that if $f_{min} = 0$ ($\mu$-a.e.), then $\eta_k = \tilde\theta^*_{kk} >0$. Hence, since $\eta_k =0$, it must be that $\int_{\mathcal{Y}} f_{min} > 0$, so that $f_{min} / \int_{\mathcal{Y}} f_{min}$ is a proper density. Next, we verify that the specified densities for $g \neq kk$ are non-negative. Recall that by assumption $\sup_A \Delta_k^*(A) - \sum_{l:l \neq k} \tilde\theta^*_{lk} \leq 0$. Note, further, that 
$$\sup_A \Delta_k^*(A) = \int_{\mathcal{Y}} f_{Y^{\text{tot}}(1),M(1)=k} - f_{min} = P(M(1)=k) - \int_{\mathcal{Y}} f_{min},$$ and hence
$$P(M(1)=k ) - \int_{\mathcal{Y}} f_{min} - \sum_{l: l \neq k} \tilde\theta^*_{lk}\leq 0 .$$
\noindent However, since $P(M(1)=k ) - \sum_{l: l \neq k} \tilde\theta^*_{lk} = \tilde\theta^*_{kk}$ by the definition of $\Theta^*_I$, we see from the previous display that $\int_{\mathcal{Y}} f_{min} \geq \tilde\theta^*_{kk}$, and thus $\frac{\tilde\theta^*_{kk}}{\int_{\mathcal{Y}} f_{min}} \leq 1$. But since $f_{Y^{\text{tot}}(d),M(d)=k} \geq f_{min}$ by construction, it follows that $f_{Y^{\text{tot}}(d),M(d)=k} - \frac{\tilde\theta^*_{kk}}{\int_{\mathcal{Y}} f_{min}} f_{min} \geq 0$, and hence the specified densities for $f_{Y(d,k) \mid G=g}$ for $g \neq kk$ are non-negative. To see that these densities integrate to 1, observe that $$\int_{\mathcal{Y}} \left( f_{Y^{\text{tot}}(1),M(1)=k} - \frac{\tilde\theta^*_{kk}}{\int_{\mathcal{Y}} f_{min} } f_{min}\right) = P(M(1)=k) - \tilde\theta^*_{kk} = \sum_{l: l \neq k} \tilde\theta^*_{lk}$$
and similarly
$$\int_{\mathcal{Y}} \left( f_{Y^{\text{tot}}(0),M(0)=k} -  \frac{\tilde\theta^*_{kk}}{\int_{\mathcal{Y}} f_{min} } f_{min} \right) = P(M(0)=k) - \tilde\theta^*_{kk} = \sum_{l: l \neq k} \tilde\theta^*_{kl}.$$

\noindent Finally, it is trivial to verify from the construction of the densities above that equations \eqref{eqn: data matching 1}, \eqref{eqn: data matching 2}, and \eqref{eqn: match tv} hold. $\blacktriangle$

\paragraph{Proof of \Cref{prop: testable implications for tv}}
Under \Cref{asm: independence}, $(Y,M) \mid D=d \sim (Y^{\text{tot}}(d),M(d))$ for $d=0,1$ (where recall $Y^{\text{tot}}(d) := Y(d,M(d))$). The result then follows immediately from \Cref{prop: lower bounds on tv - po version}. Specifically, as noted in the main text, under \Cref{asm: independence}, $\Delta_k(A) = \Delta_k^*(A)$ and $\Theta_I = \Theta_I^*$. Hence, \eqref{eqn: lower bound tv - true theta*} from \Cref{prop: lower bounds on tv - po version} implies \eqref{eqn: testable implication tv}, and likewise \eqref{eqn: lower bound tv tildetheta*} implies \eqref{eqn: testable implication tv - feasible}, which yields the first part of the Proposition. For the second part, for any $\tilde\theta \in \Theta_I = \Theta_I^*$, part 2 of \Cref{prop: lower bounds on tv - po version} implies that there exists a distribution $P^{*}$ for $(Y(\cdot,\cdot),M(\cdot))$ that is consistent with the marginals $(Y^{\text{tot}}(d),M(d)) \sim (Y,M) \mid D=d$ for $d=0,1$ such that $$\tilde\theta_{kk} P^*(Y(1,k) \neq Y(0,k) \mid G=kk) = \left(\sup_A \Delta_k^*(A) - \sum_{l:l\neq k} \tilde\theta_{lk}\right)_+ = \left(\sup_A \Delta_k(A) - \sum_{l:l\neq k} \tilde\theta_{lk}\right)_+$$ and $P^*(Y(1,m) \neq Y(0,m) \mid G=lk) = 0$ if either $l \neq k$ or $m \not\in \{l,k\}$. Part 2 of the Proposition is thus satisfied for $P^{\dagger}$ the distribution over $(Y(\cdot,\cdot),M(\cdot),D)$ defined such that $(Y(\cdot,\cdot), M(\cdot)) \sim P^*$, $D \sim P_{D}$ (where $P_D$ is the marginal distribution of $D$ under $P_{obs}$), and $D \indep (Y(\cdot,\cdot),M(\cdot))$. $\blacktriangle$

\paragraph{Proof of \Cref{cor: testable implications sharp null}}
\begin{proof}
Under the sharp null of full mediation, $\nu_k=0$ for all $k$, so \eqref{eqn: testable implication tv - feasible} gives
$0\ge\big(\sup_A \Delta_k(A)-\sum_{l\neq k}\tilde\theta_{lk}\big)_+$ for some $\tilde\theta\in\Theta_I$, which is equivalent to \eqref{eqn: testable implications sharp null}.
Conversely, if \eqref{eqn: testable implications sharp null} holds for some $\tilde\theta\in\Theta_I$, then by the second part of \Cref{prop: testable implications for tv} there exists a distribution $P^\dagger$ consistent with the observable data and Assumptions \ref{asm: independence} and \ref{asm: restricted theta} such that $P^\dagger (Y(1,m) \neq Y(0,m) \mid G = lk) P^\dagger(G=lk) = 0$ for all $l, k$ and $m$. By the law of total probability, $P^\dagger(Y(1,m) \neq Y(0,m)) = \sum_{l,k} P^\dagger (Y(1,m) \neq Y(0,m) \mid G = lk) P^\dagger(G=lk) = 0$ for all $m$, so that the sharp null of full mediation holds under $P^\dagger$.
\end{proof}

\paragraph{Proof of \Cref{cor: testing sharp null - po version}}
\begin{proof}
The proof is nearly identical to that for \Cref{cor: testable implications sharp null}, except citing \Cref{prop: lower bounds on tv - po version} instead of \Cref{prop: testable implications for tv}, but we provide it for completeness. Under the sharp null of full mediation, $\nu_k=0$ for all $k$, so \eqref{eqn: lower bound tv tildetheta*} gives
$0\ge\big(\sup_A \Delta_k^*(A)-\sum_{l\neq k}\tilde\theta_{lk}^*\big)_+$ for some $\tilde\theta^*\in\Theta_I^*$, which is equivalent to \eqref{eqn: testable implications sharp null - po version}. Conversely, if \eqref{eqn: testable implications sharp null - po version} holds for some $\tilde\theta^*\in\Theta_I^*$, then by the second part of \Cref{prop: lower bounds on tv - po version} there exists a distribution $P^*$ consistent with the marginals for $(Y^{\text{tot}}(d),M(d))$ and the restriction that $\theta \in R$ such that $P^* (Y(1,m) \neq Y(0,m) \mid G = lk) P^*(G=lk) = 0$ for all $l, k$ and $m$. By the law of total probability, $P^*(Y(1,m) \neq Y(0,m)) = \sum_{l,k} P^*(Y(1,m) \neq Y(0,m) \mid G = lk) P^*(G=lk) = 0$ for all $m$, so that the sharp null of full mediation holds under $P^*$.
    
\end{proof}

\section{Additional Theoretical Results}
\subsection{Bounds on average effects for always-takers} \label{subsec: bounds on ADE}

The quantity $\nu_k$ provides a measure of what fraction of $k$-always-takers are affected by alternative mechanisms. However, in some settings we may also be interested in the average \emph{magnitude} of the alternative mechanisms for the always-takers. In this section, we derive bounds on 
$$ADE_{k} := E[Y(1,m_k) - Y(0,m_k) \mid G =kk ] ,$$
\noindent the average direct effect of the treatment on the outcome for the $k$-always-takers. This provides an alternative measure of the size of the alternative mechanisms for the always-takers. 

To derive bounds for $ADE_{k}$, we first derive bounds on $E[Y(1,m_k) \mid G=kk]$. Observe that individuals with $M=m_k$ and $D=1$ must be either $k$-always-takers or $lk$-compliers. The share of $k$-always-takers among this population is given by $\check\theta_{kk}^1 := P(G=kk \mid D=1, M=m_k) = \frac{\theta_{kk}}{ P(M=m_k \mid D=1) }$. It follows that the observable distribution of $Y \mid D=1, M=m_k$ is a mixture with weight $\check{\theta}_{kk}^1$ on $Y(1,m_k) \mid G=kk$ and weight $(1-\check{\theta}_{kk}^1)$ on the distribution of $Y(1,m_k)$ for $lk$-compliers. We can thus obtain bounds on $E[Y(1,m_k) \mid G =kk]$ by considering the worst-case scenario where the $k$-always-takers compose the bottom $\check{\theta}_{kk}^{1}$ fraction of the $Y \mid D=1, M=m_k$ distribution, and the best-case scenario where they compose the top $\check{\theta}_{kk}^{1}$ fraction. 

The following lemma formalizes this intuition for obtaining bounds on $E[Y(1,m_k) \mid G=kk]$, and applies analogous logic to obtain bounds on $E[Y(0,m_k) \mid G=kk]$. For notation, for a distribution $F$, let $F^{-1}(u) = \inf \{y \,:\, F(y) \geq u \}$ be the $u$th quantile of $F$.
\begin{lem} \label{lem: lee bounds w known theta}
Suppose \Cref{asm: independence} holds. Then if $\check\theta_{kk}^1>0$,
$$ \frac{1}{\check\theta_{kk}^1} \int_0^{\check\theta_{kk}^1}  F_{Y \mid D=1, M=k}^{-1}(u) \, du \leq E[Y(1,k) \mid G =kk ] \leq \frac{1}{\check\theta_{kk}^1} \int_{1-\check\theta_{kk}^1}^1  F_{Y \mid D=1, M=k}^{-1}(u) \, du$$
\noindent and if $\check\theta_{kk}^0>0$,
$$ \frac{1}{\check\theta_{kk}^0} \int_0^{\check\theta_{kk}^0}  F_{Y \mid D=0, M=k}^{-1}(u) \, du \leq  E[Y(0,k) \mid G =kk ] \leq \frac{1}{\check\theta_{kk}^0}\int_{1-\check\theta_{kk}^0}^1  F_{Y \mid D=0, M=k}^{-1}(u) \, du.$$
The bounds are sharp in the sense that there exists a distribution $P^\dagger$ for $(Y(\cdot,\cdot),M(\cdot), D)$ consistent with \Cref{asm: independence} and the observable data and with $\theta_{lk} = P^\dagger(G=lk)$ such that the bounds hold with equality. If the distributions of $Y \mid D=d,M=k$ are continuous, then the bounds can equivalently be written as 
$$ E[Y \mid M=k, D=1, Y \leq y^1_{\check\theta_{kk}^1}] \leq E[Y(1,k) \mid G =kk ] \leq E[Y \mid M=k, D=1, Y \geq y^1_{1-\check\theta_{kk}^1}]$$
and 
$$ E[Y \mid M=k, D=0, Y \leq y^0_{\check\theta_{kk}^0}] \leq E[Y(0,k) \mid G =kk ] \leq E[Y \mid M=k, D=0, Y \geq y^0_{1-\check\theta_{kk}^0}],$$
\noindent where $y^d_{q} := F^{-1}_{Y \mid D=d, M=k}(q)$ is the $q$th quantile of $Y \mid D=d, M=k$. 
\end{lem}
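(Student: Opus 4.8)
The plan is to exploit the mixture representation of the observable conditional outcome distribution implied by Assumption \ref{asm: independence} and monotonicity of the identified type shares, and then to apply a standard Lee-type trimming argument. First I would establish the mixture decomposition: under \Cref{asm: independence}, the distribution of $Y \mid D=1, M=m_k$ equals the distribution of $Y^{\text{tot}}(1) \mid M(1)=m_k$, and by the law of total probability this is a mixture of the distribution of $Y(1,m_k) \mid G=kk$ (with weight $\check\theta_{kk}^1 = \theta_{kk}/P(M=m_k \mid D=1)$) and the distribution of $Y(1,m_k)$ among the $lk$-compliers, $l \neq k$ (with weight $1-\check\theta_{kk}^1$). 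Since the mixing weight $\check\theta_{kk}^1$ is assumed known (it is pinned down by $\theta_{kk}$ and the observed $P(M=m_k \mid D=1)$), the problem of bounding $E[Y(1,m_k) \mid G=kk]$ becomes the classical problem of bounding the mean of one component of a two-component mixture when the mixture and the weights are known but the other component is unrestricted.

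The key step is then the trimming argument: among all distributions $F$ that put mass $\check\theta_{kk}^1$ such that $F$ plus the complement forms the observed mixture $F_{Y \mid D=1, M=k}$, the mean $\int y\, dF$ is minimized by taking $F$ to be the observed distribution restricted (and renormalized) to its lowest $\check\theta_{kk}^1$-quantile, and maximized by taking $F$ to be the observed distribution restricted to its highest $\check\theta_{kk}^1$-quantile. I would prove this via the quantile/Hardy-Littlewood-type inequality: for any sub-distribution $G \le F_{Y\mid D=1,M=k}$ with total mass $\check\theta_{kk}^1$, one has $\int_0^{\check\theta_{kk}^1} F_{Y\mid D=1,M=k}^{-1}(u)\,du \le \int y\,dG(y) \le \int_{1-\check\theta_{kk}^1}^{1} F_{Y\mid D=1,M=k}^{-1}(u)\,du$, which follows because replacing any mass at a higher point with mass at a lower point (while keeping total mass fixed and staying below $F$) only decreases the integral. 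Dividing by $\check\theta_{kk}^1$ renormalizes to a proper conditional mean and gives the stated bounds on $E[Y(1,m_k)\mid G=kk]$; the bounds on $E[Y(0,m_k)\mid G=kk]$ follow by the identical argument applied to $Y \mid D=0, M=k$ with weight $\check\theta_{kk}^0 = \theta_{kk}/P(M=m_k\mid D=0)$.

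For sharpness, I would construct $P^\dagger$ explicitly: assign the $k$-always-takers to have $Y(1,m_k)$ equal to the bottom (or top) trimmed piece of $F_{Y\mid D=1,M=k}$ and, independently, $Y(0,m_k)$ equal to the bottom (or top) trimmed piece of $F_{Y\mid D=0,M=k}$, allocate the remaining mass of each observed conditional distribution to the relevant complier types, and extend to the other mediator values and to $Y(d,m)$ for $m \notin\{m_l,m_k\}$ in any way consistent with the marginals (mimicking the construction in the proof of \Cref{prop: lower bounds on tv - po version}, part 2). One has to check this $P^\dagger$ reproduces $P_{obs}$, satisfies \Cref{asm: independence} by taking $D$ independent of the potential outcomes, and delivers $\theta_{lk}=P^\dagger(G=lk)$; this is routine bookkeeping. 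Finally, the continuous-distribution reformulation is immediate: when $F_{Y\mid D=d,M=k}$ is continuous, $\int_0^{q} F^{-1}(u)\,du / q = E[Y \mid Y \le F^{-1}(q)]$ and $\int_{1-q}^{1} F^{-1}(u)\,du/q = E[Y \mid Y \ge F^{-1}(1-q)]$, so substituting $q=\check\theta_{kk}^d$ yields the truncated-expectation form. The main obstacle is the sharpness direction — one must verify that the candidate $P^\dagger$ is globally consistent with all the observed marginals and the type-share restriction simultaneously, not just with the single conditional distribution being trimmed; handling the joint coupling of $(Y(1,m_k),Y(0,m_k))$ on the always-taker stratum (which is unrestricted, so can be taken to attain the extreme trimmed marginals independently) is the delicate point, but it parallels the construction already carried out for \Cref{prop: lower bounds on tv - po version}.
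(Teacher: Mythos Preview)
Your proposal is correct and follows essentially the same approach as the paper: both establish the mixture decomposition $F_{Y\mid D=1,M=k}=\check\theta_{kk}^1 F_{Y(1,k)\mid G=kk}+(1-\check\theta_{kk}^1)H$ from \Cref{asm: independence}, bound the component mean via a trimming argument, and construct $P^\dagger$ by assigning the trimmed pieces to always-takers and the remainder to compliers. The only cosmetic difference is that the paper derives explicit CDF bounds on $F_{Y(1,k)\mid G=kk}$ and invokes first-order stochastic dominance, whereas you phrase the same step as a Hardy--Littlewood-type inequality for sub-distributions of fixed mass; also note that your passing mention of ``monotonicity of the identified type shares'' is unnecessary---the lemma and its proof require only \Cref{asm: independence}.
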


\begin{proof}
We begin by deriving the bounds for $E[Y(1,k) \mid G=kk]$. Observe that under \Cref{asm: independence}, $$F_{Y \mid D=1, M=k} = \check\theta^1_{kk} F_{Y(1,k) \mid G=kk} + (1- \check\theta^1_{kk}) H,$$   
\noindent where $H = \frac{1}{ \sum_{l: l \neq k} \theta_{lk} } \sum_{l: l \neq lk} \theta_{lk} F_{Y(1,k) \mid G=lk}$ is a valid CDF (corresponding to a mixture of the distributions of $Y(1,k) \mid G =g$ for types $g=lk$, $l \neq k$). Hence, 
$$F_{Y(1,k) \mid G =kk} = \frac{1}{\check\theta_{kk}^1} F_{Y \mid D=1, M=k} - \frac{1 - \check\theta_{kk}^1}{\check\theta_{kk}^1} H.$$
\noindent From the fact that CDFs are bounded between 0 and 1, it follows that
$$ \max\left\{\frac{1}{\check\theta_{kk}^1} F_{Y \mid D=1, M=k} - \frac{1 - \check\theta_{kk}^1}{\check\theta_{kk}^1}, 0  \right\} \leq  F_{Y(1,k) \mid G=kk} \leq \min \left\{ \frac{1}{\check\theta_{kk}^1} F_{Y \mid D=1, M=k} , 1 \right\} $$
\noindent Recall that if $F_1 \leq F_2$ everywhere for CDFs $F_1$ and $F_2$, the $F_1$ distribution first-order stochastically dominates the $F_2$ distribution, and thus $E_{F_1}[Y] \geq E_{F_2}[Y]$. Hence, we have that $E_{F_{ub}}[Y(1,k)] \leq E[Y(1,k) \mid G=kk] \leq E_{F_{lb}}[Y(1,k)]$, where $F_{lb},F_{ub}$ are respectively the lower and upper bounds on the CDF given in the previous display.  

Now, let $U$ be uniform on $[0,1]$, and consider the random variable $Y_{ub} \sim F_{Y \mid D=1, M=k}^{-1}(U) \mid U \in [0,\check\theta_{kk}^1]$. Observe that 
\begin{align*}
F_{Y_{ub}}(y) &= P(F_{Y \mid D=1, M=k}^{-1}(U) \leq y \mid U \in [0,\check\theta_{kk}^1]  ) \\
&= P(F_{Y \mid D=1, M=k}(y) \geq U \mid U \in [0,\check\theta_{kk}^1]) \\
&= \min \left\{ \frac{1}{\check\theta_{kk}^1} F_{Y \mid D=1, M=k}(y) , 1 \right\} = F_{ub}(y).
\end{align*}
\noindent It follows that $E_{F_{ub}}[Y(1,k)] = E[F_{Y \mid D=1, M=k}^{-1}(U) \mid U \in [0,\check\theta_{kk}^1]] =  \frac{1}{\check\theta_{kk}^1} \int_0^{\check\theta_{kk}^1}  F_{Y \mid D=1, M=k}^{-1}(u) \, du$, which gives the lower-bound on $E[Y(1,k) \mid G=kk]$ given in the lemma. When $Y$ is continuously distributed, note that $Y_{ub} \sim \left(Y \mid D=1,M=k, Y \leq y_{\check\theta_{kk}^1} \right)$, and thus we can also write the lower-bound as $E[Y \mid D=1, M=k, Y \leq y_{\check\theta_{kk}^1}]$. Analogously, we can verify that the random variable $Y_{lb} \sim F_{Y \mid D=1, M=k}^{-1}(U) \mid U \in [1-\check\theta_{kk}^1,1]$ has the CDF $F_{lb}$, which gives the upper bound on $E[Y(1,k) \mid G =kk]$ given in the proposition. 

To show that the lower bound is sharp, consider $P^\dagger$ such that $D \indep Y(\cdot,\cdot), M(\cdot)$ and $P^\dagger(M(0)=l,M(1)=k)= \theta_{lk}$, and the marginal distributions of the potential outcomes are such that
$Y(1,k) \mid G =kk \sim Y_{lb}$ and $Y(1,k) \mid G = lk \sim F_{Y \mid D=1, M=k}^{-1}(U) \mid U \in [\check\theta_{kk}^1,1]$ for all $g =lk$ with $l \neq k$. Then the distribution of $Y \mid M=k, D=1$ is given by the mixture: 
$$\check\theta_{kk}^1 \left(F_{Y \mid D=1, M=k}^{-1}(U) \mid U \in [0,\check\theta_{kk}^1] \right) + (1- \check\theta_{kk}^1) \left(F_{Y \mid D=1, M=k}^{-1}(U) \mid U \in [\check\theta_{kk}^1,1] \right) \sim F_{Y \mid D=1, M=k}^{-1}(U). $$
\noindent Recalling that if $Y$ has CDF $F$, then $Y \sim F^{-1}(U)$, we see that the implied distribution of $Y \mid M=k, D=1$ under $P^\dagger$ matches the observable data. (The marginals $Y(0,m) \mid G$ under $P^\dagger$ can be chosen to be any set of distributions matching the observable data; likewise the copula of potential outcomes can be chosen arbitrarily.) The sharpness of the upper bound can be shown analogously. Sharp bounds for $E[Y(0,k) \mid G =kk ]$ can be shown analogously to those for $E[Y(1,k) \mid G =kk]$. 
\end{proof}

\noindent \Cref{lem: lee bounds w known theta} immediately implies bounds on $ADE_{k}$ by differencing the inequalities for the expectations of $Y(1,m_k)$ and $Y(0,m_k)$. Note, however, that the bounds in \Cref{lem: lee bounds w known theta} involve the always-taker share $\check{\theta}_{kk}^d = \frac{\theta_{kk}}{P(M=m_{k} \mid D=d)}$, which may only be partially identified. It is straightforward to see, however, that the bounds become wider the smaller is $\check{\theta}_{kk}^d$. Intuitively, this is because the most-favorable subdistribution of fraction $\check{\theta}_{kk}^d$ is more favorable the smaller is $\check{\theta}_{kk}^d$, and likewise for the least-favorable subdistribution. Sharp bounds on $ADE_{k}$ can thus be obtained by plugging $\tilde\theta_{kk}^{min}$ into the bounds given in \Cref{lem: lee bounds w known theta}, where recall $\tilde\theta_{kk}^{min} := \inf_{\tilde\theta \in \Theta_I} \tilde\theta_{kk}$ is the minimum value of $\tilde\theta_{kk}$ consistent with $\tilde\theta \in \Theta_I$. For notation, let $LB_d(\check{\theta}^d_{kk})$ and $UB_d(\check{\theta}^d_{kk})$ denote the lower- and upper-bounds on $E[Y(d,m_k) \mid G=kk]$ given in \Cref{lem: lee bounds w known theta} as a function of $\check{\theta}^d_{kk}$. We then have the following bounds on $ADE_{k}$. While we present the result for the case where $Y$ is continuous, the analogous result holds if one replaces $LB_d$ and $UB_d$ with the expressions given in \Cref{lem: lee bounds w known theta} for the case where $Y$ is not assumed to be continuous.

\begin{prop} \label{prop: lee bounds with unknown theta}
Suppose \Cref{asm: independence,asm: restricted theta} hold and $Y$ is continuously distributed. If $\tilde\theta_{kk}^{min} = \inf_{\tilde\theta \in \Theta_I} \tilde\theta_{kk} > 0$, then bounds on $ADE_{k}$ are given as follows: 
\begin{align*}
&LB_{1}(\check{\theta}_{kk}^{1,min}) - UB_0(\check{\theta}_{kk}^{0,min}) \leq ADE_{k} \leq UB_{1}(\check{\theta}_{kk}^{1,min}) - LB_{0}(\check{\theta}_{kk}^{0,min})
\end{align*}
\noindent where $\check{\theta}_{kk}^{d,min} := \frac{\tilde\theta_{kk}^{min}}{P(M=m_k \mid D=d)}$. The lower and upper bounds are sharp in the sense that there exists a distribution $P^\dagger$ for $(Y(\cdot,\cdot),M(\cdot), D)$ consistent with the observable data and \Cref{asm: independence,asm: restricted theta} such that the bound holds with equality.
\end{prop}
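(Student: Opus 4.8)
The plan is to obtain the two-sided bound as a consequence of \Cref{lem: lee bounds w known theta} together with a monotonicity property of the bounding functions $LB_d,UB_d$, and to establish sharpness by feeding the minimizing type-share vector $\tilde\theta^{min}$ into the extremal constructions of that lemma.

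\textbf{Step 1 (monotonicity of the bounding functions).} I would first record that, on $(0,1]$, $\check\theta\mapsto LB_d(\check\theta)$ is nondecreasing and $\check\theta\mapsto UB_d(\check\theta)$ is nonincreasing. In the continuous case $LB_d(\check\theta)=\tfrac1{\check\theta}\int_0^{\check\theta}F^{-1}_{Y\mid D=d,M=k}(u)\,du$ is the average of $F^{-1}_{Y\mid D=d,M=k}$ over $[0,\check\theta]$; since $F^{-1}$ is nondecreasing, for $\check\theta_1<\check\theta_2$ one can write $LB_d(\check\theta_2)$ as a convex combination of $LB_d(\check\theta_1)$ and the average of $F^{-1}$ over $[\check\theta_1,\check\theta_2]$, and the latter dominates $LB_d(\check\theta_1)$, so $LB_d$ is nondecreasing. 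Symmetrically $UB_d(\check\theta)$ is the average of $F^{-1}$ over $[1-\check\theta,1]$ and is nonincreasing. The same averaging argument applies verbatim to the integral expressions in \Cref{lem: lee bounds w known theta} in the non-continuous case.

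\textbf{Step 2 (validity).} Under \Cref{asm: independence} the true $\theta$ matches the marginals of $M\mid D$, and under \Cref{asm: restricted theta} $\theta\in R$, so $\theta\in\Theta_I$; hence $\theta_{kk}\geq\tilde\theta_{kk}^{min}$ and therefore $\check\theta_{kk}^d=\theta_{kk}/P(M=m_k\mid D=d)\geq\check\theta_{kk}^{d,min}$ for $d=0,1$, both strictly positive since $P(M=m_k\mid D=d)\geq\tilde\theta_{kk}^{min}>0$ (the infimum defining $\tilde\theta_{kk}^{min}$ is attained because $\Theta_I$ is compact). Applying \Cref{lem: lee bounds w known theta} at the true $\theta$ gives $LB_1(\check\theta_{kk}^1)\leq E[Y(1,m_k)\mid G=kk]\leq UB_1(\check\theta_{kk}^1)$ and $LB_0(\check\theta_{kk}^0)\leq E[Y(0,m_k)\mid G=kk]\leq UB_0(\check\theta_{kk}^0)$. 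Combining with Step~1, $E[Y(1,m_k)\mid G=kk]\leq UB_1(\check\theta_{kk}^1)\leq UB_1(\check\theta_{kk}^{1,min})$ and $E[Y(0,m_k)\mid G=kk]\geq LB_0(\check\theta_{kk}^0)\geq LB_0(\check\theta_{kk}^{0,min})$, and differencing yields $ADE_k\leq UB_1(\check\theta_{kk}^{1,min})-LB_0(\check\theta_{kk}^{0,min})$. The lower bound follows by the mirror-image argument.

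\textbf{Step 3 (sharpness) and the main obstacle.} Fix $\tilde\theta^{min}\in\Theta_I$ with $\tilde\theta_{kk}^{min}=\inf_{\tilde\theta\in\Theta_I}\tilde\theta_{kk}$. I would build $P^\dagger$ with $D\sim P_D$, $D\indep(Y(\cdot,\cdot),M(\cdot))$, and $P^\dagger(G=lk)=\tilde\theta_{lk}^{min}$: for each $m\ne m_k$ assign every type the law of $Y\mid D=1,M=m$ in the $Y(1,m)$ coordinate and of $Y\mid D=0,M=m$ in the $Y(0,m)$ coordinate (consistent with the data because $\sum_l\tilde\theta_{lm}^{min}=P(M=m\mid D=1)$ and $\sum_l\tilde\theta_{ml}^{min}=P(M=m\mid D=0)$); for $m=m_k$ use the extremal splits from the sharpness part of \Cref{lem: lee bounds w known theta}, taking $Y(1,m_k)\mid G=kk$ to be the top-$\check\theta_{kk}^{1,min}$ sub-distribution of $Y\mid D=1,M=m_k$ with the $lk$-compliers carrying the complementary mass $\sum_{l\ne k}\tilde\theta_{lk}^{min}$, and $Y(0,m_k)\mid G=kk$ to be the bottom-$\check\theta_{kk}^{0,min}$ sub-distribution of $Y\mid D=0,M=m_k$ with the $kl$-compliers carrying the complementary mass; the copula of $(Y(1,\cdot),Y(0,\cdot))$ is irrelevant for $ADE_k$ and may be taken to be independence. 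Then $P^\dagger$ is consistent with $P_{obs}$ and \Cref{asm: independence,asm: restricted theta}, has $\theta_{kk}=\tilde\theta_{kk}^{min}$, and achieves $ADE_k=UB_1(\check\theta_{kk}^{1,min})-LB_0(\check\theta_{kk}^{0,min})$; swapping ``top'' and ``bottom'' in the $m=m_k$ step attains the lower bound, and the non-continuous case is identical with $LB_d,UB_d$ replaced by the integral expressions. The only delicate point is the bookkeeping here: one must check that the $D=0$ and $D=1$ mixture constraints can be satisfied simultaneously even though each complier type $G=lk$ ($l\ne k$) feeds the $M=m_k$ constraint under $D=1$ (through $Y(1,m_k)$) and the $M=m_l$ constraint under $D=0$ (through $Y(0,m_l)$); since these are disjoint potential-outcome coordinates there is no conflict, but this, together with verifying that the complementary sub-distributions carry exactly the required total mass (and handling the degenerate case $\sum_{l\ne k}\tilde\theta_{lk}^{min}=0$, where $\check\theta_{kk}^{1,min}=1$ and the split is vacuous), is what needs to be spelled out. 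Steps~1 and~2 are routine.
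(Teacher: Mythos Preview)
Your proposal is correct and follows essentially the same approach as the paper: establish that the Lee-type bounds $LB_d,UB_d$ are monotone in $\check\theta_{kk}^d$ so that the widest bounds over $\Theta_I$ occur at $\tilde\theta_{kk}^{min}$, then invoke the extremal constructions of \Cref{lem: lee bounds w known theta} for sharpness, noting that the $d=0$ and $d=1$ constructions touch disjoint potential-outcome coordinates and hence can be realized simultaneously. The paper's proof is terser (it phrases the monotonicity as ``$F^{-1}$ increasing implies the $\inf$/$\sup$ are obtained at $\check\theta_{kk}^{min}$'' and handles sharpness in one sentence), but your Step~3 bookkeeping is exactly what the paper means when it says the bounds for $d=0,1$ ``can be achieved only by specifying the marginals $Y(d,k)\mid G=kk$'' and hence simultaneously.
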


\begin{proof}
From \Cref{lem: lee bounds w known theta}, we have that 
\begin{align*}
    &\inf_{\check\theta_{kk}^d \in \check\Theta_{I,d}} E[ F^{-1}_{Y \mid D=d, M=k}(U) \mid U \in [0,\check\theta_{kk}^d] ] \\ \leq & E[Y(1,k) \mid G =kk]  \\ \leq & \sup_{\check\theta_{kk}^d \in \check\Theta_{I,d}} E[F^{-1}_{Y \mid D=d, M=k}(U) \mid U \in [1-\check\theta_{kk}^d,1] ]
\end{align*}
for $U$ uniformly distributed and $\check\Theta_{I,d}$ the set of values for $\check\theta_{kk}^d = \tilde\theta_{kk}/ P(M=k \mid D=d)$ consistent with $\tilde\theta \in \Theta_I$. Since $F^{-1}_{Y \mid D=d, M=k}(U)$ is increasing in $U$, it follows that the $\inf$ and $\sup$ are both obtained at $\check\theta_{kk}^{min}$. 
The bounds for $ADE_{k} =E[Y(1,k) - Y(0,k) \mid G =kk]$ follow simply from differencing the bounds for the two potential outcomes in the previous display. Sharpness for the bounds for $ADE_{k}$ follows from the fact that, as shown in the proof to \Cref{lem: lee bounds w known theta}, for each $d=0,1$, the bounds for $E[Y(d,k) \mid G=kk]$ can be achieved only by specifying the marginals $Y(d,k) \mid G=kk$ (and choosing the remaining potential outcomes in any arbitrary way that matches the data) and thus the bounds for $d=0,1$ can be achieved simultaneously.    
\end{proof}

It is worth noting that in the simple case where $M$ is binary and one imposes monotonicity, the bounds on $ADE_{k}$ correspond to \citet{lee_training_2009}'s bounds, where $D$ is viewed as the treatment and $M$ as the ``sample selection''. In the binary case, the $ADE_{k}$ can also be viewed as what the statistics literature refers to as principal strata direct effects for the principal strata with $M(1) = M(0) = m_{k}$ \citep{frangakis_principal_2002, zhang_estimation_2003}.\footnote{\citet{vanderweele_comments_2012} argues that one should not interpret the principal stratum effect for compliers as an indirect effect, but rather a combination of the direct and indirect effects (a total effect). This critique does not apply to our analysis of the principal stratum effects for always-takers, since their value of $M$ is unaffected by $D$, and thus any effects for this subgroup must be direct effects.} \citet{flores_nonparametric_2010} observed that such bounds could be used for mediation analysis in the case of binary $M$---their Proposition 1 matches the bounds given in \Cref{lem: lee bounds w known theta} for the special case where $M$ is binary under monotonicity---although they use this primarily as an intermediate step to derive bounds on the full-population average direct effect of treatment. Our result extends these existing results for the binary case to settings where $M$ may be multi-valued (and where monotonicity may fail). 

It is also worth emphasizing that the sharp null of full mediation considered earlier is distinct from the null hypothesis that $ADE_{k}=0$ for all $k$. In particular, the sharp null imposes that the treatment does not have an effect on the outcome for any always-taker, whereas the null that $ADE_{k}=0$ imposes that the treatment does not affect the $k$-always-takers on average. This is analogous to the distinction between the sharp null considered by Fisher and the weak null considered by Neyman, applied to the sub-population of always-takers. Thus, we may be able to reject the sharp null in settings where we cannot reject the weaker null that the $ADE_{k}$ are zero. 

\subsection{Minimum value of $\nu_k$ achieved at \texorpdfstring{$\tilde{\theta}_{kk}^{min}$}{theta_kk^min}}

The following result formalizes the sense in which the lower bound on $\nu_k$ implied by \eqref{eqn: testable implication tv - feasible} from \Cref{prop: testable implications for tv} is achieved at the minimum possible value of $\tilde\theta_{kk}$ in the identified set, $\tilde\theta_{kk}^{min} := \inf_{\tilde\theta \in \Theta_I} \tilde\theta_{kk}$.

\begin{lem} \label{lem: min at thetakkmin}
 If $\tilde\theta_{kk}^{min} > 0$, then
 $$\inf_{\substack{\nu_k \geq 0, \tilde\theta \in \Theta_I \\ \text{ s.t. }  \eqref{eqn: testable implication tv - feasible}  \text{ holds} }} \nu_k=
     \frac{1}{ \tilde\theta_{kk}^{min} } \left( \sup_A \Delta_k(A) - P(M=m_k \mid D=1) + \tilde\theta_{kk}^{min} \right)_+ .$$
     On the other hand, if $\tilde\theta_{kk}^{min} = 0$, then $$\inf_{\substack{\nu_k \geq 0, \tilde\theta \in \Theta_I \\ \text{ s.t. }  \eqref{eqn: testable implication tv - feasible}  \text{ holds} }} \nu_k= 0.$$ 
\end{lem}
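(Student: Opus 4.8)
The plan is to reduce the joint optimization over $(\nu_k,\tilde\theta)$ to a one-dimensional problem in $\tilde\theta_{kk}$ and then exploit monotonicity. First I would use the fact that every $\tilde\theta \in \Theta_I$ matches the marginal of $M \mid D=1$, so that $\sum_{l: l \neq k} \tilde\theta_{lk} = P(M=m_k \mid D=1) - \tilde\theta_{kk}$. Substituting this into \eqref{eqn: testable implication tv - feasible}, the constraint defining feasibility becomes
\[
\tilde\theta_{kk}\,\nu_k \;\geq\; \Big( \sup_A \Delta_k(A) - P(M=m_k \mid D=1) + \tilde\theta_{kk} \Big)_+ .
\]
A second ingredient I would record is that $\sup_A \Delta_k(A) \leq P(M=m_k \mid D=1)$, since for every Borel set $A$ we have $\Delta_k(A) = P(Y \in A, M=m_k \mid D=1) - P(Y \in A, M=m_k \mid D=0) \leq P(Y \in A, M=m_k \mid D=1) \leq P(M=m_k \mid D=1)$. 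For brevity write $q := P(M=m_k \mid D=1)$ and $a := \sup_A \Delta_k(A) - q \leq 0$.

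Given $\tilde\theta \in \Theta_I$ with $\tilde\theta_{kk} = x > 0$, the smallest $\nu_k \geq 0$ making the displayed constraint hold is $g(x) := (x + a)_+/x$. I would then check that $g$ is non-decreasing on $(0,\infty)$: on $(0,-a]$ we have $g \equiv 0$; on $(-a,\infty)$ we have $g(x) = 1 + a/x$, which is non-decreasing because $a \leq 0$; and the two pieces agree at $x = -a$. Since $R$ is closed (\Cref{asm: restricted theta}) and the marginal-matching equalities are closed conditions, $\Theta_I$ is a closed subset of the simplex and hence compact, so the infimum $\tilde\theta_{kk}^{min}$ is attained by some $\tilde\theta^{\circ} \in \Theta_I$. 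In the case $\tilde\theta_{kk}^{min} > 0$, every $\tilde\theta \in \Theta_I$ has $\tilde\theta_{kk} \geq \tilde\theta_{kk}^{min} > 0$, so for any feasible pair $(\nu_k,\tilde\theta)$ we get $\nu_k \geq g(\tilde\theta_{kk}) \geq g(\tilde\theta_{kk}^{min})$; conversely $(\,g(\tilde\theta_{kk}^{min}),\,\tilde\theta^{\circ})$ is feasible, so the infimum equals $g(\tilde\theta_{kk}^{min}) = \tfrac{1}{\tilde\theta_{kk}^{min}}\big(\sup_A \Delta_k(A) - P(M=m_k \mid D=1) + \tilde\theta_{kk}^{min}\big)_+$, which is the claimed expression.

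For the case $\tilde\theta_{kk}^{min} = 0$, I would take $\tilde\theta^{\circ} \in \Theta_I$ attaining $\tilde\theta^{\circ}_{kk} = 0$. Then the right-hand side of \eqref{eqn: testable implication tv - feasible} at $\tilde\theta^{\circ}$ is $\big(\sup_A \Delta_k(A) - P(M=m_k \mid D=1)\big)_+ = 0$ by the inequality $a \leq 0$ above, so the constraint reads $0 \geq 0$ and is met by $\nu_k = 0$; hence $(0,\tilde\theta^{\circ})$ is feasible and the infimum is $0$.

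I do not expect a substantive obstacle: the argument is essentially the observation that $g$ is monotone plus compactness of $\Theta_I$. The only points needing care are (i) verifying $g$ is continuous and non-decreasing across its kink at $x = -a$, and (ii) the degenerate case $\tilde\theta_{kk}^{min} = 0$, where the key is that $\sup_A \Delta_k(A) \leq P(M=m_k \mid D=1)$ makes the right-hand side of \eqref{eqn: testable implication tv - feasible} vanish once $\tilde\theta_{kk}$ is set to zero, rendering the constraint vacuous.
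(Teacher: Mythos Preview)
Your proposal is correct and follows essentially the same approach as the paper: rewrite \eqref{eqn: testable implication tv - feasible} so that it depends on $\tilde\theta$ only through $\tilde\theta_{kk}$ via the marginal-matching identity, use the key inequality $\sup_A \Delta_k(A) \leq P(M=m_k \mid D=1)$ to show the resulting one-variable bound is monotone in $\tilde\theta_{kk}$, and then invoke closedness of $\Theta_I$ for attainment. Your treatment is slightly more careful than the paper's in explicitly invoking compactness of $\Theta_I$ to guarantee the infimum is achieved, but the substance is identical.
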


\begin{proof}
    For simplicity of notation, without loss of generality let $m_k = k$. We argued in the main text that \eqref{eqn: testable implication tv - feasible} can be equivalently written as 
    $$\tilde{\theta}_{kk} \nu_k \geq \left(\sup_A \Delta_k(A) - (P(M=k \mid D=1) - \tilde\theta_{kk}) \right)_+ .$$
    Hence, when $\tilde\theta_{kk}^{min}>0$, we have that
    \begin{align*}
        \inf_{\substack{\nu_k \geq 0, \tilde\theta \in \Theta_I \\ \text{ s.t. }  \eqref{eqn: testable implication tv - feasible}  \text{ holds} }} \nu_k & = \inf_{\tilde\theta \in \Theta_I} \frac{1}{\tilde\theta_{kk}} \left(\sup_A \Delta_k(A) - (P(M=k \mid D=1) - \tilde\theta_{kk}) \right)_+ \\
        &= \inf_{\tilde\theta \in \Theta_I}  \left( \frac{1}{\tilde\theta_{kk}} \left(\sup_A \Delta_k(A) - P(M=k \mid D=1) \right) + 1 \right)_+.
    \end{align*}
    \noindent To show that the $\inf$ is achieved at $\tilde\theta_{kk}^{min}$, it thus suffices to show that $$\sup_A \Delta_k(A) - P(M=k \mid D=1) \leq 0,$$ in which case the expression inside the $\inf$ is weakly increasing in $\tilde\theta_{kk}$. Note, however, that
    \begin{align*}
     \sup_A \Delta_k(A) &= \sup_A \left\{ P(Y \in A, M=k \mid D=1) - P(Y \in A, M=k \mid D=0)  \right\}  \\
     & \leq  \sup_A P(Y \in A, M=k \mid D=1) \\
     & = P(M=k \mid D=1),
    \end{align*}
    which completes the proof for the case where $\theta^{kk}_{min} > 0$. 

    Next, suppose that $\tilde\theta^{kk}_{min} = 0$. It is straightforward to verify that since $R$ is closed by \Cref{asm: restricted theta}, $\Theta_I$ is also closed. It follows that there exists $\tilde\theta^* \in \Theta_I$ such that $\tilde\theta_{kk}^* = 0$. Since $\tilde\theta^* \in \Theta_I$, we have that $\sum_{l} \tilde\theta^*_{lk} = P(M=k \mid D=1)$. Combined with the fact that $\tilde\theta_{kk}^* = 0$, we thus have that $\sum_{l:l\neq k} \tilde\theta_{lk}^* = P(M=k\mid D=1)$. This combined with the inequality in the previous display implies that $(\sup_A \Delta_k(A) - \sum_{l:l\neq k} \tilde\theta_{lk}^*)_+ = 0$, and hence \eqref{eqn: testable implication tv - feasible} holds with $\tilde\theta = \tilde\theta^*$ and $\nu_k = 0$. 
\end{proof}

\subsection{Closed-form solution for \texorpdfstring{$\tilde\theta_{kk}$}{theta_kk} with fully-ordered $M$}
The following result formalizes the closed-form solution for $\tilde\theta_{kk}^{min}$ when $M$ is fully-ordered and we impose monotonicity, as discussed in \Cref{rmk: closed form thetakk}.

\begin{lem} \label{lem: thetakk closed form}
Suppose $M$ is fully-ordered, so that $m_0 < m_1< ... < m_{K-1}$. Suppose Assumptions \ref{asm: independence} and \ref{asm: restricted theta} are satisfied, where $R = \{\theta \in \Delta : \theta_{lk} = 0 \text{ if } m_l>m_k \}$ imposes the monotonicity assumption that $M(1) \geq M(0)$. Then 
$$\tilde\theta_{kk} \geq P(M=m_k \mid D=1) - \min\{P(M=m_k \mid D=1), P(M \geq m_k \mid D=1) - P(M \geq m_k \mid D=0)\} $$
\noindent for all $\tilde\theta \in \Theta_I$, and there exists $\tilde\theta \in \Theta_I$ such that inequality holds with equality simultaneously for all $k$. 
\end{lem}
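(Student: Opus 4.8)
The plan is to prove the stated inequality first and then establish its tightness by exhibiting a single element of $\Theta_I$ that attains the bound for every $k$ simultaneously.

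\emph{The inequality.} Fix $\tilde\theta\in\Theta_I$. Monotonicity forces $\tilde\theta_{lj}=0$ whenever $m_l>m_j$, so the column-sum constraint reads $P(M=m_k\mid D=1)=\sum_{l:l\le k}\tilde\theta_{lk}=\tilde\theta_{kk}+\sum_{l:l<k}\tilde\theta_{lk}$; hence it suffices to bound the ``mass of compliers entering $m_k$'', $\sum_{l<k}\tilde\theta_{lk}$, above by $\min\{P(M=m_k\mid D=1),\,P(M\ge m_k\mid D=1)-P(M\ge m_k\mid D=0)\}$. The first term is immediate from $\tilde\theta_{kk}\ge 0$. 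For the second, I would express both survival functions in terms of $\tilde\theta$ using the marginal constraints and monotonicity,
$$P(M\ge m_k\mid D=1)=\sum_{j\ge k}\ \sum_{l\le j}\tilde\theta_{lj},\qquad P(M\ge m_k\mid D=0)=\sum_{l\ge k}\ \sum_{j\ge l}\tilde\theta_{lj},$$
split the first double sum according to whether $l<k$ or $k\le l\le j$, and observe that the ``interior'' piece $\sum_{j\ge k}\sum_{k\le l\le j}\tilde\theta_{lj}$ coincides with the second display, so it cancels. This leaves $P(M\ge m_k\mid D=1)-P(M\ge m_k\mid D=0)=\sum_{j\ge k}\sum_{l<k}\tilde\theta_{lj}\ge\sum_{l<k}\tilde\theta_{lk}$, since all summands are nonnegative and the $j=k$ term is among them. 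Combining the two bounds and substituting back gives the claimed lower bound on $\tilde\theta_{kk}$ for every $\tilde\theta\in\Theta_I$.

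\emph{Tightness.} Under Assumptions \ref{asm: independence} and \ref{asm: restricted theta} the true type shares $\theta$ belong to $\Theta_I$, so $\Theta_I$ is non-empty, and applying the identity from the first part to any element of $\Theta_I$ shows that $M\mid D=1$ first-order stochastically dominates $M\mid D=0$. Let $F_d$ be the CDF of $M\mid D=d$, so $F_1\le F_0$ pointwise, and take $\tilde\theta^\star$ to be the law of the comonotone coupling $(M(0),M(1))=(F_0^{-1}(U),F_1^{-1}(U))$ for $U$ uniformly distributed on $[0,1]$. By the quantile transform it matches both marginals, and $F_1\le F_0$ gives $F_0^{-1}\le F_1^{-1}$, hence $M(0)\le M(1)$ almost surely, so $\tilde\theta^\star\in R$ and therefore $\tilde\theta^\star\in\Theta_I$. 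Writing $G_d(k):=P(M\ge m_k\mid D=d)$, under this coupling $M(d)=m_k$ exactly on the $U$-interval $[\,1-G_d(k),\,1-G_d(k+1))$, so $\tilde\theta^\star_{kk}$ is the length of the intersection of these intervals for $d=0$ and $d=1$, which under FOSD equals $\big(G_0(k)-G_1(k+1)\big)_+$. A one-line rearrangement, using $P(M=m_k\mid D=1)=G_1(k)-G_1(k+1)$ and $G_1(k)-G_0(k)=P(M\ge m_k\mid D=1)-P(M\ge m_k\mid D=0)$, shows this equals $P(M=m_k\mid D=1)-\min\{P(M=m_k\mid D=1),\,G_1(k)-G_0(k)\}$, i.e.\ the lower bound, and this holds simultaneously for all $k$.

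\emph{Main obstacle.} The only step that is not bookkeeping is recognizing that the comonotone (quantile) coupling is what simultaneously minimizes every diagonal mass $\tilde\theta_{kk}$. This cuts against the usual intuition that comonotone couplings maximize mass on the diagonal; the point is that imposing $M(0)\le M(1)$ on fixed, FOSD-ordered marginals forces all ``stayers'' onto a single staircase, which is exactly the maximal-cascade configuration in panel (c) of \Cref{fig:type-shares-illustration}. Once this coupling is identified, the cancellation identity in the first part and the interval-length computation in the second are routine.
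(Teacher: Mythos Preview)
Your proof is correct. The inequality step matches the paper's argument essentially line for line: both rewrite the claim as an upper bound on $\sum_{l<k}\tilde\theta_{lk}$ and obtain the key identity $P(M\ge m_k\mid D=1)-P(M\ge m_k\mid D=0)=\sum_{j\ge k}\sum_{l<k}\tilde\theta_{lj}$ by cancelling the ``interior'' piece.

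Where you genuinely diverge is in the tightness argument. The paper builds the minimizing $\tilde\theta$ by a recursive algorithm: set $\tilde\theta_{00}=P(M(1)=m_0)$, then for each $k$ fill in the $\tilde\theta_{lk}$ for $l<k$ greedily so as to exhaust $\min\{P(M(1)=m_k),\,G_1(k)-G_0(k)\}$, verifying three inductive invariants along the way and then checking that the marginals of $M(0)$ close up correctly at the end. Your approach instead identifies the minimizer in one stroke as the comonotone (quantile) coupling $(F_0^{-1}(U),F_1^{-1}(U))$, reads off $\tilde\theta^\star_{kk}=(G_0(k)-G_1(k+1))_+$ as an interval length, and algebraically matches this to the bound. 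This is shorter and conceptually cleaner: it makes transparent that the paper's recursion is in fact constructing the monotone rearrangement, and it connects the ``maximal cascade'' picture in \Cref{fig:type-shares-illustration}(c) to a standard object. The paper's construction, by contrast, is self-contained (no appeal to couplings or quantile transforms) and makes explicit which off-diagonal entries $\tilde\theta_{lk}$ are nonzero, which could be useful if one wanted to characterize the full minimizing $\tilde\theta$ rather than just its diagonal. A minor quibble: your interval endpoints should be $(1-G_d(k),\,1-G_d(k+1)]$ rather than half-open the other way under the standard quantile convention, but this is measure-zero and does not affect the argument.
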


\begin{proof}
 The result is an immediate corollary of \Cref{lem: thetakk closed form - po version} below, since $\Theta_I = \Theta_I^*$ under \Cref{asm: independence}.   
\end{proof}

\begin{lem} \label{lem: thetakk closed form - po version}
Suppose $M$ is fully-ordered, so that $m_0 < m_1< ... < m_{K-1}$. Suppose \Cref{asm: restricted theta} holds with $R = \{\theta \in \Delta : \theta_{lk} = 0 \text{ if } m_l>m_k \}$, which imposes the monotonicity assumption that $M(1) \geq M(0)$ (almost surely). Let $\Theta_I^*$ be as defined in \eqref{eqn: defn of ThetaI*}. Then 
\begin{equation}
  \tilde\theta^*_{kk} \geq P(M(1)=m_k) - \min\{P(M(1)=m_k), P(M(1) \geq m_k) - P(M(0) \geq m_k)\} \label{eqn: lower bound on thetakk - po version}  
\end{equation}
\noindent for all $\tilde\theta^* \in \Theta_I^*$, and there exists $\tilde\theta^* \in \Theta_I^*$ such that the inequality holds with equality simultaneously for all $k$. 
\end{lem}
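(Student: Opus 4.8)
The plan is to establish the two assertions of the lemma — the lower bound \eqref{eqn: lower bound on thetakk - po version} and its simultaneous attainability — separately, after setting up notation. Following the appendix convention I relabel $m_k$ as $k$ and write $p_k := P(M(1)=k)$, $q_k := P(M(0)=k)$, $\bar P_1(k) := \sum_{j \ge k} p_j$, $\bar P_0(k) := \sum_{j \ge k} q_j$, $d_k := \bar P_1(k) - \bar P_0(k)$, and CDFs $F_1(k) := \sum_{j \le k} p_j$, $F_0(k) := \sum_{j \le k} q_j$, with the convention $F_d(-1) = 0$. Note that the right-hand side of \eqref{eqn: lower bound on thetakk - po version} equals $\max\{p_k - d_k, 0\}$, and that if $\Theta_I^*$ is nonempty then any element of it furnishes a coupling with $M(0) \le M(1)$ almost surely, so $F_1(k) \le F_0(k)$ for all $k$ (equivalently $d_k \ge 0$); this stochastic-dominance property is needed only for the attainability half.

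For the lower bound, fix $\tilde\theta^* \in \Theta_I^*$. The key step is to re-express $d_k$ purely as a sum of entries of $\tilde\theta^*$. Expanding $p_j = \sum_a \tilde\theta^*_{aj}$ and $q_j = \sum_b \tilde\theta^*_{jb}$ and using the monotonicity support restriction $\tilde\theta^*_{ab} = 0$ for $a > b$, one gets $\bar P_1(k) = \sum_{a \le b,\, b \ge k} \tilde\theta^*_{ab}$ and $\bar P_0(k) = \sum_{k \le a \le b} \tilde\theta^*_{ab}$; since the index set of the second sum is contained in that of the first, subtracting leaves $d_k = \sum_{a < k \le b} \tilde\theta^*_{ab}$. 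Discarding the nonnegative terms with $b > k$ gives $d_k \ge \sum_{a < k} \tilde\theta^*_{ak} = p_k - \tilde\theta^*_{kk}$ (the last equality again using monotonicity and the column-sum constraint), i.e.\ $\tilde\theta^*_{kk} \ge p_k - d_k$; combined with $\tilde\theta^*_{kk} \ge 0$ this is exactly \eqref{eqn: lower bound on thetakk - po version}.

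For attainability, I would use the comonotone (quantile) coupling: with $U$ uniform on $[0,1]$, set $M(0) = F_0^{-1}(U)$ and $M(1) = F_1^{-1}(U)$, where $F_d^{-1}(u) = \inf\{k : F_d(k) \ge u\}$, and define $\tilde\theta^*_{lk} := P(M(0) = l, M(1) = k)$. The marginals match by construction, and $F_1 \le F_0$ pointwise gives $F_1^{-1} \ge F_0^{-1}$, hence $M(1) \ge M(0)$ almost surely, so $\tilde\theta^* \in \Theta_I^*$. The event $\{M(0) = k,\, M(1) = k\}$ corresponds to $U$ lying in $(F_0(k-1), F_0(k)] \cap (F_1(k-1), F_1(k)] = (F_0(k-1), F_1(k)]$ (using $F_1 \le F_0$), which has length $\max\{F_1(k) - F_0(k-1), 0\}$; the elementary telescoping identity $F_1(k) - F_0(k-1) = p_k - d_k$ then yields $\tilde\theta^*_{kk} = \max\{p_k - d_k, 0\}$ for every $k$ simultaneously, matching the lower bound.

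The only real work is bookkeeping: keeping the monotonicity support constraint, the two families of marginal constraints, and the survival-function/CDF off-by-one conventions mutually consistent so that the two telescoping identities, $d_k = \sum_{a < k \le b} \tilde\theta^*_{ab}$ and $F_1(k) - F_0(k-1) = p_k - d_k$, fall out cleanly. Conceptually there is no obstacle once one recognizes that maximal ``cascading'' of compliers is realized precisely by the comonotone coupling of $M(0)$ and $M(1)$. Finally, I would remark that this immediately gives \Cref{lem: thetakk closed form} under \Cref{asm: independence}, since then $\Theta_I = \Theta_I^*$.
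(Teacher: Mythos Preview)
Your proof is correct. The lower-bound argument is essentially the same as the paper's: both rewrite $P(M(1)\ge k)-P(M(0)\ge k)=\sum_{l<k\le k'}\tilde\theta^*_{lk'}$ and then drop the $k'>k$ terms to obtain $d_k\ge p_k-\tilde\theta^*_{kk}$. Your packaging of this as the identity $d_k=\sum_{a<k\le b}\tilde\theta^*_{ab}$ is a bit cleaner, but the content is identical.

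The attainability half is where you genuinely diverge. The paper constructs the minimizing $\tilde\theta^*$ by an explicit recursion over $k$, maintaining three invariants (i)--(iii) and verifying each inductively; this takes about a page and requires a somewhat delicate argument-by-contradiction to show that the recursion actually fills in enough complier mass. Your use of the comonotone (quantile) coupling $M(d)=F_d^{-1}(U)$ is substantially more direct: monotonicity and the marginal constraints are automatic, and the diagonal masses $\tilde\theta^*_{kk}$ drop out in one line from the interval intersection $(F_0(k-1),F_1(k)]$. This is the standard way to realize the Fr\'echet--Hoeffding lower bound on $P(M(0)\neq M(1))$, and it makes transparent why ``maximal cascading'' is optimal---something the paper's recursive construction achieves but does not illuminate. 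The price is that your argument relies on the stochastic-dominance condition $F_1\le F_0$ holding, which you correctly derive from $\Theta_I^*\neq\emptyset$; the paper's construction does not make this step explicit (it is buried in the base case where monotonicity is invoked to get $P(M(1)=0)\le P(M(0)=0)$ and in the recursion).
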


\begin{proof}
For simplicity of notation, without loss of generality let $m_k = k$. Observe that the definition of $\Theta_I^*$ together with the assumed form for $R$ implies that for any $\tilde\theta^* \in \Theta_I^*$,
\begin{equation} P(M(1)=k) = \tilde\theta^*_{kk} + \sum_{l: l<k} \tilde\theta^*_{lk} \label{eqn: moment matchign m1} \end{equation}
\noindent and hence the conclusion of the proposition holds if and only if
\begin{align}
\sum_{l:l<k} \tilde\theta^*_{lk} &\leq \min\{P(M(1)=k), P(M(1)\geq k) - P(M(0) \geq k)\}, \label{eqn: inequality ordered theta}
\end{align}
\noindent for $k=0,...,K-1$, and there exists some $\tilde\theta^* \in \Theta_I^*$ such that the inequality holds with equality for all $k$.

We first show the inequality in \eqref{eqn: inequality ordered theta}. It is immediate from \eqref{eqn: moment matchign m1} that $$\sum_{l:l<k} \tilde\theta^*_{lk} \leq P(M(1)=k).$$ Moreover, using the restriction that $\tilde\theta_{lk}^* =0$ if $l>k$, we have that
$$P(M(1) \geq k) - P(M(0) \geq k) =  \sum_{l: l<k} \hspace{.1cm} \sum_{k':k' \geq k} \tilde\theta^*_{lk'} \geq \sum_{l:l <k} \tilde\theta^*_{lk},$$
\noindent which together with the previous display gives the inequality in \eqref{eqn: inequality ordered theta}.

We next show there exists a $\tilde\theta^* \in \Theta_I^*$ such that \eqref{eqn: inequality ordered theta} holds with equality for all $k$. To obey monotonicity, we set $\tilde\theta^*_{lk}=0$ whenever $k<l$. 

We now recursively set the remaining $\tilde\theta^*_{lk}$. Start with $k=0$. Set $\tilde\theta^*_{00} = P(M(1)=0)$. Note that the monotonicity assumption that $M(1) \geq M(0)$ almost surely implies that $P(M(1)=0) \leq P(M(0)=0)$. It is then straightforward to verify that the following properties hold for $\bar{k} =0$ (in what follows, we interpret sums over empty sets as zero): 

\begin{enumerate}[(i)]
    \item 
    $\sum_{l:l<j} \tilde\theta^*_{lj} = \min\{P(M(1)=j), P(M(1)\geq j) - P(M(0) \geq j)\}$ \text{for all} $j \leq \bar{k}$

    \item
    $\sum_{l:l \leq j} \tilde\theta^*_{lj} = P(M(1)=j)$ \text{for all} $j \leq \bar{k}$
    \item
    $\sum_{l:l \leq \bar{k} } \tilde\theta^*_{jl} \leq P(M(0)=j)$ \text{for all} $j \leq \bar{k}$.
\end{enumerate}

Now, suppose that for some $k \geq 1$, $\tilde\theta^*_{lj}$ has been determined for all $l=0,...,K-1$ and all $j=0,...,k-1$, and properties (i)-(iii) hold for $\bar{k} = k-1$. (We showed above that this holds in the base case $k=1$.) Set $\tilde\theta^*_{kk} = P(M(1)=k) - \min\{P(M(1)=k), P(M(1)\geq k) - P(M(0) \geq k)\}$. For $l = 0,...,k-1$, proceed as follows

\begin{enumerate}
    \item 
    If $\sum_{l':l'<l} \tilde\theta^*_{l'k} = P(M(1)\geq k) - P(M(0) \geq k)$, then set $\tilde\theta^*_{lk} =0$.

    \item
    Otherwise, set $$\tilde\theta^*_{lk} = \min\left\{P(M(1)\geq k) - P(M(0) \geq k) - \sum_{l':l'<l} \tilde\theta^*_{l'k} \hspace{.1cm} ,\hspace{.1cm} P(M(0)=l) - \sum_{k':k'<k} \tilde\theta^*_{lk'} \right\}.$$
\end{enumerate}

\noindent Note that the first term in the minimum is weakly positive by construction while property (iii) ensures that the second term in the minimum is non-negative, so that $\tilde\theta^*_{lk} \geq 0$. We claim that the construction above implies that 
$$\sum_{l: l<k} \tilde\theta^*_{lk} = \min\{P(M(1)=k), P(M(1)\geq k) - P(M(0) \geq k)\}.$$
\noindent To see why this is the case, suppose towards contradiction that 
$$\sum_{l: l<k} \tilde\theta^*_{lk} < \min\{P(M(1)=k), P(M(1)\geq k) - P(M(0) \geq k)\}.$$
\noindent Then $\tilde\theta^*_{lk}$ is always set via step 2 in the procedure above. However, the construction of $\tilde\theta^*_{lk}$ in step 2 combined with the fact that $\sum_{l:l<k} \tilde\theta^*_{lk} < P(M(1) \geq k) - P(M(0) \geq k)$ implies that for all $l=0,...,k-1$, we have that 
$$\tilde\theta^*_{lk} = P(M(0)=l) - \sum_{j:j<k} \tilde\theta^*_{lj} .$$
\noindent Summing over $l<k$, we obtain that 
\begin{align*}
\sum_{l: l<k} \tilde\theta^*_{lk} &= \sum_{l: l<k} P(M(0)=l) - \sum_{l: l<k} \sum_{j: j<k} \tilde\theta^*_{lj} \\
&= \sum_{l: l<k} P(M(0)=l) - \sum_{j: j<k}  \sum_{l: l<k} \tilde\theta^*_{lj} \hspace{1cm} \text{(Reversing order of sums)} \\
&= \sum_{l: l<k} P(M(0)=l) - \sum_{j: j<k}  \sum_{l: l \leq j} \tilde\theta^*_{lj} \hspace{1cm} \text{(Using monotonicity)} \\
&= \sum_{l: l<k} P(M(0)=l) - \sum_{j:j<k}  P(M(1)=j) \hspace{1cm} \text{(Using property (ii))}\\
&= P(M(0)<k) - P(M(1)<k) \\
& = P(M(1) \geq k) - P(M(0) \geq k)
\end{align*}
\noindent which is a contradiction. 

\noindent It follows that property (i) holds also for $\bar{k} = k$. Likewise, the construction of $\tilde\theta^*_{kk}$ combined with property (i) implies that property (ii) holds for $\bar{k} =k$. Finally, the construction of $\tilde\theta^*_{lk}$ (particularly step 2) guarantees that property (iii) holds for $\bar{k}=k$ as well.

By induction we can obtain $\tilde\theta^*$ satisfying properties (i) through (iii) for $\bar{k}=K-1$. The resulting $\tilde\theta^*$ satisfies monotonicity and is bounded between 0 and 1 by construction. Property (ii) guarantees that $\tilde\theta^*$ matches the marginal of $M(1)$, i.e. $\sum_l \tilde\theta^*_{lk} = P(M(1)=k)$. 

It thus remains only to establish that $\tilde\theta^*$ matches the marginal distribution of $M(0)$. Property (ii) implies that $\sum_{l} \tilde\theta^*_{jl} \leq P(M(0) =j)$. To establish equality for all $j$, it thus suffices to show that $\sum_{j} \sum_{l} \tilde\theta^*_{jl} \geq \sum_j P(M(0)=j) = 1$. Note, however, that from property (ii) and monotonicity, we have
$$\sum_{j} \sum_{l} \tilde\theta^*_{jl} = \sum_{j} \left( \sum_{l: l \leq j} \tilde\theta^*_{lj} \right) = \sum_j P(M(1) =j ) = 1 , $$

\noindent which completes the proof.
\end{proof}

\subsection{Additional results for IV}
\label{appendix sec: IV}

Consider the setting in \Cref{sec: nonexperimental} in which we have a binary instrument $Z$ for $D$ that satisfies the \citet{imbens_identification_1994} assumptions. \Cref{cor: testing sharp null - po version} provides sharp testable implications of the sharp null that $D$ affects $Y$ only through $M$ for instrument-compliers. Note, however, that the instrument exclusion restriction implies that $Z$ affects $Y$ only through $D$, and the sharp null implies that $D$ affects $Y$ only through $M$. Hence, under the sharp null, it follows that $Z$ affects $Y$ only through $M$. A simple alternative approach to testing the sharp null would therefore be to apply the testable implications derived under random assignment in \Cref{cor: testable implications sharp null} viewing $Z$ as the randomized treatment and ignoring $D$. In this section, we show that the two approaches coincide when one imposes that $M(d)$ is monotonic in $d$ (i.e. $R$ is as given in \eqref{eqn: delta for monotonicity}). Thus, if one is willing to impose monotonicity, one can simply apply our approach for experiments using $Z$ as the treatment variable. If one does not impose monotonicity of $M(d)$, then this approach remains valid but potentially loses information relative to using \Cref{cor: testable implications sharp null}. 

To see why this is the case, recall from \Cref{sec: nonexperimental} that we can identify the marginal distributions $(Y^{\text{tot}}(d),M(d)) \mid C^z=1$, where $C^z$ is an indicator for instrument-compliers. We let $\Theta^*_{I,C^z}$ denote the analog to $\Theta^*_I$ defined in \eqref{eqn: defn of ThetaI*}, with all probabilities conditional on $C^z=1$. Observe that \Cref{cor: testing sharp null - po version} together with the formulas derived in \Cref{sec: nonexperimental} for $(Y^{\text{tot}}(d),M(d)) \mid C^z=1$ imply that under the sharp null, there exists some $\tilde\theta^* \in \Theta^*_{I,C^z}$ such that for all $k=0,...,K-1$,
\begin{align}
&\sup_A \frac{ \Delta^Z_k(A) }{\alpha_C} \leq \frac{ E[D \cdot 1\{M=m_k\} \mid Z=1 ] - E[D \cdot 1\{M=m_k\} \mid Z =0] }{ \alpha_C }  - \tilde\theta_{kk}^* \label{eqn: sharp null w iv using pos - general}
\end{align}
\noindent where $$\Delta^Z_k(A) := P(Y \in A, M=m_k \mid Z = 1) - P(Y \in A, M=m_k \mid Z =0)$$
and $$\alpha_C := E[D \mid Z=1] - E[D \mid Z=0]$$
is the identified share of instrument-compliers. 
Note that if $M$ is fully-ordered and $R$ imposes monotonicity, then by \Cref{lem: thetakk closed form - po version}, 
\begin{align*}
\inf_{\tilde\theta \in \Theta^*_{I,C^z}} \tilde\theta_{kk}^* = \left(P(M(1) =m_k \mid C^z =1) - \left( P(M(1) \geq m_k \mid C^z =1 ) - P(M(0) \geq m_k \mid C^z=1 \right)  \right)_+  =: \tilde\theta_{kk}^{min,*}  
\end{align*}
\noindent and there exists $\tilde\theta^* \in \Theta^*_{I,C^z}$ such that $\tilde\theta^*_{kk} = \tilde\theta_{kk}^{min,*}$ for all $k$. Using this result along with the expressions derived in \Cref{sec: nonexperimental} for $(Y^{\text{tot}}(d),M(d)) \mid C^z=1$, it follows that under monotonicity, \eqref{eqn: sharp null w iv using pos - general} is equivalent to
\begin{align*}
&\sup_A \frac{ \Delta_k^Z(A) }{\alpha_C} \leq \frac{ E[D \cdot 1\{M=m_k\} \mid Z=1 ] - E[D \cdot 1\{M=m_k\} \mid Z =0] }{ \alpha_C } \\ & - \left(\frac{ E[D \cdot 1\{M=m_k\} \mid Z=1 ] - E[D \cdot 1\{M=m_k\} \mid Z =0] }{ \alpha_C } - \frac{ P(M \geq m_k \mid Z =1) - P(M \geq m_k \mid Z =0) }{ \alpha_C } \right)_+ 
\end{align*}
\noindent Multiplying through by $\alpha_C$ and using the fact that $a-(a-b)_+ = \min\{a,b\}$ for any $a,b$, we obtain the equivalent implication
\begin{align}
\sup_A  \Delta_k^Z(A) \leq \min\{ &E[D \cdot 1\{M=m_k\} \mid Z=1 ] - E[D \cdot 1\{M=m_k\} \mid Z =0] \,,\, \nonumber \\  & \hspace{0.5cm} P(M \geq m_k \mid Z =1) - P(M \geq m_k \mid Z =0) \}. \label{eqn: sharp null w iv using pos - monotonicity}
\end{align}

On the other hand, we could ignore $D$ and simply use \Cref{cor: testable implications sharp null} to test whether $Z$ affects $Y$ only through $M$ (relabeling treatment `$D$' with treatment `$Z$' in \Cref{cor: testable implications sharp null}). The testable implication is that there exists some $\tilde\theta \in \Theta_I$ such that for all $k=0,...,K-1$
\begin{equation}
  \sup_A \Delta^Z_k(A) \leq P(M=m_k \mid Z=1) - \tilde\theta_{kk}. \label{eqn: testable imp ignoring d}  
\end{equation}
\noindent Using the expression for $\tilde\theta_{kk}^{min}$ under monotonicity given in \Cref{lem: thetakk closed form}, this is equivalent to, for all $k=0,...,K-1$,
\begin{align}
\sup_A \Delta_k^Z(A) &\leq P(M=m_k \mid Z=1) - \left(P(M=m_k \mid Z=1) - \left( P(M \geq m_k \mid Z =1) - P(M \geq m_k \mid Z =0) \right) \right)_+ \nonumber \\
&= \min\{P(M=m_k \mid Z=1) \,,\, P(M \geq m_k \mid Z =1) - P(M \geq m_k \mid Z =0) \}. \label{eqn: sharp null w iv ignoring d - monotonicity} 
\end{align}

We now claim that \eqref{eqn: sharp null w iv using pos - monotonicity} and \eqref{eqn: sharp null w iv ignoring d - monotonicity} are equivalent under our assumption that $Z$ satisfies the \citet{imbens_identification_1994} assumptions. Note that both inequalities take the form 
\begin{align}
\sup_A \Delta_k^Z(A) \leq \min\{ q \,,\, P(M \geq m_k \mid Z =1) - P(M \geq m_k \mid Z =0) \}  
\end{align}
\noindent but differ in the choice of $q$. Note that the upper bound is weakly increasing in $q$. Moreover, since $D \in \{0,1\}$,
$$E[D \cdot 1\{M=m_k\} \mid Z=1] \leq E[ 1\{M=m_k\} \mid Z=1] = P(M=m_k \mid Z=1).$$
\noindent We thus see that the upper bound in \eqref{eqn: sharp null w iv using pos - monotonicity} is weakly tighter than that in \eqref{eqn: sharp null w iv ignoring d - monotonicity}, so \eqref{eqn: sharp null w iv using pos - monotonicity} implies \eqref{eqn: sharp null w iv ignoring d - monotonicity}. 
Conversely, we will show that if \eqref{eqn: sharp null w iv using pos - monotonicity} is violated, then \eqref{eqn: sharp null w iv ignoring d - monotonicity} is also violated. Note that since $Z$ satisfies the \citet{imbens_identification_1994} assumptions, $\Delta_k^Z(A) / \alpha_C$ is the LATE of $D$ on the compound outcome $1\{Y, M = m_k\}$, i.e.
$$ \frac{\Delta_k^Z(A)}{\alpha_C} = P(Y^{\text{tot}}(1) \in A, M(1) = m_k \mid C^z =1) -  P(Y^{\text{tot}}(0) \in A, M(0) = m_k \mid C^z =1).$$

It follows that
\begin{align*}
  \sup_A \Delta_k^Z(A)&= \alpha_C \sup_A \left\{ P(Y^{\text{tot}}(1) \in A, M(1) = m_k \mid C^z=1) - P(Y^{\text{tot}}(0) \in A, M(0) =m_k \mid C^z = 1) \right\}  \\
  & \leq \alpha_C P(M(1) = m_k \mid C^z = 1) \\
  & =  E[D \cdot 1\{M=m_k\} \mid Z=1 ] - E[D \cdot 1\{M=m_k\} \mid Z =0] \numberthis \label{eqn: intermed ineq iv equiv}
\end{align*}
\noindent where the inequality uses the fact that $P(Y^{\text{tot}}(1) \in A, M(1) =m_k \mid C^z=1) \leq P(M(1) =m_k \mid C^z=1)$; and the final equality again uses the fact that $Z$ satisfies the \citet{imbens_identification_1994} assumptions to obtain that $$P(M(1) = m_k \mid C^z=1) = \frac{E[D \cdot 1\{M=m_k\} \mid Z=1] - E[D \cdot 1\{M=m_k\} \mid Z=0] }{\alpha_C}.$$

\noindent It follows from \eqref{eqn: intermed ineq iv equiv} that \eqref{eqn: sharp null w iv using pos - monotonicity} can be violated only if there exists some $k$ such that
\begin{align*}
\sup_A \Delta_k(A) > P(M \geq m_k \mid Z=1) - P(M \geq m_k \mid Z=0),    
\end{align*}
\noindent in which case \eqref{eqn: sharp null w iv ignoring d - monotonicity} is also violated. This completes the proof that the testable implications in \Cref{cor: testable implications sharp null} and \Cref{cor: testing sharp null - po version} are equivalent under monotonicity of $M(d)$. 

This equivalence breaks down if one does not impose monotonicity of $M(d)$. Intuitively, if we do not impose monotonicity, then by ignoring $D$ we lose information about the type shares $\theta$ among instrument-compliers. A simple example is as follows. Suppose there are two groups in the population occurring with equal probability. The first group are instrument-compliers ($D(1)=1,D(0)=0$) and always-takers with respect to $M$ ($M(1)=M(0)=1$) with $Y(d,m) = d$. Since $Y(d,m)$ depends on $d$, the sharp null is violated. The second group are instrument never-takers ($D(1) = D(0) = 0$) and never-takers with respect to $M$ ($M(1)=M(0)=0)$ with $Y(d,m) = 0$. Then it is straightforward to verify that \eqref{eqn: sharp null w iv using pos - general} is violated whereas \eqref{eqn: testable imp ignoring d} is not. The reason for this is that the type shares using the potential outcomes are point-identified for instrument-compliers since $P(M(1) =1 \mid C^z=1) = P(M(0) = 1 \mid C^z=1) = 1$, and hence $\Theta^*_{I,C^z}$ is a singleton with unique element $\theta^*$ such that $\theta^*_{11} = 1$ and $\theta^*_{lk} = 0$ for $lk \neq 11$. Thus \eqref{eqn: sharp null w iv using pos - general} reduces to the restriction that $\sup_A \Delta_k^Z(A) = 0$, which is violated for $k=1$. On the other hand, since $P(M=1 \mid Z=1) = P(M=1 \mid Z=0) = 0.5$, the identified set $\Theta_I$ based on \Cref{cor: testable implications sharp null} includes $\tilde\theta$ such that $\tilde\theta_{01}=\tilde\theta_{10} = 0.5$ and $\tilde\theta_{kk} =0$ for all $k$, in which case there are no always-takers and thus \eqref{eqn: testable imp ignoring d} is trivially satisfied.

\section{Additional Results on Discretizing Continuous Outcomes}
\label{sec: disc-cont-outcome}

\paragraph{Connection to conditional moment inequalities.} \Copy{andrews-shi-discussion}{As discussed in Remark \ref{rmk: discretizing y}, implementing the proposed test after discretizing the outcome variable continues to yield a valid inference procedure under the sharp null. A drawback is that the resulting testable implications are potentially no longer sharp. Discretizing the outcome is  analogous to the selection of instrument functions in tests of conditional moment inequalities, as in \citet{andrews_cmi_2013}. Indeed, in the simplest setting where $M$ is binary and monotonicity holds, \citet{mourifie_testing_2017} show that the sharp testable implications can be reformulated as conditional moment inequalities of the form $E[g(M,D) \mid Y] \geq 0$, for $g$ a function of $M,D$. \citet{andrews_cmi_2013} propose tests of inequalities of the form $E[g(M,D) \mid Y] \geq 0$ that are based on unconditional inequalities of the form $E[h(Y) \cdot g(M,D) ] \geq 0$, where $h$ is a non-negative function of $Y$. Their recommended approach is to use hyper-cubes for $h$, which for a one-dimensional $Y$ (as in our setting) corresponds to indicators for $Y$ lying in particular intervals. Thus, applying the \citet{andrews_cmi_2013} approach to the implications derived in \citet{mourifie_testing_2017} is equivalent to testing a version of the \citet{mourifie_testing_2017} implications with a discretized outcome.\footnote{In the simple case of binary $M$ and monotonicity, one could apply other tests for conditional moment inequalities that are not analogous to discretizing the outcome, but which typically rely on other tuning parameters. For example, \citet{mourifie_testing_2017} suggest an approach based on \citet{chernozhukov_intersection_2013}'s test of conditional moment inequalities, which requires the choice of a bandwidth and kernel for non-parametric mean estimation. Whether such approaches could be extended to our more general setting with multi-valued $M$ or relaxations of monotonicity strikes us an interesting question for future work.} The choice of instrument functions for conditional moment inequalities is known to be a theoretically challenging question. As a result, existing recommendations are often heuristic in nature, motivated by simulation evidence.\footnote{There are some formal results on \emph{rate-optimal} choices of test statistic for conditional moment inequalities \citep{armstrong_weighted_2014, chetverikov_adaptive_2018}, although these results do not appear to immediately apply to our general setting in which there are nuisance parameters (and thus we are interested in subvector inference).} For instance, based on Monte Carlo simulations, \citet{andrews_cmi_2013} suggest choosing instrument functions such that the expected number of observations per cell lies between 10 and 20.}

\paragraph{Monte Carlos.} In a similar spirit, as mentioned in the main text, we conduct Monte Carlo simulations calibrated to \citet{baranov_maternal_2020} using 2 or 10 bins instead of the 5 bins used in the results reported in the main text (\Cref{tab: app_sim_baranov_binM,tab: app_sim_baranov_nonbinM}). In \Cref{tab: cell_count_nonbinM}, we report how the number of independent observations per cell varies across these specifications. Since the observations are clustered, rather than counting the raw observations, we count the number of independent clusters that have at least one observation in a given cell. Here, a cell corresponds to a support point of the vector $(D, M, Y^{disc})$, where $Y^{disc}$ denotes the discretized outcome variable. For each simulated dataset, we compute the number of independent clusters per cell and record the median cell count. We then report the average of these median cell counts across simulation replications. See \Cref{sec: monte carlo} for discussion of these results.

\begin{table}[!htbp] \centering
  \caption{Simulation results for \citet{baranov_maternal_2020} with binary $M$ and different discretizations of the outcome}
    \label[appendixtable]{tab: app_sim_baranov_binM} 
    \begin{threeparttable}
\small{
\begin{tabular}{@{\extracolsep{5pt}} ccccccc} 
\hline 
\hline
\\[-2.0ex] 
\multicolumn{7}{@{} l}{Panel A: Baranov et al, 40 clusters, 2 bins}
 \\
 & $\bar{\nu}$ LB & ARP & CS & K & FSSTdd & FSSTndd \\ 
\cline{3-7} 
t=0 & $0$ & $0.086$ & $0.078$ & $0.050$ & $0.136$ & $0.126$ \\ 
t=0.5 & $0.134$ & $0.264$ & $0.256$ & $0.064$ & $0.314$ & $0.280$ \\ 
t=1 & $0.283$ & $0.828$ & $0.822$ & $0.422$ & $0.844$ & $0.830$ \\ 
 \hline 
\\[-2.0ex] \multicolumn{7}{@{} l}{Panel B: Baranov et al, 80 clusters, 2 bins}
 \\
 & $\bar{\nu}$ LB & ARP & CS & K & FSSTdd & FSSTndd \\ 
\cline{3-7} 
t=0 & $0$ & $0.046$ & $0.040$ & $0.040$ & $0.098$ & $0.090$ \\ 
t=0.5 & $0.134$ & $0.444$ & $0.430$ & $0.160$ & $0.456$ & $0.434$ \\ 
t=1 & $0.283$ & $0.978$ & $0.976$ & $0.846$ & $0.976$ & $0.976$ \\  
 \hline
\\[-2.0ex] \multicolumn{7}{@{} l}{Panel C: Baranov et al, 200 clusters, 2 bins}
 \\
 & $\bar{\nu}$ LB & ARP & CS & K & FSSTdd & FSSTndd \\ 
\cline{3-7}t=0 & $0$ & $0.052$ & $0.044$ & $0.030$ & $0.082$ & $0.078$ \\ 
t=0.5 & $0.134$ & $0.822$ & $0.816$ & $0.618$ & $0.818$ & $0.796$ \\ 
t=1 & $0.283$ & $1$ & $1$ & $1$ & $1$ & $1$ \\ 
 \hline 
\\[-2.0ex] \multicolumn{7}{@{} l}{Panel D: Baranov et al, 40 clusters, 10 bins}
 \\
 & $\bar{\nu}$ LB & ARP & CS & K & FSSTdd & FSSTndd \\ 
\cline{3-7}
t=0 & $0$ & $0.072$ & $0.188$ & $0.050$ & $0.324$ & $0.262$ \\ 
t=0.5 & $0.134$ & $0.164$ & $0.246$ & $0.064$ & $0.340$ & $0.308$ \\ 
t=1 & $0.283$ & $0.530$ & $0.658$ & $0.422$ & $0.774$ & $0.720$ \\ 
 \hline
\\[-2.0ex] \multicolumn{7}{@{} l}{Panel E: Baranov et al, 80 clusters, 10 bins}
 \\
 & $\bar{\nu}$ LB & ARP & CS & K & FSSTdd & FSSTndd \\ 
\cline{3-7}
t=0 & $0$ & $0.052$ & $0.086$ & $0.040$ & $0.208$ & $0.158$ \\ 
t=0.5 & $0.134$ & $0.272$ & $0.314$ & $0.160$ & $0.436$ & $0.368$ \\ 
t=1 & $0.283$ & $0.798$ & $0.924$ & $0.846$ & $0.960$ & $0.942$ \\ 
 \hline 
\\[-2.0ex] \multicolumn{7}{@{} l}{Panel F: Baranov et al, 200 clusters, 10 bins}
 \\
 & $\bar{\nu}$ LB & ARP & CS & K & FSSTdd & FSSTndd \\ 
\cline{3-7}
t=0 & $0$ & $0.042$ & $0.048$ & $0.030$ & $0.122$ & $0.100$ \\ 
t=0.5 & $0.134$ & $0.636$ & $0.742$ & $0.618$ & $0.804$ & $0.754$ \\ 
t=1 & $0.283$ & $0.998$ & $1$ & $1$ & $1$ & $1$ \\ 
\hline
\end{tabular} 
}

\begin{tablenotes}[flushleft]\footnotesize
\item\emph{Notes}: This table show simulation results analogous to Panels B-D of Table
  \ref{tab: main_sim_binM}, with 2 and 10 bins used for discretizing the
  outcome variable. The first column shows the value of $t$, which
  determines the distance from the null, as described in the main text. The second column shows the lower-bound on the fraction of always-takers affected by treatment, $\bar{\nu}$. The remaining columns
  contain the rejection probabilities for each of the inference methods
  considered. Panels A-C use 2 bins to discretize the outcome variable and Panels
  D-F use 10 bins. Since \cite{kitagawa_test_2015}
  does not require a discrete outcome variable, we use the outcome variable as-is when running this test (hence the results for K do not depend on the number of bins). Rejection probabilities are computed over 500
  simulation draws, under a 5\% significance level.
  \end{tablenotes}
\end{threeparttable}
\end{table}

\begin{table}[!htbp] \centering
  \caption{Simulation results for \citet{baranov_maternal_2020} with non-binary $M$ and different discretizations of the outcome}
    \label[appendixtable]{tab: app_sim_baranov_nonbinM} 
    \begin{threeparttable}
\small{
\begin{tabular}{@{\extracolsep{5pt}} cccccc} 
\hline 
\hline
\\[-2.0ex]
\multicolumn{6}{@{} l}{Panel A: Baranov et al, 40 clusters, 2 bins}
\\
 & $\bar{\nu}$ LB & ARP & CS & FSSTdd & FSSTndd \\ 
\cline{3-6}
t=0 & $0$ & $0.056$ & $0.092$ & $0.150$ & $0.112$ \\ 
t=0.5 & $0.119$ & $0.092$ & $0.206$ & $0.356$ & $0.326$ \\ 
t=1 & $0.255$ & $0.290$ & $0.856$ & $0.944$ & $0.922$ \\ 
 \hline
\\[-2.0ex] \multicolumn{6}{@{} l}{Panel B: Baranov et al, 80 clusters, 2 bins}
 \\
 & $\bar{\nu}$ LB & ARP & CS & FSSTdd & FSSTndd \\ 
\cline{3-6} 
t=0 & $0$ & $0.054$ & $0.058$ & $0.146$ & $0.110$ \\ 
t=0.5 & $0.119$ & $0.110$ & $0.392$ & $0.546$ & $0.514$ \\ 
t=1 & $0.255$ & $0.288$ & $0.986$ & $0.998$ & $0.998$ \\ 
 \hline
\\[-2.0ex] \multicolumn{6}{@{} l}{Panel C: Baranov et al, 200 clusters, 2 bins}
 \\
 & $\bar{\nu}$ LB & ARP & CS & FSSTdd & FSSTndd \\ 
\cline{3-6}
t=0 & $0$ & $0.042$ & $0.048$ & $0.100$ & $0.076$ \\ 
t=0.5 & $0.119$ & $0.104$ & $0.792$ & $0.892$ & $0.860$ \\ 
t=1 & $0.255$ & $0.422$ & $1$ & $1$ & $1$ \\ 
 \hline
\\[-2.0ex] \multicolumn{6}{@{} l}{Panel D: Baranov et al, 40 clusters, 10 bins}
 \\
 & $\bar{\nu}$ LB & ARP & CS & FSSTdd & FSSTndd \\ 
\cline{3-6}
t=0 & $0$ & $0.038$ & $0.102$ & $0.386$ & $0.264$ \\ 
t=0.5 & $0.119$ & $0.036$ & $0.256$ & $0.556$ & $0.464$ \\ 
t=1 & $0.255$ & $0.126$ & $0.818$ & $0.960$ & $0.932$ \\ 
 \hline
\\[-2.0ex] \multicolumn{6}{@{} l}{Panel E: Baranov et al, 80 clusters, 10 bins}
 \\
  & $\bar{\nu}$ LB & ARP & CS & FSSTdd & FSSTndd \\ 
\cline{3-6}
t=0 & $0$ & $0.048$ & $0.032$ & $0.282$ & $0.176$ \\ 
t=0.5 & $0.119$ & $0.050$ & $0.238$ & $0.650$ & $0.566$ \\ 
t=1 & $0.255$ & $0.134$ & $0.986$ & $0.998$ & $0.998$ \\ 
 \hline
\\[-2.0ex] \multicolumn{6}{@{} l}{Panel F: Baranov et al, 200 clusters, 10 bins}
 \\
 & $\bar{\nu}$ LB & ARP & CS & FSSTdd & FSSTndd \\ 
\cline{3-6} 
t=0 & $0$ & $0.048$ & $0.006$ & $0.182$ & $0.094$ \\ 
t=0.5 & $0.119$ & $0.068$ & $0.464$ & $0.936$ & $0.894$ \\ 
t=1 & $0.255$ & $0.264$ & $1$ & $1$ & $1$ \\ 
\\[-2.0ex]
\hline 
\end{tabular} 
}

\begin{tablenotes}[flushleft]\footnotesize
\item\emph{Notes}: This table show simulation results analogous to \Cref{tab:
    main_sim_nonbinM}, with 2 and 10 bins used for discretizing the outcome
  variable. The first column shows the value of $t$, which determines the
  distance from the null, as described in the main text. The second column shows the lower-bound on the fraction of always-takers affected by treatment, $\bar{\nu}$. The remaining columns contain the
  rejection probabilities for each of the inference methods considered. Panels
  A-C use 2 bins to discretize the outcome variable and Panels D-F use 10 bins.
  Rejection probabilities are computed over 500 simulation draws, under a 5\%
  significance level.
  \end{tablenotes}
\end{threeparttable}
\end{table}

\begin{table}[!ht]
  \centering
    \caption{Average median cell count for DGPs calibrated to \citet{baranov_maternal_2020}}
  \label[appendixtable]{tab: cell_count_nonbinM} 
  \begin{minipage}{.72\textwidth}
  \begin{center}
\begin{tabular}{@{\extracolsep{5pt}} lcccccc} 
\\[-1.8ex]\hline 
\hline \\[-1.8ex]
\\[-2.0ex] \multicolumn{7}{@{} l}{Panel A: Baranov et al, 40 clusters} \\
 \\[-1.5ex]
 & \multicolumn{3}{c}{Binary $M$} & \multicolumn{3}{c}{Non-binary $M$} \\
 & 2 bins & 5 bins & 10 bins & 2 bins & 5 bins & 10 bins \\
\cline{2-4} \cline{5-7} \\[-2.0ex]
t = 0.0 & 19.976 & 13.999 & 9.751 & 15.236 & 8.943 & 5.904 \\
t = 0.5 & 19.729 & 13.535 & 9.546 & 15.362 & 8.639 & 5.625 \\
t = 1.0 & 19.584 & 13.245 & 9.299 & 15.654 & 8.832 & 5.417 \\
\hline \\[-1.83ex]
\\[-2.0ex] \multicolumn{7}{@{} l}{Panel B: Baranov et al, 80 clusters} \\
 \\[-1.5ex]
 & \multicolumn{3}{c}{Binary $M$} & \multicolumn{3}{c}{Non-binary $M$} \\
 & 2 bins & 5 bins & 10 bins & 2 bins & 5 bins & 10 bins \\
\cline{2-4} \cline{5-7} \\[-2.0ex]
t = 0.0 & 39.978 & 28.086 & 18.980 & 30.105 & 17.461 & 11.115 \\
t = 0.5 & 39.298 & 26.851 & 18.923 & 30.562 & 16.720 & 10.916 \\
t = 1.0 & 38.797 & 26.194 & 17.780 & 31.762 & 17.485 & 10.327 \\
\hline \\[-1.83ex]
\\[-2.0ex] \multicolumn{7}{@{} l}{Panel C: Baranov et al, 200 clusters} \\
 \\[-1.5ex]
 & \multicolumn{3}{c}{Binary $M$} & \multicolumn{3}{c}{Non-binary $M$} \\
 & 2 bins & 5 bins & 10 bins & 2 bins & 5 bins & 10 bins \\
\cline{2-4} \cline{5-7} \\[-2.0ex
]t = 0.0 & 99.985 & 70.007 & 46.312 & 74.170 & 42.033 & 27.104 \\
t = 0.5 & 98.084 & 66.870 & 47.000 & 75.869 & 38.988 & 27.176 \\
t = 1.0 & 95.952 & 65.184 & 43.070 & 80.675 & 43.885 & 24.863 \\
\hline
\end{tabular}
\end{center}

\vspace{-5pt} \footnotesize{\emph{Notes:} This table reports the median number
  of independent clusters per cell, averaged over 500 simulation replications. A cell corresponds to each support point of $(D, M, Y^{disc})$, where $Y^{disc}$ is
  the discretized (to either 2, 5 or 10 bins) version of the outcome $Y$. The first
  three columns are calculated from the DGPs calibrated to Baranov et al. with
  binary $M$ (i.e., DGPs considered in Table 1 Panels B-D and Appendix Table 1)
  and the last three columns are calculated from the DGPs calibrated to Baranov
  et al. with non-binary $M$ (i.e., DGPs considered in Table 2 Panels B-D and
  Appendix Table 2).}
\end{minipage}
\end{table}

\clearpage

\section{Additional Empirical Results\label{sec:empirical appendix}}

\paragraph{Alternative sample for \citet{bursztyn_misperceived_2020}.} In our application to \citet{bursztyn_misperceived_2020} in the main text, we restrict attention to the $75$
\unskip percent of men who under-estimate other men's openness at baseline, which increases the plausibility of the monotonicity assumption. We now present analogous results using the full sample, which are similar. \Cref{fig:bursztyn_bin_appendix} is analogous to \Cref{fig:bursztyn_bin} but using the full sample, with similar qualitative patterns. The estimated lower bound on the fraction of never-takers affected, imposing monotonicity, is $8$
\unskip percent, and bounds for the average effect for never-takers are $0.08$
\unskip to $0.13$
\unskip. The lower bound on the fraction affected remains non-zero allowing for up to $5$
\unskip percent of the population to be defiers. 

\begin{figure}[!ht]
  \includegraphics[width =
  0.5\linewidth]{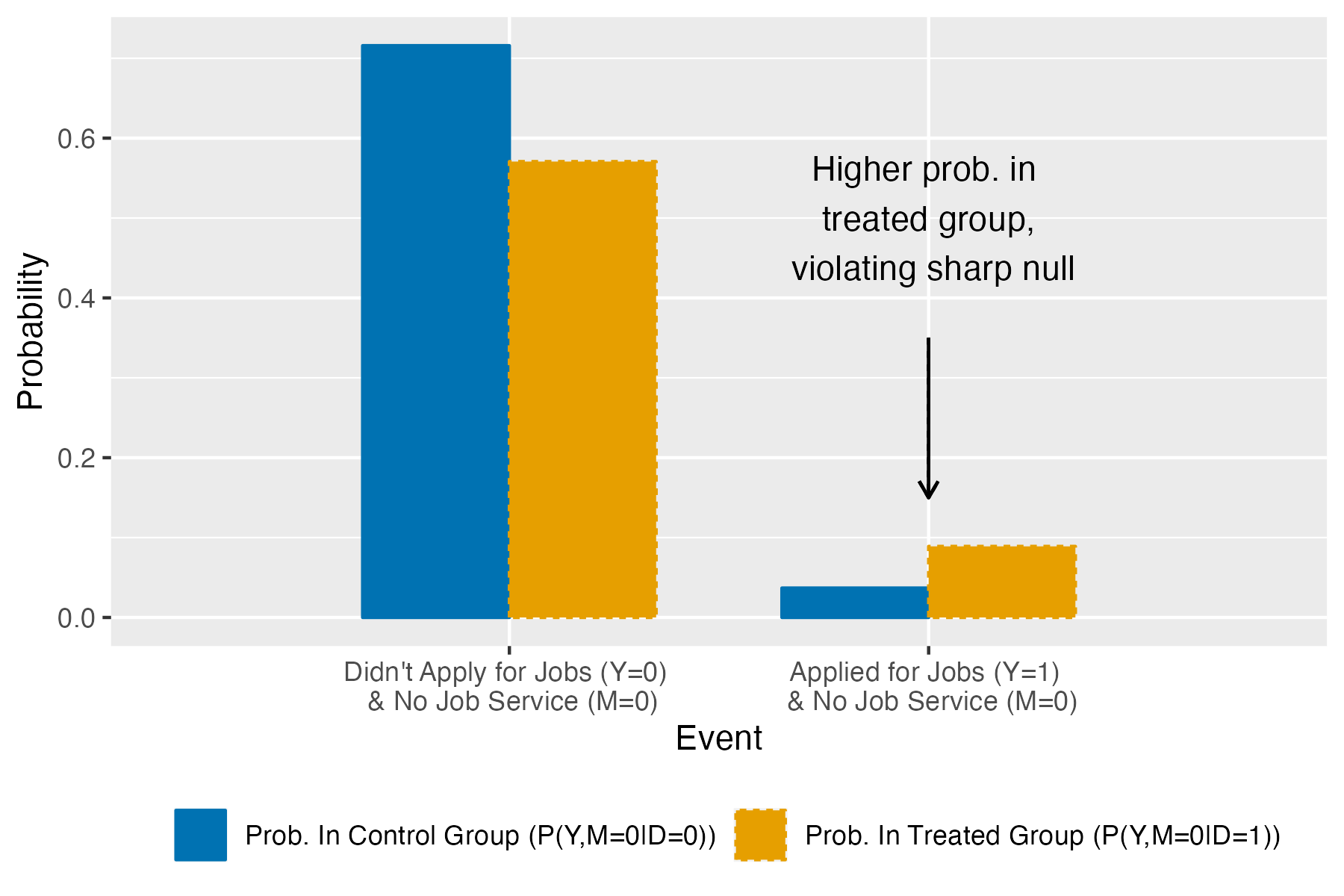}
    \caption{Illustration of Testable Implications in \citet{bursztyn_misperceived_2020} Using Full Sample}\label[appendixfigure]{fig:bursztyn_bin_appendix}
    \floatfoot{Note: This figure is analogous to \Cref{fig:bursztyn_bin} except it uses the full sample rather than restricting to men who initially underestimate others' beliefs.}
\end{figure}

\paragraph{Alternative tests.} In the main text, we report statistical tests of the sharp null using CS, using a discretization with 5 bins for the \citet{baranov_maternal_2020} application. \Cref{tbl:pvals-bins} presents results for the \citet{baranov_maternal_2020} application alternatively using either 2 bins or 10 bins, with qualitatively similar conclusions. \Cref{tbl:pvals} reports test results using the tests of ARP and FSST instead of CS (using 5 bins for the \citet{baranov_maternal_2020} application).\footnote{Recall that the reported $p$-value is the smallest value of $\alpha$ for which the test rejects. Since ARP uses a two-stage procedure, it is difficult to analytically compute the $p$-value. We therefore compute the test for $\alpha$ values on a grid with interval-length $0.01$ between $0.01$ and $0.1$ and interval-length $0.1$ between $0.15$ and $0.95$, and report the smallest grid point at which the test rejects.} The qualitative pattern across the tests is similar. One notable difference is that we do not reject the null for the relationship-quality mechanism in \citet{baranov_maternal_2020} using ARP, although this is perhaps unsurprising given the low power of ARP in simulations calibrated to this mechanism. 

\begin{table}[!ht]
    \centering
    
\begin{tabular}{lccc}
\toprule
\multicolumn{1}{c}{ } & \multicolumn{3}{c}{Number of bins} \\
\cmidrule(l{3pt}r{3pt}){2-4}
Mediator & 2 & 5 & 10\\
\midrule
Grandmother & 0.003 & 0.023 & 0.065\\
Relationship & 0.005 & 0.028 & 0.001\\
Grandmother + Relationship & 0.198 & 0.654 & 0.999\\
\bottomrule
\end{tabular}

    \caption{$p$-values for tests for the sharp null in \citet{baranov_maternal_2020} using alternative bin choices}
    \label[appendixtable]{tbl:pvals-bins}
\end{table}

\begin{table}[!htb]
    
\begin{tabular}{@{\extracolsep{5pt}} cccccc} 
\\[-1.8ex]\hline 
\hline \\[-1.8ex] 
Application & M & CS & ARP & FSSTdd & FSSTndd \\ 
\hline \\[-1.8ex] 
Bursztyn et al (main sample) & Job-search Sign-up & $0.020$ & $0.030$ & $0.018$ & $0.020$ \\ 
Bursztyn et al (full sample) & Job-search Sign-up & $0.019$ & $0.020$ & $0.021$ & $0.022$ \\ 
Baranov et al & Grandmother & $0.023$ & $0.030$ & $0.026$ & $0.047$ \\ 
Baranov et al & Relationship & $0.028$ & $0.650$ & $0.037$ & $0.049$ \\ 
Baranov et al & Grandmother + Relationship & $0.654$ & $0.550$ & $0.115$ & $0.256$ \\ 
\hline \\[-1.8ex] 
\end{tabular} 

    \caption{$p$-values for tests for the sharp null using alternative procedures}
    \label[appendixtable]{tbl:pvals}
\end{table}

\end{document}